\documentclass{article}
\usepackage{graphicx,epsfig,color}
\hyphenation{super-terse mea-sure semi-recursive non-recursive
             non-superterse}
\newcounter{savenumi}

\newtheorem{theoremfoo}{Theorem}
\newenvironment{theorem}{\pagebreak[1]\begin{theoremfoo}}{\end{theoremfoo}}

\newtheorem{propositionfoo}[theoremfoo]{Proposition}

\newtheorem{lemmafoo}[theoremfoo]{Lemma}
\newenvironment{lemma}{\pagebreak[1]\begin{lemmafoo}}{\end{lemmafoo}}
\newtheorem{conjecturefoo}[theoremfoo]{Conjecture}

\newtheorem{corollaryfoo}[theoremfoo]{Corollary}
\newenvironment{corollary}{\pagebreak[1]\begin{corollaryfoo}}{\end{corollaryfoo}}
\newtheorem{exercisefoo}{Exercise}

\newtheorem{openfoo}[theoremfoo]{Question}

\newtheorem{nttn}[theoremfoo]{Notation}

\newtheorem{dfntn}[theoremfoo]{Definition}
\newenvironment{definition}{\pagebreak[1]\begin{dfntn}\rm}{\end{dfntn}}

\newenvironment{proof}
    {\pagebreak[1]{\narrower\noindent {\bf Proof:\quad\nopagebreak}}}{\QED}




\newcommand{\floor}[1]{\left\lfloor#1\right\rfloor}
\newcommand{\ceiling}[1]{\left\lceil#1\right\rceil}
\def\nre.{$n$\/-r.e.}
\newcommand{\scrod}{\quad\nopagebreak}

\tolerance=2000

\newtheorem{factfoo}[theoremfoo]{Fact}

\newcommand{\squeeze}{
\textwidth 6in
\textheight 8.8in
\oddsidemargin 0.2in
\topmargin -0.4in
}

\newtheorem{propertyfoo}[theoremfoo]{Property}

\makeatletter

\def\@makechapterhead#1{ \vspace*{50pt} { \parindent 0pt \raggedright 
 \ifnum \c@secnumdepth >\m@ne \huge\bf \@chapapp{} \thechapter. \par 
 \vskip 20pt \fi \Huge \bf #1\par 
 \nobreak \vskip 40pt } }

\def\@sect#1#2#3#4#5#6[#7]#8{\ifnum #2>\c@secnumdepth
     \def\@svsec{}\else 
     \refstepcounter{#1}\edef\@svsec{\csname the#1\endcsname.\hskip 1em }\fi
     \@tempskipa #5\relax
      \ifdim \@tempskipa>\z@ 
        \begingroup #6\relax
          \@hangfrom{\hskip #3\relax\@svsec}{\interlinepenalty \@M #8\par}
        \endgroup
       \csname #1mark\endcsname{#7}\addcontentsline
         {toc}{#1}{\ifnum #2>\c@secnumdepth \else
                      \protect\numberline{\csname the#1\endcsname}\fi
                    #7}\else
        \def\@svsechd{#6\hskip #3\@svsec #8\csname #1mark\endcsname
                      {#7}\addcontentsline
                           {toc}{#1}{\ifnum #2>\c@secnumdepth \else
                             \protect\numberline{\csname the#1\endcsname}\fi
                       #7}}\fi
     \@xsect{#5}}

\def\@begintheorem#1#2{\it \trivlist \item[\hskip \labelsep{\bf #1\ #2.}]}

\def\@opargbegintheorem#1#2#3{\it \trivlist
      \item[\hskip \labelsep{\bf #1\ #2\ (#3).}]}

\makeatother



\newif\ifshortconferences
\shortconferencesfalse
\newif\ifmediumconferences
\mediumconferencesfalse

\def\ending#1{{\count1=#1\relax
\count2=\count1
\divide\count2 by 100
\multiply\count2 by 100
\advance\count1 by -\count2
\ifnum\count1=11
th%
\else \ifnum\count1=12
th%
\else \ifnum\count1=13
th%
\else 
\count2=\count1
\divide\count1 by 10
\multiply\count1 by 10
\advance\count2 by -\count1
\ifnum\count2=1
st%
\else \ifnum\count2=2
nd%
\else \ifnum\count2=3
rd%
\else th%
\fi\fi\fi\fi\fi\fi
}}

\def\Proceedingsofthe{\ifshortconferences Proc.\else\ifmediumconferences Proc.\else Proceedings of the\fi\fi}

\newcounter{confnum}

\def\conf#1#2{%
\setcounter{confnum}{#2}%
\addtocounter{confnum}{-\csname #1zero\endcsname}%
\ifnum\value{confnum}=1%
\expandafter\ifx\csname #1One\endcsname\relax%
\Proceedingsofthe\ \arabic{confnum}\ending{\value{confnum}}\ \csname #1name\endcsname%
\else \csname #1One\endcsname\fi%
\else%
\Proceedingsofthe\
\arabic{confnum}\ending{\value{confnum}}\ \csname #1name\endcsname\fi}

\def\qsym{\vrule width0.7ex height0.9em depth0ex}

\newif\ifqed\qedtrue

\def\noqed{\global\qedfalse}

\def\qed{\ifqed{\penalty1000\unskip\nobreak\hfil\penalty50
\hskip2em\hbox{}\nobreak\hfil\qsym
\parfillskip=0pt \finalhyphendemerits=0\par\medskip}\fi\global\qedtrue}

\makeatletter
\def\eqnqed{\noqed
	\def\@tempa{equation}
	\ifx\@tempa\@currenvir\def\@eqnnum{\qsym}%
	\addtocounter{equation}{-1}\else%
    \def\@@eqncr{\let\@tempa\relax
    \ifcase\@eqcnt \def\@tempa{& & &}\or \def\@tempa{& &}%
      \else \def\@tempa{&}\fi
     \@tempa {\def\@eqnnum{{\qsym}}\@eqnnum}
     \global\@eqnswtrue\global\@eqcnt\z@\cr}\fi}


\def\eqnlabel#1#2{\if@filesw {\let\thepage\relax%
   \def\protect{\noexpand\noexpand\noexpand}%
   \edef\@tempa{\write\@auxout{\string
      \newlabel{#2}{{{#1}}{\thepage}}}}%
   \expandafter}\@tempa%
   \if@nobreak \ifvmode\nobreak\fi\fi\fi%
	\def\@tempa{equation}
	\ifx\@tempa\@currenvir\def\theequation{{#1}}%
	\addtocounter{equation}{-1}\else%
    \def\@@eqncr{\let\@tempa\relax
    \ifcase\@eqcnt \def\@tempa{& & &}\or \def\@tempa{& &}%
      \else \def\@tempa{&}\fi
     \@tempa {\def\@eqnnum{{#1}}\@eqnnum}
     \global\@eqnswtrue\global\@eqcnt\z@\cr}\fi}

\makeatother



\def\QED{\qed}

\makeatother




\squeeze

\usepackage{enumerate}




\newcommand{\Rank}{{\rm Rank}}

\newcommand{\minRank}{{\rm minRank}}

\newcommand{\maxRank}{{\rm maxRank}}

\newcommand{\prob}{{\rm Pr}}

\newcommand{\Sopt}{{\rm Sopt}}

\newcommand{\mod}{{\rm mod}}

\begin{document}

\date{}

\title{A Dense  Hierarchy of Sublinear Time
Approximation Schemes for Bin Packing}

\author{
Richard Beigel$^1$ and Bin Fu$^2$\\\\
$^1$CIS Department\\
 Temple University,
  Philadelphia PA 19122-6094,
USA\\
beigel@cis.temple.edu\\
\\
$^2$Department of Computer Science\\
 University of Texas-Pan American,
 Edinburg, TX  78539, USA\\
binfu@cs.panam.edu\\\\
} \maketitle

\begin{abstract} The bin packing  problem is to find the minimum
number of bins of size one to pack a list of items with sizes
$a_1,\ldots , a_n$ in $(0,1]$. Using uniform sampling, which selects
a random element from the input list each time, we develop a
randomized $O({n(\log n)(\log\log n)\over \sum_{i=1}^n a_i}+({1\over
\epsilon})^{O({1\over\epsilon})})$ time $(1+\epsilon)$-approximation
scheme for the bin packing problem. We show that every randomized
algorithm with uniform random sampling needs $\Omega({n\over
\sum_{i=1}^n a_i})$ time to give an $(1+\epsilon)$-approximation.
For each function $s(n): N\rightarrow N$, define $\sum(s(n))$ to be
the set of all bin packing problems with the sum of item sizes equal
to $s(n)$. For a constant $b\in (0,1)$, every problem in
$\sum(n^{b})$ has an $O(n^{1-b}(\log n)(\log\log n)+({1\over
\epsilon})^{O({1\over\epsilon})})$ time $(1+\epsilon)$-approximation
for an arbitrary constant $\epsilon$. On the other hand, there is no
$o(n^{1-b})$ time $(1+\epsilon)$-approximation scheme for the bin
packing problems in $\sum(n^{b})$ for some constant $\epsilon>0$. We
show that $\sum(n^{b})$ is NP-hard for every $b\in (0,1]$.
 This implies a dense sublinear
time hierarchy of approximation schemes for a class of NP-hard
problems, which are derived from the bin packing problem.
%
We also show  a randomized streaming approximation scheme for the
bin packing problem such that
  it needs only constant updating time and constant space,
  and outputs an $(1+\epsilon)$-approximation in $({1\over
\epsilon})^{O({1\over\epsilon})}$ time.
Let $S(\delta)$-bin packing be the class of bin packing problems
with each input item of size at least $\delta$. This research also
gives a natural example of NP-hard problem ($S(\delta)$-bin packing)
that has a constant time approximation scheme, and a constant time
and space sliding window streaming approximation scheme, where
$\delta$ is a positive constant.

\end{abstract}

\newpage

\section{Introduction}
 The bin packing  problem
is to find the minimum number of bins of size one to pack a list of
items  with sizes $a_1,\ldots , a_n$ in $(0,1]$. It is a classical
NP-hard problem and has been widely studied. The bin packing problem
has many applications in the engineering and information sciences.
Some approximation algorithm has been developed for bin packing
problem: for examples,
 the first fit, best fit, sum-of-squares, or
Gilmore-Gomory
cuts~\cite{ApplegateBuriolDillardJohnsonShor03,CsirikJohnsonKenyonOrlinShoreWeber00,CsirikJohnsonKenyonOrlinShoreWeber99,GilmoreJohnson61,GilmoreGomory63}.
The first linear time approximation scheme is shown
in~\cite{FernadezLueker81}. Recently, a sublinear time
${O}(\sqrt{n})$ with weighted sampling and a sublinear time
${O}(n^{1/3})$ with a combination of weighted and uniform samplings
were shown for bin packing problem~\cite{BatuBerenbrinkSohler09}.

We study the bin packing problem in randomized offline sublinear
time model, randomized streaming model, and randomized sliding
window streaming model. We also study the bin packing problem that
has input item sizes to be random numbers in $[0,1]$. Sublinear time
algorithms have been found for many computational problems, such as
checking polygon intersections
~\cite{ChazelleLiuMagen05},
estimating the cost of a minimum spanning
tree~\cite{ChazelleRubinfeldTrevisan05,CzumajErgun05,CzumajSohler04},
finding geometric separators~\cite{FuChen06},
and property testing~\cite{GoldreichGoldwasserRon96,GoldreichRon00},
etc.
Early research on streaming algorithms dealt with
simple statistics of the input data streams, such as the
median~\cite{MunroPaterson80}, the number of distinct
elements~\cite{FlajoletMartin85}, or frequency
moments~\cite{AlonMatiasSzegedy96}. Streaming algorithm is becoming
more and more important due to the development of internet, which
brings a lot of applications. There are many streaming algorithms
that have been proposed from the areas of computational theory,
database, and networking, etc.

 Due to the important role of
bin packing problem in the development of algorithm design and its
application in many other fields, it is essential to study the bin
packing problem in these natural
models. Our offline approximation scheme is based on the uniform
sampling, which selects a random element from the input list each
time. Our first approach is to approximate the bin packing problem
with a small number of samples under uniform sampling. We identify
that the complexity of approximation for the bin packing problem
inversely depends on the sum of the sizes of input items.

Using uniform sampling, we develop a randomized $O({n(\log
n)(\log\log n)\over \sum_{i=1}^n a_i}+({1\over
\epsilon})^{O({1\over\epsilon})})$ time $(1+\epsilon)$-approximation
scheme for the bin packing problem. We show that every randomized
algorithm with uniform random sampling needs $\Omega({n\over
\sum_{i=1}^n a_i})$ time to give an $(1+\epsilon)$-approximation.
Based on an adaptive random sampling method developed in this paper,
our algorithm automatically detects an approximation to the weights
of summation of the input items in time $O({n(\log n)(\log\log
n)\over \sum_{i=1}^n a_i})$ time, and then yields an
$(1+\epsilon)$-approximation.

For each function $s(n): N\rightarrow N$, define $\sum(s(n))$ to be
the set of all bin packing problems with the sum of item sizes equal
to $s(n)$. For a constant $b\in (0,1)$, every problem in
$\sum(n^{b})$ has an $O(n^{1-b}(\log n)(\log\log n)+({1\over
\epsilon})^{O({1\over\epsilon})})$ time $(1+\epsilon)$-approximation
for an arbitrary constant $\epsilon$. On the other hand, there is no
$o(n^{1-b})$ time $(1+\epsilon)$-approximation scheme for the bin
packing problems in $\sum(n^{b})$ for some constant $\epsilon>0$. We
show that $\sum(n^{b})$ is NP-hard for every $b\in (0,1]$.
 This implies a dense sublinear
time hierarchy of approximation schemes for a class of NP-hard
problems that are derived from bin packing problem.
%
We also show  a randomized single pass streaming approximation
scheme for the bin packing problem such that
  it needs only constant updating time and constant space,
  and outputs an $(1+\epsilon)$-approximation in $({1\over
\epsilon})^{O({1\over\epsilon})}$ time.
This research also gives an natural example of NP-hard problem that
has a constant time approximation scheme, and a constant time and
space sliding window  single pass  streaming approximation scheme.

 The
streaming algorithms in this paper for bin packing problem only
approximate the minimum number of bins to pack those input items. It
also gives a packing plan that allows an item position to be changed
at different moment. This has no contradiction with the existing
lower bound~\cite{Brown79,Liang80} that no approximation scheme
exists for online algorithm that does not change bins of  already
packed items.

A more general model of bin packing is studied in this paper. Given
a list of items in $(0,1]$, allocate  them  to several kinds of bins
with variant sizes and weights.
 We want to minimize the total costs $\sum_{i=1}^k u_iw_i$,
where $u_i$ is the number of bins of size $s_i$ and cost $w_i$.

In section~\ref{model-sec}, we give a description of computational
models used in this paper. A brief description of our methods are
also presented. In section~\ref{adaptive-sampling-sec}, we show an
adaptive random sampling method for the bin packing problem. In
section~\ref{rand-alg-sec}, we present randomized algorithms and
their lower bound for offline bin packing problem. In
section~\ref{streaming-sec}, we show a streaming approximation
scheme for bin packing problem. In
section~\ref{sliding-windows-sect}, we show a sliding window
streaming approximation scheme for bin packing problem with each
input item of size at least a positive constant $\delta$. The main
result of this paper is stated in
Theorem~\ref{random-approx-theorem}.


\section{Models of Computation and Overview of
Methods}\label{model-sec}

Algorithms for bin packing problem in this paper are under four
models, which are deterministic, randomized, streaming, and sliding
windows streaming models.

\begin{definition}\label{basic-bin-packing-def}\scrod
\begin{itemize}
\item
A {\it bin packing} is an allocation of the input items of sizes
$a_1,\ldots , a_n$ in $(0,1]$ to bins of size $1$.  We want to
minimize the total number of bins. We often use $Opt(L)$ to denote
the least number bins for packing items in $L$.
\item
Assume that $c$ and $\eta$ are constants in $(0,1)$, and $k$ is a
constant integer. There are $k$ kinds of bins of different sizes. If
$c\le s_i\le 1$, and $\eta\le w_i\le 1$ for all $i=1,2,\ldots , k$,
then we call the $k$ kinds of bins to be {\it $(c,\eta,k)$-related},
where $w_i$ and $s_i$ are the cost and size of the $i$-th kind of
bin, respectively.
\item
A {\it bin packing with $(c,\eta,k)$-related bins} is to allocate
the input items $a_1,\ldots , a_n$ in $(0,1]$ to
$(c,\eta,k)$-related bins. We want to minimize the total costs
$\sum_{i=1}^k u_iw_i$, where $u_i$ is the number of bins of cost
$w_i$. We often use $Opt_{c,\eta,k}(L)$ to denote the least cost for
packing items in $L$ with $(c,\eta,k)$-related bins. It is easy to
see $Opt(L)=Opt_{1,1,1}(L)$.
\item
For a positive constant $\delta$, a $S(\delta)$-bin packing  problem
is the bin packing  problem with all input items at least $\delta$.
\item
For a nondecreasing function $f(n):N\rightarrow N$, a
$\sum(f(n))$-bin packing  problem is the bin packing  problem with
all input items $a_1,\ldots ,a_n$ satisfying $\sum_{i=1}^n
a_i=f(n)$.
\end{itemize}
\end{definition}

{\bf Deterministic Model:}
 The bin packing problem under the deterministic model has been well
 studied. We give a generalized version of bin packing problem that allows multiple sizes of
 bins to pack them. It is called as bin packing with $(c,\eta,k)$ related bins in Definition~\ref{basic-bin-packing-def}. It is presented in
 Section~\ref{deterministic-section}.

{\bf Randomized Models:} Our main model of computation is based on
the uniform random sampling. We give the definitions for both
uniform and weighted random samplings below.

\begin{definition}
Assume that $a_1,\ldots , a_n$ is an input list of items in $(0,1]$
for a bin packing problem.
\begin{itemize}
\item
A {\it uniform sampling}~selects an element $a$ from the input list
with $\prob[a=a_i]={1\over n}$ for $i=1,\ldots , n$.
\item
A {\it weighted sampling}~selects an element $a$ from the input list
with $\prob[a=a_i]={a_i\over \sum_{i=1}^n a_i}$ for $i=1,\ldots ,
n$.
\end{itemize}
\end{definition}

We feel that the uniform sampling is more practical to implement
than weighted sampling. In this paper, our offline randomized
algorithms are based on uniform sampling. The weighted sampling was
used in~\cite{BatuBerenbrinkSohler09}.
The description of our offline algorithm with uniform random
sampling is given in Section~\ref{rand-alg-sec}.

{\bf Streaming Computation:} A data stream is an ordered sequence of
data items $p_1,p_2,\ldots , p_n$. Here, $n$ denotes the number of
data points in the stream. A {\it streaming algorithm}
  is an algorithm that computes some function over a data stream and
has the following properties: 1. The input data are accessed in the
sequential order of the data stream. 2. The order of the data items
in the stream is not controlled by the algorithm.
 Our algorithm for this model is
presented in Section~\ref{streaming-sec}.

{\bf Sliding Window Model:} In the sliding window streaming model,
there is a window size $n$ for the most recent $n$ items. The bin
packing problem for the sliding window streaming algorithm is to
pack the most recent $n$ items. Our algorithm for this model is
presented in Section~\ref{sliding-windows-sect}.

{\bf Bin Packing with Random Inputs:} We study the bin packing
problem such that the input is a series of
 sizes that are random numbers in $[0,1]$. It has a constant time
 approximation scheme and will be presented in
 Section~\ref{random-input-sec}.

\subsection{Overview of Our Method}\label{idea-overview-sec}

We develop algorithms for the bin packing problem under offline
uniform random sampling model, the streaming computation model, and
sliding window streaming model (only for $S(\delta)$-bin packing
with a positive constant $\delta$). The brief ideas are given below.

\subsubsection{Sublinear Time Algorithm for Offline Bin Packing}

Since the sum of input item sizes is not a part of input, it needs
$O(n)$ time to compute its exact value, and it's unlikely to be
approximated via one round random sampling in a sublinear time. We
first approximate the sum of sizes of items through a multi-phase
adaptive random sampling. Select a constant $\varphi$ to be the
threshold for large items. Select a small constant
$\gamma=O(\epsilon)$. All the items from the input are partitioned
into intervals $[\pi_1,\pi_0],(\pi_2,\pi_1]\ldots , (\pi_{i+1},
\pi_i],\ldots $ such that $\pi_0=1, \pi_1=\varphi$, and
$\pi_{i+1}=\pi_i/(1+\gamma)$ for $i=2,\ldots $. We approximate the
number of items in each interval $(\pi_{i+1}, \pi_i]$ via uniform
random sampling. Those intervals with very a small number of items
will be dropped. This does not affect much of the ratio of
approximation. One of worst cases is that all small items are of
size ${1\over n^2}$ and all large size items are of size $1$. In
this case, we need to sample $\Omega({n\over \sum_{a_i=1} 1})$
number of items to approximate the number of $1$s. This makes the
total time to be $\Omega({n\over \sum_{i=1}^n a_i})$. Packing the
items of large size is adapted the method
in~\cite{FernadezLueker81}, which uses a linear programming method
to pack the set of all large items, and fills small items into those
bins with large items to waste only a small piece of space for each
bin. Then the small items are put into bins that still have space
left after packing large items. When the sum of all item sizes is
$O(1)$, we need $O(n)$ time. Thus, the $O(n)$ time algorithm is a
part of our algorithm for the case $\sum_{i=1}^n a_i=O(1)$.


\subsubsection{Streaming Algorithm for Bin
Packing}\label{probability-subsec}

We apply the above approximation scheme to construct a  single pass
streaming algorithm for bin packing  problem. A crucial step is to
sample some random elements among those input items of size at least
$\delta$, which is set according to $\epsilon$. The weights of small
items are added to a variable $s_1$. After packing large items of
size at least $\delta$, we pack small items into those bins so that
each bin does not waste more than $\delta$ space while there is
small items unpacked.

\subsubsection{Sliding Window Streaming Algorithm for $S(\delta)$-Bin Packing}

Our sliding window  single pass streaming algorithm deals with the
bin packing problem that all input items are of size at least a
constant $\delta$. Let $n$ be the size of sliding window instead of
the total number of input items. Select a sufficiently large
constant $k$. There are $k$ sessions to approximate the bin packing.
After receiving every ${n\over k}$ items, a new session is started
to approximate the bin packing. The approximation ratio is
guaranteed via ignoring at most ${n\over k}$ items. As each item is
of large size at least $\delta$, ignoring ${n\over k}$ items only
affect a small ratio of approximation.


\subsubsection{Chernoff Bounds} The analysis of our randomized
algorithm often use  the well known Chernoff bounds, which are
described below. All proofs of this paper are self-contained except
the following famous theorems in probability theory and the
existence of a polynomial time algorithm for linear programming.

\begin{theorem}[\cite{MotwaniRaghavan00}]\label{chernoff-theorem}
Let $X_1,\ldots , X_n$ be $n$ independent random $0$-$1$ variables,
where $X_i$ takes $1$ with probability $p_i$. Let $X=\sum_{i=1}^n
X_i$, and $\mu=E[X]$. Then for any $\delta>0$,
\begin{enumerate}
\item $\Pr(X<(1-\delta)\mu)<e^{-{1\over 2}\mu\delta^2}$, and
\item
$\Pr(X>(1+\delta)\mu)<\left[{e^{\delta}\over
(1+\delta)^{(1+\delta)}}\right]^{\mu}$.
\end{enumerate}
\end{theorem}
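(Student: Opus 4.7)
The plan is to prove both bounds by the classical exponential moment (Chernoff) method: for a parameter $t>0$ to be chosen later, apply Markov's inequality to the nonnegative random variable $e^{tX}$ (or $e^{-tX}$ for the lower tail), exploit independence to factor $E[e^{\pm t X}]=\prod_i E[e^{\pm t X_i}]$, bound each factor by $e^{p_i(e^{\pm t}-1)}$ using the inequality $1+x\le e^x$, and then optimize $t$.

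For the upper tail, I would first write, for any $t>0$,
$$
\Pr(X>(1+\delta)\mu)=\Pr\bigl(e^{tX}>e^{t(1+\delta)\mu}\bigr)\le \frac{E[e^{tX}]}{e^{t(1+\delta)\mu}}.
$$
Since the $X_i$ are independent $0$-$1$ variables with $E[X_i]=p_i$, one has $E[e^{tX_i}]=1+p_i(e^t-1)\le e^{p_i(e^t-1)}$, so $E[e^{tX}]\le e^{\mu(e^t-1)}$. Substituting yields the upper bound $e^{\mu(e^t-1)-t(1+\delta)\mu}$, which is minimized at $t=\ln(1+\delta)>0$; plugging back in gives exactly $\bigl[e^{\delta}/(1+\delta)^{(1+\delta)}\bigr]^\mu$, which is part (ii).

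For the lower tail, I would run the symmetric argument with $t>0$ applied to $e^{-tX}$:
$$
\Pr(X<(1-\delta)\mu)=\Pr\bigl(e^{-tX}>e^{-t(1-\delta)\mu}\bigr)\le e^{\mu(e^{-t}-1)+t(1-\delta)\mu},
$$
and optimize by choosing $t=-\ln(1-\delta)$ (requiring $\delta<1$, with the bound being vacuous otherwise). This produces the intermediate form $\bigl[e^{-\delta}/(1-\delta)^{(1-\delta)}\bigr]^\mu$. To reach the clean expression $e^{-\mu\delta^2/2}$ stated in (i), I would take logs and show $f(\delta):=-\delta-(1-\delta)\ln(1-\delta)\le -\delta^2/2$ for all $\delta\in(0,1)$; this follows by Taylor expansion, writing $(1-\delta)\ln(1-\delta)=-\delta+\sum_{k\ge 2}\delta^k/(k(k-1))$ and noting the leading cancellation leaves a series dominated termwise by $\delta^2/2$.

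The main obstacle is not the moment generating function manipulation, which is routine, but the final analytic step for part (i): the closed-form expression that emerges from optimizing $t$ is $\bigl[e^{-\delta}/(1-\delta)^{(1-\delta)}\bigr]^\mu$, and turning this into the clean Gaussian-style bound $e^{-\mu\delta^2/2}$ requires a careful elementary inequality on $(1-\delta)\ln(1-\delta)$. I would verify this inequality by comparing derivatives or by a direct series argument as above, which is the only nontrivial analytic content of the proof.
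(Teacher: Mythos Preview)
Your proposal is correct and is exactly the standard exponential-moment proof of the Chernoff bounds. The paper itself does not prove Theorem~\ref{chernoff-theorem} directly (it is cited to \cite{MotwaniRaghavan00}), but in its appendix it proves the variant Theorems~\ref{ourchernoff2-theorem} and~\ref{chernoff3-theorem} by precisely the same route you outline: Markov on $e^{\pm tX}$, factor by independence, bound $1+p_i(e^{\pm t}-1)\le e^{p_i(e^{\pm t}-1)}$, optimize $t$ at $\ln(1+\delta)$ or $\ln\frac{1}{1-\delta}$, and for the lower tail invoke the elementary inequality $(1-\delta)^{1-\delta}>e^{-\delta+\delta^2/2}$, which is equivalent to your series argument.
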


We follow the proof of Theorem~\ref{chernoff-theorem} to make the
following versions (Theorem~\ref{ourchernoff2-theorem},
Theorem~\ref{chernoff3-theorem}, and
Corollary~\ref{chernoff-lemma-a}) of Chernoff bound for our
algorithm analysis.

\begin{theorem}\label{chernoff3-theorem}
Let $X_1,\ldots , X_n$ be $n$ independent random $0$-$1$ variables,
where $X_i$ takes $1$ with probability at least $p$ for $i=1,\ldots
, n$. Let $X=\sum_{i=1}^n X_i$, and $\mu=E[X]$. Then for any
$\delta>0$,
 $\Pr(X<(1-\delta)pn)<e^{-{1\over 2}\delta^2 pn}$.
\end{theorem}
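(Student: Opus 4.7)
The plan is to reduce Theorem~\ref{chernoff3-theorem} to Theorem~\ref{chernoff-theorem} by a monotone coupling argument. The reason Theorem~\ref{chernoff-theorem} does not apply immediately is that $\mu = E[X] = \sum_i p_i$ may be strictly larger than $pn$, so the event $\{X < (1-\delta)pn\}$ corresponds to a deviation of size more than $\delta \mu$ from the mean, and writing it in the form $(1-\delta')\mu$ would require a $\delta'$ that depends on the unknown $p_i$'s. Instead I will produce a dominated random variable whose mean is exactly $pn$.

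First, I would introduce independent Bernoulli variables $Y_1,\ldots,Y_n$ with $\Pr[Y_i=1]=p$ exactly, and couple them to the $X_i$'s so that $Y_i \le X_i$ pointwise. A clean way is to take i.i.d.\ uniform $U_i \in [0,1]$ and define $X_i = \mathbf{1}[U_i \le p_i]$ and $Y_i = \mathbf{1}[U_i \le p]$; since $p \le p_i$, this gives $Y_i \le X_i$. Set $Y = \sum_{i=1}^n Y_i$, so that $Y \le X$ almost surely and consequently
\[
\Pr[X < (1-\delta)pn] \;\le\; \Pr[Y < (1-\delta)pn].
\]
Now $E[Y] = pn$, so Theorem~\ref{chernoff-theorem}(1) applied to $Y$ with its mean $\mu_Y = pn$ yields $\Pr[Y < (1-\delta)pn] < e^{-\frac{1}{2}\delta^2 pn}$, which is the desired bound.

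There is essentially no hard step here: the only subtlety is recognizing that one should not try to apply Theorem~\ref{chernoff-theorem} to $X$ directly (since the $p_i$'s are only bounded from below, not known), but should instead dominate $X$ from below by a sum whose mean is exactly $pn$. As an alternative route, one could redo the moment-generating-function proof: for $t > 0$, $E[e^{-tX_i}] = 1 - p_i(1-e^{-t}) \le 1 - p(1-e^{-t}) \le e^{-p(1-e^{-t})}$ since $1-e^{-t} > 0$, so $E[e^{-tX}] \le e^{-pn(1-e^{-t})}$, and Markov's inequality followed by the same optimization in $t$ used to prove Theorem~\ref{chernoff-theorem}(1) gives the bound. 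The coupling proof is shorter and is the one I would present.
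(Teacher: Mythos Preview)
Your proof is correct. Your primary route---the monotone coupling $Y_i \le X_i$ with $\Pr[Y_i=1]=p$ exactly, followed by an application of Theorem~\ref{chernoff-theorem}(1) to $Y$---is different from what the paper does. The paper instead carries out the moment-generating-function computation from scratch: it writes $\Pr[X<(1-\delta)pn]=\Pr[e^{-yX}>e^{-y(1-\delta)pn}]$, applies Markov's inequality, bounds $E[e^{-yX_i}] \le 1 - p(1-e^{-y}) \le e^{(e^{-y}-1)p}$ using $p_i\ge p$, multiplies across $i$, optimizes at $y=\ln\frac{1}{1-\delta}$, and then invokes $(1-\delta)^{1-\delta}>e^{-\delta+\delta^2/2}$. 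This is exactly the alternative you sketched in your final paragraph.

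The difference is one of modularity versus self-containment. Your coupling argument is shorter and cleanly reduces the statement to the already-established Theorem~\ref{chernoff-theorem}, making it clear that the only new content is ``$p_i\ge p$ can be replaced by $p_i=p$ via stochastic domination.'' The paper's direct MGF proof is independent of Theorem~\ref{chernoff-theorem} but essentially repeats its calculation; its one advantage is that it makes transparent precisely where the inequality $p_i\ge p$ enters (in the bound on $E[e^{-yX_i}]$), without needing to construct a coupling or change probability spaces.
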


\begin{theorem}\label{ourchernoff2-theorem}
Let $X_1,\ldots , X_n$ be $n$ independent random $0$-$1$ variables,
where $X_i$ takes $1$ with probability at most $p$ for $i=1,\ldots ,
n$. Let $X=\sum_{i=1}^n X_i$. Then for any $\delta>0$,
$\Pr(X>(1+\delta)pn)<\left[{e^{\delta}\over
(1+\delta)^{(1+\delta)}}\right]^{pn}$.
\end{theorem}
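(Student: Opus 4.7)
The plan is to mimic the standard moment-generating-function (MGF) proof of the upper-tail Chernoff bound, but to exploit the hypothesis $\Pr[X_i=1]\le p$ by monotonically bounding each factor of the MGF by its value at $p$. Concretely, I would start from Markov's inequality applied to the random variable $e^{tX}$ for a parameter $t>0$ to be chosen later: for any $t>0$,
\[
\Pr(X>(1+\delta)pn)=\Pr\bigl(e^{tX}>e^{t(1+\delta)pn}\bigr)\le\frac{E[e^{tX}]}{e^{t(1+\delta)pn}}.
\]

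Next I would expand the numerator using independence, writing $E[e^{tX}]=\prod_{i=1}^n E[e^{tX_i}]$. For each factor, a direct computation gives $E[e^{tX_i}]=1+p_i(e^t-1)$. Because $t>0$ implies $e^t-1>0$, the hypothesis $p_i\le p$ yields $E[e^{tX_i}]\le 1+p(e^t-1)$, and the standard inequality $1+x\le e^x$ gives $E[e^{tX_i}]\le e^{p(e^t-1)}$. Multiplying over $i$ produces
\[
E[e^{tX}]\le e^{np(e^t-1)},
\]
so that
\[
\Pr(X>(1+\delta)pn)\le\left(\frac{e^{e^t-1}}{e^{t(1+\delta)}}\right)^{pn}.
\]

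Finally I would optimize by choosing $t=\ln(1+\delta)$, which is the standard optimal choice for the upper tail. Substituting $e^t=1+\delta$ and $e^{t(1+\delta)}=(1+\delta)^{1+\delta}$ collapses the exponent inside the parentheses to $\delta$, yielding
\[
\Pr(X>(1+\delta)pn)\le\left[\frac{e^{\delta}}{(1+\delta)^{1+\delta}}\right]^{pn},
\]
which is the desired inequality (the strict inequality in the theorem statement follows from the strict inequality $1+x<e^x$ for $x\neq 0$, which applies whenever any $p_i\in(0,1)$ with $\delta>0$).

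There is no genuine obstacle here: the only place where the hypothesis differs from the classical Theorem~\ref{chernoff-theorem} is the use of $p_i\le p$ in place of a specific $p_i$, and this is handled by the monotonicity of $1+x(e^t-1)$ in $x$ for $t>0$. The proof is therefore a short adaptation of the standard argument, and the work is essentially bookkeeping once the right choice of $t$ is made.
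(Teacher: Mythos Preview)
Your proposal is correct and follows essentially the same argument as the paper: apply Markov's inequality to $e^{tX}$, factor by independence, bound each $E[e^{tX_i}]=1+p_i(e^t-1)\le 1+p(e^t-1)\le e^{p(e^t-1)}$ using $p_i\le p$ and $1+x\le e^x$, and then optimize at $t=\ln(1+\delta)$. The paper's proof is identical in structure and in every key step.
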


Define $g_1(\delta)=e^{-{1\over 2}\delta^2}$ and
$g_2(\delta)={e^{\delta}\over (1+\delta)^{(1+\delta)}}$. Define
$g(\delta)=\max(g_1(\delta),g_2(\delta))$. We note that
$g_1(\delta)$ and $g_2(\delta)$ are always strictly less than $1$
for all $\delta>0$. It is trivial for $g_1(\delta)$. For
$g_2(\delta)$, this can be verified by checking that the function
$f(x)=(1+x)\ln (1+x)-x$ is increasing and $f(0)=0$. This is because
$f'(x)=\ln (1+x)$ which is strictly greater than $0$ for all $x>0$.

\begin{corollary}[\cite{LiMaWang99}]\label{chernoff-lemma-a}
Let $X_1,\ldots , X_n$ be $n$ independent random $0$-$1$ variables
and $X=\sum_{i=1}^n X_i$.

 i. If $X_i$ takes $1$ with probability at most $p$ for $i=1,\ldots ,
n$, then for any ${1\over 3}>\epsilon>0$, $\Pr(X>pn+\epsilon
n)<e^{-{1\over 3}n\epsilon^2}$.

ii. If  $X_i$ takes $1$ with probability at least $p$ for
$i=1,\ldots , n$, then for any $\epsilon>0$, $\Pr(X<pn-\epsilon
n)<e^{-{1\over 2}n\epsilon^2}$.
\end{corollary}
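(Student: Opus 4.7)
The plan is to reduce both parts to the multiplicative Chernoff bounds of Theorem~\ref{chernoff3-theorem} and Theorem~\ref{ourchernoff2-theorem} via the substitution $\delta=\epsilon/p$, which converts the relative deviation $\delta$ used in those theorems into the additive deviation $\epsilon n$ claimed in the corollary.

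Part~(ii) is the easy direction. If $\epsilon>p$ then $pn-\epsilon n<0$ and the bound is vacuous, so assume $\epsilon\le p$. Setting $\delta=\epsilon/p\in(0,1]$ in Theorem~\ref{chernoff3-theorem} gives
$$\Pr(X<pn-\epsilon n)=\Pr(X<(1-\epsilon/p)pn)<e^{-(\epsilon/p)^2 pn/2}=e^{-n\epsilon^2/(2p)},$$
which is at most $e^{-n\epsilon^2/2}$ since $p\le 1$.

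For part~(i), the same substitution in Theorem~\ref{ourchernoff2-theorem} yields a bound of the form $\bigl[e^{\epsilon/p}/(1+\epsilon/p)^{1+\epsilon/p}\bigr]^{pn}$. Writing $u=\epsilon/p$ and $h(u)=(1+u)\ln(1+u)-u$, taking logarithms reduces the claim to verifying $h(u)\ge pu^2/3$ for all $p\in(0,1]$ and $\epsilon\in(0,1/3)$. I would split into two regimes. When $u\le 1$ (equivalently $p\ge\epsilon$), I would invoke the standard elementary inequality $h(u)\ge u^2/3$ on $[0,1]$ and combine it with $p\le 1$. When $u>1$, a short computation gives $(h(u)/u)'=(u-\ln(1+u))/u^2>0$, so $h(u)/u$ is strictly increasing on $(0,\infty)$; hence $h(u)/u\ge h(1)=2\ln 2-1>1/9$. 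Since $\epsilon<1/3$, we have $pu^2/3=\epsilon u/3<u/9<h(u)/u\cdot u$, closing the case.

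The main obstacle is the regime $u>1$ in part~(i): this is precisely where the hypothesis $\epsilon<1/3$ (absent from part~(ii)) is essential, and it is responsible for the exponent constant degrading from $1/2$ to $1/3$. The two elementary facts I rely on, $h(u)\ge u^2/3$ on $[0,1]$ (which follows by grouping terms in the Taylor series) and monotonicity of $h(u)/u$ on $(0,\infty)$, are both routine calculus exercises that I would dispatch with a short lemma or a citation.
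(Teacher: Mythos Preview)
Your argument is correct. Both you and the paper begin with the same substitution $\delta=\epsilon/p$ and both handle part~(ii) identically via Theorem~\ref{chernoff3-theorem} and the trivial bound $p\le 1$.

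For part~(i) the paper avoids your case split $u\le 1$ versus $u>1$. Its trick is to first pull the dependence on $p$ entirely into the exponent: writing $u=\epsilon/p$,
\[
\left[\frac{e^{u}}{(1+u)^{1+u}}\right]^{pn}
=\left[\frac{e}{(1+u)^{\,1+1/u}}\right]^{upn}
=\left[\frac{e}{(1+u)^{\,1+1/u}}\right]^{\epsilon n}.
\]
Since $y\mapsto (1+y)^{1+1/y}$ is increasing on $(0,\infty)$ (its logarithmic derivative is $(y-\ln(1+y))/y^2>0$, exactly the computation you did for $h(u)/u$) and $u\ge\epsilon$, one may replace $u$ by $\epsilon$ in the base. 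A single Taylor estimate $\ln(1+\epsilon)\ge\epsilon-\epsilon^2/2$ then gives $(1+1/\epsilon)\ln(1+\epsilon)\ge 1+\epsilon/2-\epsilon^2/2>1+\epsilon/3$ precisely when $\epsilon<1/3$, finishing the proof in one stroke. So the paper uses the \emph{same} monotonicity fact you proved for the $u>1$ regime, but applies it globally to reduce to the single value $u=\epsilon$, rather than splitting into two ranges with separate inequalities. Your route is slightly longer but equally valid, and it makes more transparent why the constant degrades from $1/2$ to $1/3$.
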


A well known fact in probability theory is the inequality
$$\Pr(E_1\cup E_2 \ldots \cup E_m)\le
\Pr(E_1)+\Pr(E_2)+\ldots+\Pr(E_m),$$ where $E_1,E_2,\ldots, E_m$ are
$m$ events that may not be independent. In the analysis of our
randomized algorithm, there are multiple events such that the
failure from any of them may fail the entire algorithm. We often
characterize the failure probability of each of those events, and
use the above inequality to show that the whole algorithm has a
small chance to fail after showing that each of them has a small
chance to fail.

\section{Adaptive Random Sampling for Bin Packing}\label{adaptive-sampling-sec}
In this section, we develop an adaptive random sampling method to
get the rough information for a list of items for the bin packing
problem. We show a randomized  algorithm to approximate the sum of
the sizes of input items in $O(({n\over \sum_{i=1}^n a_i})(\log
n)\log\log n))$ time. This is the core step of our randomized
algorithm, and is also or main technical contribution.

\begin{definition}\label{partition-def}\scrod
\begin{itemize}
\item
For each interval $I$ and a list of items $S$, define $C(I,S)$ to be
the number of items of $S$ in $I$.

\item
For $\varphi,\delta$, and $\gamma$ in $(0,1)$, a {\it
$(\varphi,\delta,\gamma)$-partition} for $(0,1]$ divides the
interval $(0,1]$ into intervals $I_1=[\pi_1, \pi_0], I_2=(\pi_2,
\pi_1], I_3=(\pi_3,\pi_2],\ldots ,I_k=(0, \pi_{k-1}]$
 such that $\pi_0=1, \pi_1=\varphi,
\pi_i=\pi_{i-1}(1-\delta)$ for $i=2,\ldots , k-1$, and $\pi_{k-1}$
is the first element $\pi_{k-1}\le {\gamma\over n^2}$.

\item
For a set $A$, $|A|$ is the number of elements in $A$. For a list
$S$ of items, $|S|$ is the number of items in $S$.
\end{itemize}
\end{definition}

\begin{lemma}
For parameters $\varphi,\delta$, and $\gamma$ in $(0,1)$, a
$(\varphi,\delta,\gamma)$-partition for $(0,1]$ has the number of
intervals $k\le {2\log n\over \gamma \theta}$.
\end{lemma}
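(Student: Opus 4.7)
The plan is to make the recurrence in Definition~\ref{partition-def} explicit and then read off $k$ directly. Since $\pi_0=1$, $\pi_1=\varphi$, and $\pi_i=(1-\delta)\pi_{i-1}$ for $i\ge 2$, a one-line induction yields the closed form $\pi_i=\varphi(1-\delta)^{i-1}$ for every $i\ge 1$. The index $k-1$ is, by definition, the first index at which $\pi_{k-1}\le \gamma/n^2$, so the entire lemma reduces to solving the scalar inequality $\varphi(1-\delta)^{k-2}\le \gamma/n^2$ for the least integer $k-2$ that makes it true.

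Next I would take logarithms to obtain
\[
(k-2)\,\ln\frac{1}{1-\delta}\;\ge\;\ln\frac{n^2\varphi}{\gamma},
\]
and then invoke the standard inequality $\ln\frac{1}{1-\delta}\ge \delta$ (equivalently $1-\delta\le e^{-\delta}$), valid for all $\delta\in(0,1)$. This inequality, together with the fact that the partition halts as soon as the threshold is crossed, yields
\[
k\;\le\;2+\frac{\ln(n^2\varphi/\gamma)}{\delta}\;\le\;2+\frac{2\ln n+\ln(\varphi/\gamma)}{\delta}.
\]
Using $\varphi\le 1$ and absorbing the additive constant and the $\ln(1/\gamma)$ term into the leading $\log n$ term for $n$ sufficiently large (which is the only regime of interest here), this collapses to the stated order $\tfrac{2\log n}{\gamma\,\theta}$, with the $\theta$ in the statement playing the role of $\delta$ from the partition recurrence.

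The argument is essentially a one-step geometric-decay calculation, so there is no conceptual obstacle; the only real work is bookkeeping. The delicate points will be (i) keeping track of the interpretation of the symbol $\theta$ in the denominator (which is evidently $\delta$, the contraction parameter of the partition), and (ii) checking that the constants $2$ in the numerator, the additive $+2$ term, and the absorbed $\ln(\varphi/\gamma)$ term are all consistent with the claimed bound in the regime $n\to\infty$ with $\gamma,\delta,\varphi$ fixed. Both are easily handled by choosing $n$ large enough that $\ln n$ dominates the absorbed terms, which is harmless for an asymptotic bound on the number of partition intervals.
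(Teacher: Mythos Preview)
Your proposal is correct and follows essentially the same approach as the paper: both compute the closed form $\pi_i=\varphi(1-\delta)^{i-1}$, take logarithms, and use $-\ln(1-\delta)\ge\delta$ to bound $k$ by (a constant plus) $\frac{2\ln n+\ln(1/\gamma)}{\delta}$, then absorb the lower-order terms into $\frac{2\log n}{\gamma\delta}$ for large $n$. You are also right that the $\theta$ in the statement is a typo for $\delta$; the paper's own proof ends with $k\le\frac{2\log n}{\gamma\delta}$.
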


\begin{proof}
The number of intervals $k$ is the least integer with
$\delta(1-\delta)^k\le (1-\delta)^k\le {\gamma\over n^2}$. We have
$k\le {\log {n^2\over \gamma}\over \log (1-\delta)}\le {2\log n\over
\gamma\delta}$.
\end{proof}

We need to approximate the number of large items, the total sum of
the sizes of items, and the total sum of the sizes of small items.
For a $(\varphi,\delta,\gamma)$-partition $I_1\cup I_2\ldots  \cup
I_k$ for $(0,1]$, Algorithm Approximate-Intervals$(.)$ below gives
the estimation for the number of items in each $I_j$ if interval
$I_j$ has a number items to be large enough. Otherwise, those items
in $I_j$ can be ignored without affecting much of the approximation
ratio. We have an adaptive way to do random samplings in a series of
phases. Phase $t+1$ doubles the number of random samples of phase
$t$ ($m_{t+1}=2m_t$). For each phase, if an interval $I_j$ shows
sufficient number of items from the random samples, the number of
items $C(I_j,S)$ in $I_j$ can be sufficiently approximated by
$\hat{C}(I_j,S)$. Thus, $\hat{C}(I_j,S)\pi_j$ also gives an
approximation for the sum of the sizes of items in $I_j$. The sum
$app_w=\sum_{I_j}\hat{C}(I_j,S)\pi_j$ for those intervals $I_j$ with
large number of samples gives an approximation for the total sum
$\sum_{i=1}^na_i$ of items in the input list. Let $m_t$ denote the
number of random samples in phase $t$. In the early stages, $app_w$
is much smaller than ${n\over m_t}$. Eventually, $app_w$ will
surpass ${n\over m_t}$. This happens when $m_t$ is more than
${n\over \sum_{i=1}^n a_i}$ and $app_w$ is close to the sum
$\sum_{i=1}^n a_i$ of all items from the input list. This indicates
that the number of random samples is sufficient for approximation
algorithm. For those intervals with small number of samples, their
items only need small fraction of bins to be packed. This process is
terminated when ignoring all those intervals with none or small
number of samples does not affect much of the accuracy of
approximation. The algorithm gives up the process of random sampling
when $m_t$ surpasses $n$, and switches to use a deterministic way to
access the input list, which happens when the total sum of the sizes
of input items is $O(1)$. The lengthy analysis is caused by the
multi-phases adaptive random samplings. We show two examples below.

{\bf Example 1:} The  input is a list of items such that there are
three items of size $1$, and the rest $n-3$ items are of size $0.1$
for a large integer $n$.  Assume that $\epsilon$ is a positive
constant to control the accuracy of approximation. After sampling a
constant ${100\over \epsilon}$ number of items, we observe all
samples equal to $0.1$ (with high probability). Thus, there are less
than ${\epsilon n\over 20}$ items of size other than $0.1$ with high
probability by Chernoff bounds. We derive the approximate sum of
total item sizes is $0.1n$, and output ${0.1(1+\epsilon)n\over 0.9}$
for the number bins for packing the input items, where the
denominator $0.9$ is based on the consideration that some bins for
packing items of size $0.1$ may waste up to $0.1$ space. Although,
there are small number of items of size $1$, just ignoring those
items of size $1$ loses only a small accuracy of approximation.
Therefore, the random sampling stops after sampling only $O({1\over
\epsilon})$ items. We output an $(1+\epsilon)$-approximation for the
bin packing problem.

{\bf Example 2:} The  input is a list of items such that there are
three items are of size $1$, and the rest $n-3$ items are of size
${1\over n^2}$ for a large integer $n$.  The number of random
samples is doubled from one phase to next phase. After sampling
$n^{0.9}$ items, in which there is no large items of size $1$ with
high probability, we still feel that those items of large size will
greatly affect the total number bins. We have to continue use more
random samples. Eventually, the number of random samples $m_t$ is
more than $n$. Thus, we switch to use a deterministic $O(n)$ time
algorithm to compute the number of large items, the total sum of the
sizes of items, and the total sum of the sizes of  small items.

\vskip 10pt

 {\bf Algorithm Approximate-Intervals$(\varphi, \delta, \gamma,\theta, \alpha, P, n, S)$}

Input: a parameter $\varphi\in (0,1)$, a small parameter $\theta\in
(0,1)$,  a failure probability upper bound $\alpha$, a
$(\varphi,\delta,\gamma)$ partition $P=I_1\cup\ldots \cup I_k$ for
$(0,1]$ with $\delta,\gamma\in (0,1)$, an integer $n$, a list $S$ of
$n$ items $a_1,\ldots , a_n$ in $(0,1]$. Parameters $\varphi,
\delta, \gamma,\theta,$ and $\alpha$ do not depend on the number of
items $n$.


Steps:

\begin{enumerate}[1.]

\item
Phase $0$:

\item\label{first-alpha-setting}
\qquad Let $z:=\xi_0\log \log n$, where $\xi_0$ is a parameter such
that $8(k+1)(\log n) g(\theta)^{z/2}<\alpha$ for all large $n$.

\item\label{constant-setting-in-Approximate-Intervals}
\qquad Let parameters $c_0:={1\over 100},  c_2:={1\over
3(1+\delta)c_0}, c_3:={\delta^4\over 2(1+\delta)}, c_4:={8\over
(1-\theta)(1-\delta)\varphi c_0}$, and $c_5:={12\xi_0\over
(1-\theta) c_2c_3}$.

\item
\qquad Let $m_0:=z$.

\item
End of Phase $0$.

\item
Phase $t$:


\item\label{loop-m-start}
\qquad Let $m_t:=2m_{t-1}$.

\item
\qquad Sample $m_t$ random items $a_{i_1},\ldots , a_{i_{m_t}}$ from
the input list $S$.

\item
\qquad Let $d_j:=|\{j: a_{i_j}\in I_j\ \ and \ 1\le j\le m_t\}|$ for
$j=1,2,\ldots , k$.

\item\label{I-j-loop}
\qquad For each $I_j$,

\item
\qquad\qquad if $d_j\ge z$,

\item\label{assign-hat-C}
\qquad\qquad then let $\hat{C}(I_j,S):={n\over m_t}d_j$ to
approximate $C(I_j,S)$.

\item\label{I-j-loop-end}
\qquad\qquad else let $\hat{C}(I_j,S):=0$.

\item\label{loop-m-end}
\qquad Let $app_w:=\sum_{d_j\ge z}\hat{C}(I_j,S)\pi_j$ to
approximate $\sum_{i=1}^n a_n$.

\item\label{until-condition}
\qquad If $app_w\le {c_5n\log\log n\over c_0m_t}$ and $m_t< n$ then
enter Phase $t+1$.

\item
\qquad else

\item
\qquad\qquad If $m_t<n$

\item
\qquad\qquad then let $app_w':=\sum_{d_j\ge z\ and \
j>1}\hat{C}(I_j,S)\pi_j$ to approximate $\sum_{a_i<\delta,1\le i\le
n}a_i$.

\item
\qquad\qquad else let $app_w:=\sum_{i=1}^na_i$ and
$app_w':=\sum_{a_i<\varphi} a_i$.

\item
\qquad\qquad Output $app_w$, $app_w'$ and $\hat{C}(I_1,S)$ (the
approximate number of items  of size at least $\varphi$).

\item
End of Phase $t$.
\end{enumerate}

{\bf End of Algorithm}

\vskip 10pt

Lemma~\ref{app-sum-lemma} uses several parameters $\varphi, \delta,
\gamma,\alpha$ and $\theta$ that will be determined by the
approximation ratio for the the bin packing problem. If the
approximation ratio is fixed, they all  become constants.

\begin{lemma}\label{app-sum-lemma} Assume that $\varphi, \delta, \gamma,\alpha$ and $\theta$ are parameters in $(0,1)$, and those parameters do not depend on the number of
items $n$.. Then there exists a randomized  algorithm  described in
Approximate-Intervals(.) such that given a list $S$ of items of size
$a_1,\ldots , a_n$ in the range $(0,1]$ and a
$(\varphi,\delta,\gamma)$-partition for $(0,1]$, with probability at
most $\alpha$, at least one of the following statements is false
after executing the algorithm:
\begin{enumerate}[1.]
\item\label{item1-app-sum-lemma}
 For each $I_j$ with $\hat{C}(I_j,S)>0$, ${C(I_j,S)
(1-\theta)}\le \hat{C}(I_j,S)\le {C(I_j,S)(1+\theta)}$;
\item\label{item2-app-sum-lemma}
$\sum_{a_i\in I_j\
and\ \hat{C}(I_j,S)=0}a_i\le {\delta^3\over 2}(\sum_{i=1}^n
a_i)+{\gamma\over n}$;
\item\label{item3-app-sum-lemma}
$(1-\theta)(1-\delta)\varphi({\sum_{i=1}^na_i\over 2}-{2\gamma\over
n})\le app_w\le (1+\theta)(\sum_{i=1}^na_i)$;
\item\label{item3b-app-sum-lemma}
If $\sum_{i=1}^na_i\ge 4$, then ${1\over
4}(1-\theta)(1-\delta)\varphi(\sum_{i=1}^na_i)\le app_w\le
(1+\theta)(\sum_{i=1}^na_i)$; and
\item\label{item4-app-sum-lemma}
It runs in $O({1\over (1-\theta)\delta^4\log g(\theta)}\min({n\over
\sum_{i=1}^n a_i},n)(\log n)\log\log n)$ time. In particular, the
complexity of the algorithm is $O(\min({n\over \sum_{i=1}^n
a_i},n)(\log n)\log\log n)$ if $\varphi, \delta, \gamma,\alpha$ and
$\theta$ are constants in $(0,1)$.
\end{enumerate}
\end{lemma}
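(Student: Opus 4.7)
The plan is to condition on a single ``good event'' capturing that across every phase $t$ with $m_t\le n$ and every interval $I_j$, the sample count $d_j$ concentrates around its expectation, and then to derive items 1--4 deterministically from this event. For fixed $t,j$ the count $d_j$ is a sum of $m_t$ independent Bernoullis with common success probability $p_j:=C(I_j,S)/n$, so Theorem~\ref{chernoff3-theorem} and Theorem~\ref{ourchernoff2-theorem} applied with deviation $\theta$ bound the probability that $d_j$ deviates from $m_t p_j$ by more than a $(1\pm\theta)$ factor by $g(\theta)^{\Theta(m_t p_j)}$. At most $\log n$ phases occur before the escape $m_t\ge n$, and there are $k=O(\log n)$ intervals, so the union bound over the $(k+1)\log n$ relevant events is exactly what is absorbed by the choice $z=\xi_0\log\log n$ in step~\ref{first-alpha-setting}, which is calibrated so that $8(k+1)(\log n)\,g(\theta)^{z/2}<\alpha$.

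Item~\ref{item1-app-sum-lemma} is then immediate: if $\hat C(I_j,S)>0$ then $d_j\ge z$, so the Chernoff guarantee gives $\hat C(I_j,S)=(n/m_t)d_j\in[(1-\theta)C(I_j,S),(1+\theta)C(I_j,S)]$. Item~\ref{item2-app-sum-lemma} follows by the contrapositive: when $\hat C(I_j,S)=0$ we must have $m_t p_j=O(z)$, so $C(I_j,S)=O(zn/m_t)$. The weight contribution $\pi_{j-1}C(I_j,S)$ summed over dropped intervals $I_2,\dots,I_{k-1}$ telescopes via the geometric series $\sum \pi_{j-1}\le \varphi/\delta$ to $O(zn/(m_t\delta))$, while the tail interval $I_k$ contributes at most $n\pi_{k-1}\le\gamma/n$ since $\pi_{k-1}\le\gamma/n^2$. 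The stopping test in line~\ref{until-condition}, combined with $z=\xi_0\log\log n$, converts $zn/m_t$ into a constant multiple of $app_w$; the constants $c_3,c_5$ fixed in step~\ref{constant-setting-in-Approximate-Intervals} are chosen large enough (relative to $\xi_0,c_0,1/\delta$) to absorb this into $(\delta^3/2)\sum a_i$.

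For item~\ref{item3-app-sum-lemma}, I split $app_w=\sum_{d_j\ge z}\hat C(I_j,S)\pi_j$ into retained vs.\ dropped intervals. Item~\ref{item1-app-sum-lemma} yields the upper bound $app_w\le(1+\theta)\sum a_i$. For the lower bound, items in $I_j$ with $j\ge 2$ satisfy $a_i\le\pi_{j-1}=\pi_j/(1-\delta)$, so $\pi_j\ge(1-\delta)a_i$, while items in $I_1$ satisfy $\pi_1=\varphi\ge\varphi a_i$; combining the $(1-\theta)$ Chernoff slack with item~\ref{item2-app-sum-lemma} to discard the dropped mass yields the factor $(1-\theta)(1-\delta)\varphi$, and the extra $1/2$ absorbs at most $(\delta^3/2)\sum a_i+\gamma/n$ of dropped weight plus the $2\gamma/n$ tail. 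Item~\ref{item3b-app-sum-lemma} then follows because when $\sum a_i\ge 4$ the additive $2\gamma/n$ term is dominated by $\sum a_i/4$.

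Finally, item~\ref{item4-app-sum-lemma} is a geometric-sum analysis: phase $t$ does $O(m_t+k)$ work, and since $m_t$ doubles the total sampling cost is $O(m_{\text{final}})$. The stopping condition forces $m_{\text{final}}=O(n\log\log n/\sum a_i)$ via line~\ref{until-condition}, or $m_{\text{final}}\le 2n$ via the escape; together with $O(k\log n)=O(\log^2 n)$ bookkeeping across the $O(\log n)$ phases this gives the claimed bound. The main obstacle will be the interlocking of constants in step~\ref{constant-setting-in-Approximate-Intervals}: the threshold in line~\ref{until-condition} must be small enough for the algorithm to terminate early when $\sum a_i$ is large, large enough that dropped intervals contribute at most $(\delta^3/2)\sum a_i$, and simultaneously compatible with the union-bound calibration of $\xi_0$ in step~\ref{first-alpha-setting}. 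Verifying that the explicit values $c_0=1/100$, $c_2,c_3,c_4,c_5$ satisfy all three constraints is the technically delicate part of the argument.
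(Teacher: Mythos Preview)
Your overall architecture matches the paper's: a Chernoff-plus-union-bound good event, a geometric-series estimate for dropped mass, and a doubling argument for the running time. But the ``single good event'' as you state it is too strong to hold with probability $1-\alpha$, and this is the genuine gap.

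You say the probability that $d_j$ deviates from $m_t p_j$ by a $(1\pm\theta)$ factor is at most $g(\theta)^{\Theta(m_t p_j)}$, and then union-bound over all $(k+1)\log n$ pairs $(t,j)$. The problem is that many intervals have $m_t p_j\ll z$ (indeed $p_j$ may be $0$), and for those the Chernoff bound is essentially $1$; summing $\Theta(k\log n)$ such terms blows past $\alpha$. The paper's Claim~4.1 fixes this with a case split at the threshold $p'=z/(2m_t)$: if $p_j<p'$ one does \emph{not} ask for multiplicative concentration, but instead uses Theorem~\ref{ourchernoff2-theorem} with $\delta=1$ and the surrogate probability $p'$ to show $\Pr[d_j\ge z]\le g_2(1)^{p'm_t}=g_2(1)^{z/2}$; if $p_j\ge p'$ the usual $(1\pm\theta)$ concentration holds with failure probability $g(\theta)^{p_j m_t}\le g(\theta)^{z/2}$. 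Only after this split does every pair contribute a uniform $g(\theta)^{z/2}$ term, which is what the calibration $8(k+1)(\log n)g(\theta)^{z/2}<\alpha$ absorbs. Your item~\ref{item1-app-sum-lemma} argument (``if $d_j\ge z$ then concentration applies'') and your contrapositive for item~\ref{item2-app-sum-lemma} (``if $d_j<z$ then $m_t p_j=O(z)$'') both implicitly require exactly this split, so you should make it explicit.

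A secondary point: your route to item~\ref{item2-app-sum-lemma} differs from the paper's. You bound $C(I_j,S)=O(zn/m_t)$ for each dropped $I_j$, sum the geometric series in $\pi_{j-1}$, and then invoke the stopping test to convert $zn/m_t$ into $O(app_w)\le O(\sum a_i)$. The paper instead decouples the two ingredients: Claim~4.3 is purely deterministic and bounds the total weight in intervals with $C(I_j,S)<c_3\sum a_i$ by $(\delta^3/2)\sum a_i+\gamma/n$, while Claim~4.2 shows (probabilistically, once $m_t\ge c_2c_5 n\log\log n/\sum a_i$) that every interval with $C(I_j,S)\ge c_3\sum a_i$ has $d_j\ge z$ and hence is retained. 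Your version can be made to work, but it routes through $app_w\le(1+\theta)\sum a_i$ and the explicit value of $c_5$, so the constant-chasing you flag as ``technically delicate'' becomes load-bearing rather than a sanity check; the paper's decomposition keeps the deterministic and probabilistic parts cleanly separated and makes the role of $c_3=\delta^4/(2(1+\delta))$ transparent.
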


Lemma~\ref{app-sum-lemma} implies that with probability at least
$1-\alpha$, all statements~\ref{item1-app-sum-lemma}
to~\ref{item4-app-sum-lemma} are true. Due to the technical reason
described at the end of section~\ref{probability-subsec}, we
estimate the failure probability instead of the success probability.

\begin{proof}
Let $\xi_0, c_0,c_2,c_3,c_4$, and $c_5$ be parameters defined as
those in the algorithm Approximate-Intervals$(.)$.
We use the uniform random sampling to approximate the  number of
items in each interval $I_j$ in the
$(\varphi,\delta,\gamma)$-partition.




{\bf Claim~\ref{app-sum-lemma}.1.} Let $Q_1$ be the probability that
the following statement is false:

(i) For each interval $I_j$ with $d_j\ge z$, $(1-\theta)C(I_j,S)\le
\hat{C}(I_j,S)\le (1+\theta)C(I_j,S)$.

Then for each phase in the algorithm, $Q_1\le (k+1)\cdot
g(\theta)^{z\over 2}$.

\begin{proof}
Let $p_j={C(I_j,S)\over n}$. An element of $S$ in $I_j$ is sampled
(by an uniform sampling) with probability $p_j$. Let $p'={z\over
2m_t}$.
 For each interval $I_j$ with $d_j\ge z$, we discuss two
cases.

\begin{itemize}
\item
Case 1. $p'\ge p_j$.

In this case, $d_j\ge z\ge 2p' m_t\ge 2p_j m_t$. Note that $d_j$ is
the number of elements in interval $I_j$ among $m_t$ random samples
$a_{i_1},\ldots , a_{i_{m_t}}$ from $S$. By
Theorem~\ref{ourchernoff2-theorem} (with $\theta=1$), with
probability at most $P_1=g_2(1)^{p'm_t}\le g_2(1)^{z/2}\le
g(1)^{z/2}$, there are at least $2p_j m_t$ samples are in from
interval $I_j$.

\item
Case 2. $p'< p_j$.

 By
Theorem~\ref{ourchernoff2-theorem}, we have $\prob[d_j>
(1+\theta)p_jm_t]\le g_2(\theta)^{p_jm_t}\le g_2(\theta)^{p'm_t}\le
g_2(\theta)^{z\over 2}\le g(\theta)^{z\over 2}$.


By Theorem~\ref{chernoff3-theorem}, we have $\prob[d_j\le
(1-\theta)p_jm_t]\le g_1(\theta)^ {p_jm_t}\le
g_1(\theta)^{p'm_t}=g_1(\theta)^{z\over 2}\le g(\theta)^{z\over 2}$.

For each interval $I_j$ with $d_j\ge z$ and $(1-\theta)p_jm_t\le
d_j\le (1+\theta)p_jm_t$, we have $(1-\theta)C(I_j,S)\le
\hat{C}(I_j,S)\le (1+\theta)C(I_j,S)$ by line~\ref{assign-hat-C} in
Approximate-Intervals(.).

There are $k=(\log n)$ intervals $I_1,\ldots , I_k$. Therefore, with
probability at most $P_2=k\cdot g(\theta)^{z\over 2}$,
 the following is false: For each interval $I_j$
with $d_j\ge z$, $(1-\theta)C(I_j,S)\le \hat{C}(I_j,S)\le
(1+\theta)C(I_j,S)$.
\end{itemize}

 By the
analysis of Case 1 and Case 2, we have  $Q_1\le P_1+P_2\le
(k+1)\cdot g(\theta)^{z\over 2}$. Thus, the claim has been proven.
\end{proof}

{\bf Claim \ref{app-sum-lemma}.2.} Assume that $m_t\ge{c_2
c_5n\log\log n\over \sum_{i=1}^n a_i}$. Then right after executing
Phase $t$ in Approximate-Intervals$(.)$, with probability at most
$Q_2=2kg(\theta)^{\xi_0\log\log n}$, the following statement is
false:

(ii) For each interval $I_j$ with $C(I_j, S)\ge c_3\sum_{i=1}^n
a_i$, A). $(1-\theta)C(I_j,S)\le \hat{C}(I_j,S)\le
(1+\theta)C(I_j,S)$; and B). $d_j\ge z$.

\begin{proof}
Assume that $m_t\ge{c_2 c_5n\log\log n\over \sum_{i=1}^n a_i}$.
Consider each interval $I_j$ with $C(I_j, S)\ge c_3\sum_{i=1}^n
a_i$. We have that $p_j={C(I_j,S)\over n}\ge {c_3\sum_{i=1}^n
a_i\over n}$. An element of $S$ in $I_j$ is sampled with probability
$p_j$.  By Theorem~\ref{ourchernoff2-theorem} and
Theorem~\ref{chernoff3-theorem}, we have
\begin{eqnarray}
\prob[d_j<(1-\theta)p_jm_t]\le g_1(\theta)^{p_jm_t}\le
g_1(\theta)^{c_2c_3c_5\log\log n}\le g(\theta)^{\xi_0\log\log n}.\\
\prob[d_j>(1+\theta)p_jm_t]\le g_2(\theta)^{p_jm_t}\le
g_2(\theta)^{c_2c_3c_5\log\log n}\le g(\theta)^{\xi_0\log\log n}.
\end{eqnarray}

Therefore, with probability at most $2kg(\theta)^{\xi_0\log\log n}$,
the following statement is false:

For each interval $I_j$ with $C(I_j, S)\ge c_3\sum_{i=1}^n a_i$,
$(1-\theta)C(I_j,S)\le \hat{C}(I_j,S)\le (1+\theta)C(I_j,S)$.

If $d_j\ge (1-\theta)p_j m_t$, then we have
\begin{eqnarray*}
d_j&\ge&  (1-\theta){C(I_j, S)\over n} m_t\\
&\ge& (1-\theta){(c_3\sum_{i=1}^n a_i)\over n}\cdot {c_2
c_5n\log\log n\over \sum_{i=1}^n a_i}\\
&=&(1-\theta)c_2c_3c_5\log\log n\\
&\ge& \xi_0\log\log n= z.
\end{eqnarray*}
\end{proof}


{\bf Claim \ref{app-sum-lemma}.3.} The total sum of the sizes of
items in those $I_j$s with $C(I_j, S)< c_3\sum_{i=1}^n a_i$ is at
most ${ \delta^3\over 2}(\sum_{i=1}^n a_i)+{\gamma\over n}$.

\begin{proof} By definition~\ref{partition-def}, we have
$a_j=\varphi(1-\delta)^{j-1}$ for $j=1,\ldots ,k-1$. We have that
\begin{itemize}
\item
the sum of sizes of items in $I_k$ is at most $n{\gamma\over
n^2}={\gamma\over n}$,
\item
for each interval $I_j$ with $C(I_j, S)< c_3\sum_{i=1}^n a_i$, the
sum of sizes of items in $I_j$ is at most $(c_3\sum_{i=1}^n
a_i)a_{j-1}\le (c_3\sum_{i=1}^n a_i)\varphi(1-\delta)^{j-2}$ for
$j\in (1,k)$, and
\item
the sum of sizes in $I_1$ is at most $c_3\sum_{i=1}^n a_i$.
\end{itemize}
The total sum of the sizes of items in those $I_j$s with $C(I_j, S)<
c_3\sum_{i=1}^n a_i$  is at most $(c_3\sum_{i=1}^n
a_i)+\sum_{j=2}^k(c_3\sum_{i=1}^n
a_i)\varphi(1-\delta)^{j-2})+n\cdot {r\over n^2}\le (c_3\sum_{i=1}^n
a_i)+{c_3\varphi\over \delta}(\sum_{i=1}^n a_i)+{\gamma\over n}\le {
\delta^3\over 2}(\sum_{i=1}^n a_i)+{\gamma\over n}$.
\end{proof}


{\bf Claim \ref{app-sum-lemma}.4.} Assume that at the end of phase
$t$, for each $I_j$ with $\hat{C}(I_j,S)>0$, ${C(I_j,S)
(1-\theta)}\le \hat{C}(I_j,S)\le {C(I_j,S)(1+\theta)}$; and $d_j\ge
z$ if $C(I_j,S)\ge c_3\sum_{i=1}^na_i$. Then
$(1-\theta)(1-\delta)\varphi({\sum_{i=1}^na_i\over 2}-{2\gamma\over
n})\le app_w\le (1+\theta)(\sum_{i=1}^na_i)$ at the end of phase
$t$.

\begin{proof} By the assumption of the claim, we have $app_w=\sum_{d_j\ge z}\hat{C}(I_j,S)\pi_j\le
(1+\theta)\sum_{i=1}^n a_i$. For each interval $I_j$ with $j\not=k$
and $j>1$, we have $C(I_j,S)\pi_j\ge (1-\delta)\sum_{a_i\in I_j}a_i$
by the definition of $(\varphi,\delta,\gamma)$-partition. It is easy
to see that $C(I_1,S)\pi_1\ge \varphi\sum_{a_i\in I_1}a_i$ by the
definition of $(\varphi,\delta,\gamma)$-partition. Thus,
\begin{eqnarray}
C(I_j,S)\pi_j\ge (1-\delta)\varphi\sum_{a_i\in I_j}a_i\ \ \
\mbox{for \ \ $j\not=k$.}\label{lower-bound-one-interval}
\end{eqnarray}
 We have the following inequalities:
\begin{eqnarray*}
app_w&=&\sum_{d_j\ge z}\hat{C}(I_j,S)\pi_j\ \ \ \mbox{(by\ line~\ref{until-condition}\ in\ Approximate-Intervals(.))}\\
&\ge&(1-\theta)\sum_{d_j\ge z}C(I_j,S)\pi_j\\
&\ge&(1-\theta)\sum_{d_j\ge z, j\not=k}C(I_j,S)\pi_j\\
&\ge&(1-\theta)(1-\delta)\varphi\sum_{d_j\ge z,j\not=k}\left(\sum_{a_i\in I_j}a_i\right)\ \ \ \ \mbox{(by\ inequality~(\ref{lower-bound-one-interval}))}\\
&\ge&(1-\theta)(1-\delta)\varphi(\sum_{i=1}^na_i-\sum_{d_j<z}\sum_{a_i\in I_j} a_i-\sum_{a_i\in I_k}a_i)\\
&\ge&(1-\theta)(1-\delta)\varphi(\sum_{i=1}^na_i-({ \delta^3\over 2}(\sum_{i=1}^n a_i)+{\gamma\over n})-n\cdot {\gamma\over n^2})\ \ \ \mbox{(by\ Claim\ \ref{app-sum-lemma}.3)}\\
&\ge&(1-\theta)(1-\delta)\varphi({\sum_{i=1}^na_i\over
2}-{2\gamma\over n}).
\end{eqnarray*}
\end{proof}


{\bf Claim \ref{app-sum-lemma}.5.} With probability at most
$Q_5=(k+1)\cdot (\log n) g(\theta)^{z\over 2}$,  the following facts
are not all true:
\begin{enumerate}[A.]
\item\label{clm5-a}
For each phase $t$ with $m_t<{2c_2 c_5n\log\log n\over \sum_{i=1}^n
a_i}$,  the condition $app_w\le {c_5n\log\log n\over c_0m_t}$ in
line~\ref{until-condition} of the algorithm is true.
\item\label{clm5-b}
If $\sum_{i=1}^na_i\ge 4$, then the algorithm stops before $m_t>
{2c_4 c_5n\log\log n\over \sum_{i=1}^n a_i}$.

\item\label{clm5-c}
If $\sum_{i=1}^na_i\le 4$, then it stops before or at phase $t$ in
which the condition $m_t\ge n$ first becomes true.
\end{enumerate}

\begin{proof}
By Claim \ref{app-sum-lemma}.1, with probability at most $(k+1)\cdot
g(\theta)^{z\over 2}$, the statement i of Claim
\ref{app-sum-lemma}.1 is false for a fixed $m$. The number of phases
is at most $\log n$ since $m_t$ is double at each phase. With
probability $(k+1)\cdot (\log n)\cdot g(\theta)^{z\over 2}$, the
statement i of Claim \ref{app-sum-lemma}.1 is false for each phase
$t$ with $m_t\le n$. Assume that statement i of Claim
\ref{app-sum-lemma}.1 is true for all phases $t$ with $m_t\le n$.

Statement~\ref{clm5-a}.  Assume that $m_t<{2c_2 c_5n\log\log n\over
\sum_{i=1}^n a_i}$. We have ${n\over m_t}>{n\over {2c_2 c_5
n\log\log n\over \sum_{i=1}^n a_i}}={\sum_{i=1}^n a_i\over
2c_2c_5\log\log n}$. Therefore, $\sum_{i=1}^n a_i<({n\over
m_t})2c_2c_5\log\log n={2c_2c_5n\log\log n\over m_t}$.
 By
Claim \ref{app-sum-lemma}.4, $app_w\le (1+\theta)\sum_{i=1}^n a_i$.
 Since $(1+\theta)<{1\over 2c_2c_0}$ (by line~\ref{constant-setting-in-Approximate-Intervals} in Approximate-Intervals(.)), we have
\begin{eqnarray*}
 app_w\le (1+\theta)\sum_{i=1}^n a_i\le {1\over 2c_2c_0}\sum_{i=1}^n a_i
 < {1\over 2c_2c_0}\cdot {2c_2c_5n\log\log n\over m_t}= {c_5n\log\log n\over c_0m_t}.
\end{eqnarray*}
Statement~\ref{clm5-b}. The variable $m_t$ is doubled in each new
phase.
Assume that the algorithm enters phase $t$ with ${c_4 c_5n\log\log
n\over \sum_{i=1}^n a_i}\le m_t\le {2c_4 c_5n\log\log n\over
\sum_{i=1}^n a_i}$.  We have ${n\over m_t}\le {n\over {c_4c_5
n\log\log n\over \sum_{i=1}^n a_i}}={\sum_{i=1}^n a_i\over
c_4c_5\log\log n}$. Since $\sum_{i=1}^na_i\ge 4$,
$({\sum_{i=1}^na_i\over 2}-{\gamma\over n})\ge {\sum_{i=1}^na_i\over
4}$. By Claim \ref{app-sum-lemma}.4,
 $app_w$
is at least
 ${(1-\theta)(1-\delta)\varphi\over 4}\sum_{i=1}^n a_i$.
 Since ${(1-\theta)(1-\delta)\varphi\over 4}>{1\over c_0c_4}$, we have $app_w> {c_5n\log\log n\over
 c_0m}$, which makes the condition at
line~\ref{until-condition} in Approximate-Intervals(.) be false.
Thus, the algorithm stops at some stage $t$ with $m_t\le {2c_4
c_5n\log\log n\over \sum_{i=1}^n a_i}$ by the setting at
line~\ref{until-condition} in Approximate-Intervals(.).

 Statement~\ref{clm5-c}. It follows from statement A and the
setting in line~\ref{until-condition} of the algorithm.
\end{proof}

{\bf Claim \ref{app-sum-lemma}.6.} The complexity of the algorithm
is $O({1\over (1-\theta)\delta^4\log g(\theta)}\min({n\over
\sum_{i=1}^n a_i},n)(\log n)\log\log n)$. In particular, the
complexity is $O(\min({n\over \sum_{i=1}^n a_i},n)(\log n)\log\log
n)$ if $\varphi, \delta, \gamma,\alpha$ and $\theta$ are constants
in $(0,1)$.

\begin{proof} By the setting in line~\ref{constant-setting-in-Approximate-Intervals} in
Approximate-Intervals(.), we have
\begin{eqnarray*}
c_2c_5&=&{1\over 3(1+\delta)c_0}\cdot {12\xi_0\over (1-\theta)c_2c_3}\\
&=&{4\xi_0\over (1+\delta)\cdot c_0\cdot (1-\theta)\cdot {1\over 3(1+\delta)c_0}\cdot {\delta^4\over 2(1+\delta)}}\\
&=&{24\xi_0(1+\delta)\over (1-\theta)\delta^4}.
\end{eqnarray*}

In order to satisfy the condition
 $8(k+1)(\log n) g(\theta)^{z/2}<\alpha$ for all large $n$ at line~\ref{first-alpha-setting} in Approximate-Intervals(.), we can let $\xi_0={8\over \log g(\theta)}$.

 Since $m_t$ is doubled every phase, the total number of phases is at most $\log n$. The computational time complexity in statement~\ref{item4-app-sum-lemma}
   of the algorithm follows from
Claim \ref{app-sum-lemma}.5.
\end{proof}

As $m_t$ is doubled each new phase in Approximate-Intervals$(.)$,
the number of phases is at most $\log n$. With probability at most
$(\log n)(Q_1+Q_2)+Q_5\le \alpha$ (by line~\ref{first-alpha-setting}
in Approximate-Intervals$(.)$), at least one of the statements (i)
in Claim \ref{app-sum-lemma}.1, (ii) in Claim \ref{app-sum-lemma}.2,
A, B, C in Claim \ref{app-sum-lemma}.5 is false.


Assume that the statements (i) in Claim \ref{app-sum-lemma}.1, (ii)
in Claim \ref{app-sum-lemma}.2, A, B, and C in Claim
\ref{app-sum-lemma}.5 are all true.

For an interval $I_j$, $\hat{C}(I_j,S)>0$ if and only if $d_j\ge z$
by lines~\ref{I-j-loop} to \ref{I-j-loop-end} in
Approximate-Intervals(.). Therefore, statement 1 of the lemma
follows from Claim \ref{app-sum-lemma}.1.

If Approximate-Intervals(.) stops at $m_t<n$, then $m_t\ge{2c_2
c_5n\log\log n\over \sum_{i=1}^n a_i}$ by statement A in Claim
\ref{app-sum-lemma}.5.
 For each interval $I_j$ with $C(I_j, S)\ge c_3\sum_{i=1}^n
a_i$, we have $d_j\ge z$, which implies $\hat{C}(I_j,S)>0$.
 Statement~\ref{item2-app-sum-lemma} of Lemma~\ref{app-sum-lemma}
 follows from Claim \ref{app-sum-lemma}.3 and statement (ii) of Claim \ref{app-sum-lemma}.2.

Statement~\ref{item3-app-sum-lemma} follows from Claim
\ref{app-sum-lemma}.4. The condition of
Statement~\ref{item3b-app-sum-lemma} implies $n\ge 4$.
 Statement~\ref{item3b-app-sum-lemma} follows from
 Statement~\ref{item3-app-sum-lemma}.
Statement~\ref{item4-app-sum-lemma} for the running time follows
from Claim \ref{app-sum-lemma}.6.

 Thus,
with probability at  most $\alpha$, at least one of the
statements~\ref{item1-app-sum-lemma} to \ref{item4-app-sum-lemma} is
false.
\end{proof}

\section{Main Results}

We list the main results that we achieve in this paper. The  proof
of Theorem~\ref{random-approx-theorem} is shown in
Section~\ref{full-sublinear-section}.



\begin{theorem}[Main]\label{random-approx-theorem}
Approximate-Bin-Packing(.) is a randomized approximation scheme for
the bin packing problem such that given an arbitrary $\tau \in
(0,1)$ and a list of items $S=a_1,\ldots , a_n$ in $(0,1]$ for the
bin packing problem,  it gives an approximation $app$ with
$Opt(S)\le app\le (1+\tau  )Opt(S)+1$ in $O({n(\log n)(\log\log
n)\over \sum_{i=1} a_i}+({1\over \tau  })^{O({1\over\tau  })})$ time
with probability at least ${3\over 4}$.
\end{theorem}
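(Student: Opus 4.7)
The plan is to combine the adaptive sampling of Lemma~3 (Approximate-Intervals) with a Fernandez--Lueker style configuration-LP packing of the large items, followed by a greedy filling of the small items. I set the internal parameters $\varphi,\delta,\gamma,\theta$ of Approximate-Intervals all to $\Theta(\tau)$ and the failure probability $\alpha$ to $1/4$. One call to Approximate-Intervals then yields, with probability $\ge 3/4$, the estimates $\hat C(I_j,S)$, $app_w\approx\sum_i a_i$, $app_w'\approx\sum_{a_i<\varphi}a_i$, and $\hat C(I_1,S)\approx |\{i:a_i\ge \varphi\}|$ satisfying statements 1--3 of the Lemma, in time $O(n(\log n)(\log\log n)/\sum_i a_i)$ by statement~4.

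Second, I would build a ``rounded large-item instance'' from the sample: in each interval $I_j$ with $j<k$, treat every item as having size equal to the upper endpoint $\pi_{j-1}$, and take its multiplicity to be a slight upward rounding of $\hat C(I_j,S)$. Because $\varphi=\Theta(\tau)$ and $\delta=\Theta(\tau)$, there are only $O(\log(1/\varphi)/\delta)=O(1/\tau^2)$ distinct large sizes, and each bin holds at most $1/\varphi=O(1/\tau)$ large items, so the number of candidate bin configurations is at most $(1/\tau)^{O(1/\tau)}$. Solving the corresponding configuration LP (polynomial in the number of configurations) and rounding the fractional solution gives an integer packing of the rounded large items using $B_{\mathrm{large}}\le (1+O(\tau))\,Opt_{\mathrm{large}}(S)$ bins in $(1/\tau)^{O(1/\tau)}$ time. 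I would then account for the small items (size $<\varphi$) by opening at most $\lceil app_w'/(1-\varphi)\rceil+1$ additional bins and ``filling'' them at density $\ge 1-\varphi$, a standard first-fit argument which yields at most $(1+O(\tau))\,Opt_{\mathrm{small}}(S)+1$ bins for the small portion. Adding the two pieces and rescaling $\tau$ by an absolute constant gives the claimed bound $Opt(S)\le app\le (1+\tau)\,Opt(S)+1$.

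The main obstacle is tight error accounting. The $(1\pm\theta)$ multiplicative error on $\hat C(I_j,S)$ can make the sample-based large-item bin count undershoot the true optimum, so the algorithm must inflate $\hat C$ by a factor $(1+\theta)/(1-\theta)$ (absorbed into the $1+\tau$) to guarantee $app\ge Opt(S)$; simultaneously the ignored-interval mass $(\delta^3/2)\sum_i a_i+\gamma/n$ from statement~2, together with the per-bin waste coming from rounding item sizes up to the endpoints $\pi_{j-1}$, must also be absorbed into the same $(1+\tau)$ slack. When $\sum_i a_i=\Omega(1)$, both quantities are at most $O(\tau)\,Opt(S)$, so the bookkeeping goes through cleanly. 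When $\sum_i a_i=O(1)$, Approximate-Intervals falls into its deterministic $m_t\ge n$ branch, computes $\sum_i a_i$ and $\sum_{a_i<\varphi}a_i$ exactly in $O(n)=O(n(\log n)(\log\log n)/\sum_i a_i)$ time, so the sampling error vanishes entirely and the $+1$ additive term in the theorem statement absorbs the constant amount of rounding in $Opt(S)=O(1)$. The probability bound follows directly from the single invocation of Approximate-Intervals at failure level $\alpha=1/4$, since the LP and filling steps are deterministic.
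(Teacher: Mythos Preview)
Your plan has two genuine gaps, one structural and one in the final accounting.

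\textbf{The interval partition does not refine the large items.} In the paper's $(\varphi,\delta,\gamma)$-partition, $I_1=[\varphi,1]$ is a \emph{single} interval; the geometric subdivision with ratio $(1-\delta)$ happens only \emph{below} $\varphi$. Consequently Approximate-Intervals returns just $\hat C(I_1,S)\approx |\{i:a_i\ge\varphi\}|$ for the large items, with no size distribution among them. Your ``rounded large-item instance'' would therefore round every large item up to $\pi_0=1$, which can inflate the bin count by a factor $\Theta(1/\varphi)=\Theta(1/\tau)$. Your count ``$O(\log(1/\varphi)/\delta)$ distinct large sizes'' presupposes a partition that Lemma~3 does not provide. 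The paper closes this gap with a \emph{second} sampling stage (Select-Crucial-Items, Lemma~\ref{select3-lemma}): it draws $\Theta(n/n'_{\ge\varphi})$ further uniform samples, keeps those of size $\ge\varphi$, and extracts approximate order statistics $y'_1,\dots,y'_m$ with $\Rank(y'_i,S_{\ge\varphi})\cap[ih-\mu h,ih+\mu h]\neq\emptyset$. These $m=O(1/(\delta\eta\epsilon^2))$ quantiles, not interval endpoints, are what feed the configuration LP. This second randomized step is also why the paper sets $\alpha=1/12$ rather than $1/4$ and needs the case split on whether $n'_{\ge\varphi}\ge (\delta^2/4)s$ (when there are too few large items, it bypasses the LP entirely via Packing-With-Few-Large-Items).

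\textbf{Large and small bin counts cannot simply be added.} You bound $B_{\mathrm{large}}\le(1+O(\tau))Opt_{\mathrm{large}}(S)$ and $B_{\mathrm{small}}\le(1+O(\tau))Opt_{\mathrm{small}}(S)+1$ and then add. But $Opt_{\mathrm{large}}(S)+Opt_{\mathrm{small}}(S)$ can exceed $Opt(S)$ by a constant factor: with $n/2$ items of size $0.51$ and $n/2$ of size $0.49$ (take $\varphi=1/2$), the optimum pairs them for $n/2$ bins, whereas the separate optima sum to about $3n/4$. The paper (Packing-Small-Items, Lemma~\ref{convert-approximation2-lemma}) instead first fills small mass into the residual space of the large-item bins; either no new bin is opened (and $app\le(1+O(\tau))Opt_{\mathrm{large}}\le(1+O(\tau))Opt(S)$), or some new bin is opened, in which case every bin but one is at least $(1-\delta)$-full and $app\le (\sum_i a_i)/(1-\delta)+1\le(1+O(\tau))Opt(S)+1$. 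Your ``additional bins'' shortcut skips exactly this dichotomy.
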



We show a lower bound for those bin packing problems with bounded
sum of sizes $\sum_{i=1}^n a_i$. The lower bound always matches the
upper bound.

\begin{theorem}\label{strong-lower-bound-theorem}  Assume $f(n)$ is
a nondecreasing unbounded function from $N$ to $N$ with $f(n)=o(n)$.
Every randomized $(2-\epsilon)$ approximation algorithm for bin
packing problems in $\sum(f(n))$ needs $\Omega({n\over f(n)})$ time,
where $\epsilon$ is an arbitrary small constant in $(0,1)$.
\end{theorem}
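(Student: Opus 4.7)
The plan is to apply Yao's minimax principle by constructing a distribution on inputs in $\sum(f(n))$ on which any deterministic algorithm using $q = o(n/f(n))$ item queries cannot distinguish two cases whose optimal bin packings differ by a factor approaching $2$. The classical ``$1/2 + \delta$ versus $1/2 - \delta$'' gap for bin packing provides the ratio; the new ingredient is padding with tiny items so that the big items are hidden in a random permutation of $n$ positions.

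Concretely, fix a constant $\delta \in (0, \epsilon/8)$ and define two base lists. Let $L_1$ consist of $m_1 \approx 2f(n)/(1+2\delta)$ ``big'' items of size $1/2 + \delta$ together with $n-m_1$ ``tiny'' items of size $1/n^3$, and let $L_2$ consist of $m_2 \approx 2f(n)/(1-2\delta)$ big items of size $1/2 - \delta$ together with $n - m_2$ tiny items of size $1/n^3$. One tiny item in each list is treated as a ``correction'' item whose size is tweaked by $O(1/n)$ so that the total sums are exactly $f(n)$. Then $L_1,L_2 \in \sum(f(n))$. I would verify that each big item of $L_1$ occupies its own bin (size $> 1/2$) while the tiny items fit in the slack, so $\mathrm{Opt}(L_1) = m_1(1+o(1))$; and that big items of $L_2$ pack two per bin (since $2(1/2-\delta)<1$) with tiny items filling the slack, so $\mathrm{Opt}(L_2) = (m_2/2)(1+o(1))$. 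Consequently
\[
\frac{\mathrm{Opt}(L_1)}{\mathrm{Opt}(L_2)} \;=\; \frac{2(1-2\delta)}{1+2\delta}\,(1+o(1)) \;\geq\; 2-\epsilon
\]
for all large $n$, so no single output can be a $(2-\epsilon)$-approximation for both lists.

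For indistinguishability, let $D_i$ be the uniformly random permutation of $L_i$ and feed the algorithm the mixture $\tfrac12 D_1 + \tfrac12 D_2$. In either $D_i$, each queried index hits a big item with probability at most $\max(m_1,m_2)/n = O(f(n)/n)$, so by a union bound the algorithm sees \emph{any} big item after $q$ queries with probability at most $O(qf(n)/n) = o(1)$ when $q = o(n/f(n))$. Conditioned on seeing no big item, the transcript distributions on $D_1$ and $D_2$ are identical (all tiny items have the same size $1/n^3$, and the single correction item is hit with probability $o(1)$). Therefore the total variation between the algorithm's output distributions on $D_1$ and $D_2$ is $o(1)$, so it cannot produce a $(2-\epsilon)$-approximation on both with constant success probability. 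Yao's principle then gives the desired $\Omega(n/f(n))$ lower bound on randomized query complexity, which in turn lower-bounds running time.

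The main obstacle will be the bookkeeping needed to make the sum \emph{exactly} $f(n)$ while keeping the tiny-item distribution identical across $L_1$ and $L_2$; the correction-item trick handles this, but one must check that adjusting a single item's size by $O(1/n)$ changes neither the Opt computation (the corrected item still fits in the existing slack) nor the indistinguishability argument (it is queried with probability at most $q/n = o(1)$). Everything else is routine application of Yao's principle and elementary Chernoff/union-bound estimates.
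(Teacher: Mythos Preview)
Your proposal is correct and follows essentially the same approach as the paper: construct two $n$-item lists in $\sum(f(n))$ whose optimal packings differ by a factor approaching~$2$, with only $O(f(n))$ ``big'' items hidden among $n-O(f(n))$ identical tiny items, so that an algorithm drawing $o(n/f(n))$ uniform samples sees a big item only with probability $o(1)$ and hence cannot distinguish them. The paper's construction differs only cosmetically---it uses big items of size $\tfrac12+\delta$ in one list and $1-\tau$ (rather than $\tfrac12-\delta$) in the other, and chooses the tiny-item sizes to make the sums exactly $f(n)$ without a separate correction item---and it argues indistinguishability directly in the uniform-sampling model rather than via Yao's principle, but the underlying idea is the same.
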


\begin{proof} 
Since $f(n)$ is unbounded, assume $n$ is large enough such
that
\begin{eqnarray}
(f(n)+2)(2-\epsilon)<2(f(n)-2).\label{f(n)-bound-ineqn}
\end{eqnarray}
We design two input list of items.

The first list contains $m=2(f(n)-2))$ elements of size ${1\over
2}+\delta$, where $\delta={1\over 2(f(n)-2)}$. The rest $n-m$ items
are of the same size $\gamma={1\over n-m}=o(1)$. We have $m({1\over
2}+\delta)+(n-m)\gamma=2(f(n)-2)({1\over 2}+{1\over
2(f(n)-2)})+1=f(n)$. Therefore, the first list is a bin packing
problem is in $\sum(f(n))$.

 The second list contains $n-f(n)$ elements of size $\gamma$ and the rest $f(n)$ items are
of size equal to  $1-\tau$, where $\tau={(n-f(n))\gamma\over
f(n)}=o(1)$. We have $f(n)(1-\tau)+(n-f(n))\gamma=f(n)$. The second
list is also a bin packing problem is in $\sum(f(n))$.

Both $\gamma$ and $\tau$ are small. Packing the first list needs at
least $2(f(n)-2)$ bins. Packing the second list only needs at most
$f(n)+2$ bins since two bins of size one is enough to pack those
items of size $\tau$.

Assume that an algorithm only has computational time $o({n\over
f(n)})$ for computing $(2-\epsilon)$-approximation for bin packing
problems in $\sum(f(n))$. The algorithm has an $o(1)$ probability to
access at least one item of size at least ${1\over 2}$ in both
lists. Therefore, the two lists have the same output for
approximation by the same randomized algorithm. For the second list,
the output for the number of bins should be at most
$(f(n)+2)(2-\epsilon)$. By inequality~(\ref{f(n)-bound-ineqn}), it
is impossible to pack the first list items. This brings a
contradiction.
\end{proof}

\begin{corollary}
There is no $o({n\over \sum_{i=1}^n a_i})$ time randomized
approximation scheme algorithm for the bin packing problem.
\end{corollary}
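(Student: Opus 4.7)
The plan is to derive this corollary directly from Theorem~\ref{strong-lower-bound-theorem} by choosing an appropriate function $f(n)$. Suppose, for contradiction, that there exists a randomized approximation scheme $\mathcal{A}$ for bin packing whose running time on any instance $S = a_1,\ldots, a_n$ is $o(n / \sum_{i=1}^n a_i)$. As an approximation scheme, $\mathcal{A}$ in particular produces, say, a $(1 + \epsilon)$-approximation for some fixed $\epsilon \in (0,1)$ small enough that $1 + \epsilon < 2 - \epsilon'$ for some constant $\epsilon' > 0$; hence $\mathcal{A}$ is a $(2 - \epsilon')$-approximation algorithm.

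Next I would pick any nondecreasing unbounded function $f(n) = o(n)$, for example $f(n) = \log n$ or $f(n) = n^{1/2}$. Restricting $\mathcal{A}$ to instances in $\sum(f(n))$, its running time on these inputs is $o(n / f(n))$. But Theorem~\ref{strong-lower-bound-theorem} asserts that any randomized $(2 - \epsilon')$-approximation algorithm for bin packing problems in $\sum(f(n))$ must take $\Omega(n / f(n))$ time. This gives the desired contradiction, so no such approximation scheme exists.

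The only subtlety to be careful about is the interpretation of ``$o(n / \sum_i a_i)$'' as an upper bound on running time across all inputs: it means that for every constant $c > 0$ and for all sufficiently large instances, the running time is at most $c \cdot n / \sum_i a_i$. This uniform interpretation is exactly what makes the specialization to inputs with $\sum_i a_i = f(n)$ valid, and it is why a single application of the theorem suffices. There is no real obstacle here, since the hard work has been done in Theorem~\ref{strong-lower-bound-theorem}; the corollary is just a quantifier rearrangement.
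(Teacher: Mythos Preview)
Your proposal is correct and follows the same approach as the paper, which simply states that the corollary follows from Theorem~\ref{strong-lower-bound-theorem}. You have supplied more detail than the paper does (the choice of a concrete $f(n)$ and the contradiction argument), but the underlying idea is identical.
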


\begin{proof}
It follows from Theorem~\ref{strong-lower-bound-theorem}.
\end{proof}


\section{ Generalization of  the Deterministic
Algorithm}\label{deterministic-section}

In this section, we generalize the existing deterministic
algorithm~\cite{FernadezLueker81} to handle the bin packing problem
with multiple sizes of bins. The bin packing problem is under a more
general version that allows different size of bins with different
weights (costs). The results of this section are used as submodules
in both sublinear time algorithms and streaming algorithms.

\begin{definition}\label{basic-def}\scrod
\begin{itemize}
\item
For an item $y$ and an integer $h$, define $y^h$ to be $h$ copies of
item $y$.
\item
A type $T_i$ of a bin of size $s$ is represented by
$(a_1^{b_{1,i}},\ldots , a_t^{b_{t,i}})$, which satisfies
$\sum_{j=1}^t b_{j,i}a_i\le s$. A bin of type $T_i$ can pack
$b_{1,i}$ items of size $a_1,\ldots ,$, and $b_{t,i}$ items of size
$a_t$. We use $w_{T_i}$ to represent the weight of a bin of type
$T_i$.
\end{itemize}
\end{definition}

It is easy to see that an optimal bin packing with
$(c,\eta,k)$-related bins only uses bins with
$s_{i_1}<s_{i_2}<\ldots <s_{i_k}$ with $w_{i_1}<w_{i_2}<\ldots
<w_{i_k}$. The classical bin packing problem only has one kind of
bins of size $1$. It is the bin packing problem with the
$(1,1,1)$-related bins. In the rest of this paper, a bin packing
problem without indicating $(c,\eta,k)$-related bins means the
classical bin packing problem.

\begin{lemma}\label{lp-lemma}Assume that $c$, $\eta$, and $k$ are constants. Assume that $\delta$ is a constant.
Given a bin packing  problem with $(c,\eta,k)$-related bins for
$B=\{ a_1^{n_1},\ldots , a_m^{n_m}\}$ with each $a_i\ge \delta$,
there is a $m^{O({1\over \delta})}$ time algorithm to give a
solution $(x_1,\ldots , x_q)$ with at most
$Opt_{c,\eta,k}(B)+\sum_{i=1}^qw_{T_i}$, where $x_i$ is the number
of bins of type $T_i$, and  $q$ is the number of types to pack items
of sizes in $\{a_1,\ldots , a_m\}$ with $q\le km^{{1\over \delta}}$.
\end{lemma}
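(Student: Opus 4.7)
The plan is to adapt the Fernandez--Lueker linear programming approach to the multi-size, multi-cost bin setting. The crucial observation is that since each item has size at least $\delta$, every bin (of any of the $k$ allowed sizes) can hold at most $\lfloor 1/\delta\rfloor$ items, so there are only finitely many ``configurations'' (bin types). I would first enumerate all such types: for each of the $k$ bin sizes $s_\ell$, list every multiset $(a_1^{b_{1}},\dots,a_m^{b_{m}})$ with $b_1+\dots+b_m\le 1/\delta$ and $\sum_j b_j a_j\le s_\ell$. The number of such multisets is bounded by $\binom{m+\lfloor 1/\delta\rfloor}{\lfloor 1/\delta\rfloor}=m^{O(1/\delta)}$, so the total number of types is $q\le km^{1/\delta}$ (up to the hidden constant in the exponent). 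This enumeration takes time $m^{O(1/\delta)}$.

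Next I would formulate the natural configuration LP: variables $x_1,\dots,x_q\ge 0$ (one per type), constraints $\sum_{i=1}^q b_{j,i}\,x_i\ge n_j$ for each item size $a_j$, and objective $\min\sum_{i=1}^q w_{T_i}x_i$. The LP relaxation has value at most $Opt_{c,\eta,k}(B)$ because any optimal integer packing is a feasible solution. Since the LP has $q=m^{O(1/\delta)}$ variables and $m$ constraints, I can solve it (e.g.\ by Khachiyan's or Karmarkar's algorithm) in time polynomial in the input size of the LP, which is still $m^{O(1/\delta)}$.

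From the optimal LP solution $(x_1^*,\dots,x_q^*)$ I would produce an integer packing by setting $x_i:=\lfloor x_i^*\rfloor+1$ for every $i$ (equivalently: round down and then add one extra bin of each type to absorb the rounding losses). The cost of this integer solution is at most
\[
\sum_{i=1}^q w_{T_i}\lfloor x_i^*\rfloor + \sum_{i=1}^q w_{T_i} \;\le\; \sum_{i=1}^q w_{T_i}x_i^* + \sum_{i=1}^q w_{T_i} \;\le\; Opt_{c,\eta,k}(B)+\sum_{i=1}^q w_{T_i}.
\]
Feasibility holds since $\lfloor x_i^*\rfloor+1\ge x_i^*$ for every $i$, so each demand constraint $\sum_i b_{j,i}x_i\ge n_j$ remains satisfied.

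The main technical obstacle is really the bookkeeping rather than any deep idea: verifying the bound $q\le km^{1/\delta}$ (or at least $m^{O(1/\delta)}$) precisely, and making sure the LP has size polynomial in $m^{1/\delta}$ so that a polynomial-time LP solver yields the stated $m^{O(1/\delta)}$ runtime. Everything else is classical: the configuration LP dominates $Opt$, and rounding up by one per type yields the claimed additive loss $\sum_{i=1}^q w_{T_i}$.
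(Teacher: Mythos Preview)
Your proposal is correct and follows essentially the same argument as the paper: enumerate the at most $km^{1/\delta}$ configurations, solve the configuration LP, and round each coordinate up (the paper uses $\lceil x_i^*\rceil$ rather than $\lfloor x_i^*\rfloor+1$, but both give the same additive loss $\sum_i w_{T_i}$). The only cosmetic difference is that the paper does not spell out the LP size/runtime bookkeeping you mention, simply invoking a polynomial-time LP solver.
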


\begin{proof}
 Since
$a_i$ is at least $\delta$, the number of items in each bin is at
most ${1\over \delta}$. Therefore, the number of types of bins is at
most $km^{1\over \delta}$. Let $T_1,\ldots , T_q$ be the all of the
possible types of bins to pack the items of size $a_1,\ldots , a_m$.

Let $x_i$ be the number of bins with type $T_i$. We define the
linear programming conditions:
\begin{eqnarray}
\min \sum_{i=1}^q w_{T_i}x_i\ \ &&\ \mbox{ subject to } \label{lp-1}\\
&&\sum_{i=1}^q b_{j,i} x_i\ge n_j\ \mbox{ for } j=1,2,\ldots , m\\
&&x_i\ge 0. \label{lp-3}
\end{eqnarray}

After obtaining the optimal solution $(x_1^*,\ldots , x_q^*)$ of the
linear programming, the algorithm outputs $(x_1,\ldots ,
x_q)=(\ceiling{x_1^*}, \ldots , \ceiling{x_q^*})$. Since
$\ceiling{x_i^*}\le x_i^*+1$, the cost for $(x_1,\ldots , x_q)$ is
at most $Opt_{c,\eta,k}(B)+\sum_{i=1}^qw_{T_i}$.

\vskip 10pt

{\bf Algorithm Pack-Large-Items($c, \eta, k, B$)}

Input: parameters $c, \eta, k$ and a list $B=\{ a_1^{n_1}\ldots ,
a_{m}^{n_m}\}$ to be packed in $(c, \eta, k)$ related bins.

Output: an approximation for $Opt_{c,\eta,k}(B)$.

Steps:

\qquad Solve the linear programming~(\ref{lp-1})-(\ref{lp-3}) for
$x_1^*,\ldots , x_q^*$.

\qquad Let $x_i=\ceiling{x_i^*}$ for $i=1,\ldots , q$.

\qquad Output $(x_1,\ldots , x_q)$.

{\bf End of Algorithm}

\end{proof}

With a constant $\epsilon$ to control the approximation ratio, we
define the following constants for
Lemma~\ref{convert-approximation-lemma}. We will also define a
threshold $\delta$ to control the size of large items. Let

\begin{eqnarray}
\mu&:=&{\epsilon \delta \eta\over 15},\label{mu-def}\\
 \epsilon_1&:=&{\epsilon\over \epsilon+2},\ \ {\rm and}\label{epsilon1-def}\\
 m&:=&{18\over \delta\eta}\ceiling{\epsilon_1^{-2}}.\label{m-def}
\end{eqnarray}

\begin{lemma}\label{large-packing-lemma}
 Assume that $c$, $\eta$, and $k$ are positive
constants, and $\epsilon$ and $\delta$ are constants in $(0,1)$.
Assume that the input list is $S$ for bin packing problem with
$(c,\eta,k)$-related bins and the size of each item in $S$ is at
least $\delta$. Let $\epsilon$ be a constant in $(0,1)$. The
constants $\delta, \mu, \epsilon_1$, and $m$ are given according to
equations (\ref{mu-def}) to (\ref{m-def}).  Let $h=\floor{n\over
m}$. Then there exists an $O(n)$ time algorithm that gives an
approximation $app$ with $Opt_{c,\eta,k}(S)\le app\le
(1+\epsilon)Opt_{c,\eta,k}(S)$ for all large $n$, where $n=|S|$.
\end{lemma}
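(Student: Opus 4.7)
The plan is to apply the classical linear-grouping reduction of Fernandez and Lueker, specialized to the $(c,\eta,k)$-related setting and then fed to Lemma~\ref{lp-lemma}. First, using linear-time selection (which is $O(n)$ since $m$ is constant), I would locate the $jh$-th order statistic of the item sizes for $j=1,\ldots,m$, thereby partitioning the list into groups $G_1,G_2,\ldots,G_m$ in decreasing order of size, each containing $h=\floor{n/m}$ items (with the small remainder $n-mh<m$ absorbed into $G_m$). All of this costs $O(n)$ time.

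Second, I would form a rounded sub-instance $\tilde S$ by taking only $G_2\cup G_3\cup\ldots\cup G_m$ and inflating every item in $G_i$ to the maximum size inside $G_i$. Then $\tilde S$ has at most $m$ distinct sizes, each still $\ge\delta$, so Lemma~\ref{lp-lemma} produces in $m^{O(1/\delta)}=O(1)$ time a packing of $\tilde S$ whose cost exceeds $Opt_{c,\eta,k}(\tilde S)$ by at most a constant $q\le km^{1/\delta}$. Because each original item in $G_i$ (for $i\ge 2$) is no larger than its rounded counterpart, this packing is automatically valid for the original items of $G_2\cup\ldots\cup G_m$. The items of $G_1$ I would then place one per bin, adding at most $h$ bins of cost at most $1$ each.

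The central bookkeeping step is $Opt_{c,\eta,k}(\tilde S)\le Opt_{c,\eta,k}(S)$. This follows from a slot-by-slot shifting argument: match the $j$-th largest item of $\tilde S$ with the $j$-th largest item of $G_1\cup\ldots\cup G_{m-1}$, which by construction has weakly larger size. Consequently any optimal packing of $S$ restricted to $G_1\cup\ldots\cup G_{m-1}$ gives, slot by slot, a valid packing of $\tilde S$ of no larger cost. Combining with the previous step, the algorithm outputs a packing of total cost at most $Opt_{c,\eta,k}(S)+h+q$.

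Finally, to convert this additive slack to the multiplicative $(1+\epsilon)$ bound, I would use the hypothesis that every item has size at least $\delta$. The total item weight is then $\ge\delta n$, and since every bin has size at most $1$, at least $\delta n$ bins are required; each costs at least $\eta$, so $Opt_{c,\eta,k}(S)\ge\eta\delta n$. Hence the relative slack is at most $(h+q)/(\eta\delta n)\le 1/(m\eta\delta)+q/(\eta\delta n)$, and the definitions of $\mu$, $\epsilon_1$, and $m$ in (\ref{mu-def})--(\ref{m-def}) are tuned so that this is $\le\epsilon$ once $n$ is large enough to absorb the constant $q$. The main obstacle I anticipate is executing the shifting bijection cleanly at boundaries where adjacent groups share equal-sized items, together with tracking the exact constants from (\ref{mu-def})--(\ref{m-def}) to certify the $(1+\epsilon)$ ratio; the $O(n)$ running time itself is immediate since all grouping, bucketing, and final placement take $O(n)$ when $m$ is constant, and the LP inside Lemma~\ref{lp-lemma} is on a fixed-size instance.
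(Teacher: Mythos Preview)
Your proposal is correct and follows essentially the same linear-grouping-plus-LP approach as the paper: both select the $jh$-th order statistics in $O(n)$ time, reduce to an instance with $m$ distinct large sizes, invoke Lemma~\ref{lp-lemma}, and convert the resulting additive $O(h)$ slack to a multiplicative $(1+\epsilon)$ via the lower bound $Opt_{c,\eta,k}(S)\ge \eta\delta n$. The only cosmetic difference is that you use the classical one-sided Fernandez--Lueker rounding (round each group up and pack the top group separately, giving slack $h+q$), whereas the paper rounds to the pivot values $y_i$ and argues a symmetric two-sided bound $|Opt(L_0)-Opt(L_1)|\le 2h$; both yield the same conclusion.
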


\begin{proof}
Assume that $a_1\le a_2\le \ldots \le a_{n}$ is the increasing order
of all input elements at least $\delta$ with $n'=|S_{\ge \delta}|$.
 Let $L_0=a_1\le a_2\le \ldots \le a_n$. We partition
$a_1\le a_2\le \ldots \le a_n$ into $A_1y_1A_2y_2\ldots  A_m y_mR$
such that each $A_i$ has exactly $h-1$ elements and $R$ has less
than $h$ elements.

Using algorithm the classical algorithm, we can find the $ih$-th
element $y_i$ each in $O(n)$ time.

Consider the bin packing  problems: $L_1= y_1^{h} y_2^{h}\ldots
y_m^{h}$. We show that there is a small difference between the
results of two bin packing  problems for $L_0$ and $L_1$.

1) Assume that $L_0$ has a bin packing  solution. It can be
converted into a solution for $L_1$ via an adaption to that of $L_0$
(see Definition~\ref{basic-def}) with a small number of additional
bins.

Use the lots for the elements between $y_i$ and $y_{i+1}$ in $L_0$
to store the elements of $y_i$s, there are at most $2h$ $y_i$s left.
Therefore, we only have at most $2h$ elements left. The number of
bins for packing those left items is at most $2h$, which cost at
most $2h$ since $1$ is the maximal cost of one bin.

2) Assume that $L_1$ has a bin packing  solution. It can be
converted into a solution for $L_0$ with  a small number of
additional bins.

We use the lots for $y_i$ to store the elements between $y_{i-1}$
and $y_i$. We have at most $2h$ elements left, which cost at most
$2h$ since $1$ is the maximal cost of one bin.

The optimal number bins $Opt_{c,\eta,k}(L_0)$ for packing $L_0$ is
at least $mh\delta$, which have cost at least $mh\delta\eta$.
Therefore, we have
\begin{eqnarray}
Opt_{c,\eta,k}(L_0)\ge mh\delta\eta \label{L0-lower-bound-ineqn0}
\end{eqnarray}
Let $App(L_0)$ be an approximation for $L_0$ and $App(L_1)$ be an
approximation for $L_1$. We can obtain an
$(1+\epsilon/2)$-approximation $App(L_1)$ for packing $L_1$ by
Lemma~\ref{lp-lemma}. We have that

\begin{eqnarray*}
App(L_0)&=& App(L_1)+2h\ \ \label{L1-L0-app-eqn}\\
&\le&(1+\epsilon/2)Opt_{c,\eta,k}(L_1)+2h\\
&\le&((1+{\epsilon/2})(Opt_{c,\eta,k}(L_0)+2h)+2h\\
&\le&((1+{\epsilon/2})Opt_{c,\eta,k}(L_0)+6h\\
&\le&(1+\epsilon/2)Opt_{c,\eta,k}(L_0)+Opt_{c,\eta,k}(L_0)({6h\over mh\delta\eta})\ \ ({\rm by\ inequality}~(\ref{L0-lower-bound-ineqn0}))\\
&=&(1+\epsilon/2)Opt_{c,\eta,k}(L_0)+Opt_{c,\eta,k}(L_0){6\over m\delta\eta}\\
&\le&(1+\epsilon/2)Opt_{c,\eta,k}(L_0)+Opt_{c,\eta,k}(L_0)(\epsilon/2)\ \ \ \ \ \mbox{(by\ equations~(\ref{mu-def})\ to\ (\ref{m-def}).)}\\
&\le&(1+{\epsilon})Opt_{c,\eta,k}(L_0).
\end{eqnarray*}

By the analysis at case 2), if $App(L_1)\ge Opt_{c,\eta,k}(L_1)$,
we also have that the cost $App(L_1)+2h$ is enough to pack all items
in $L_0$. Therefore,
\begin{eqnarray}
App(L_0)\ge Opt_{c,\eta,k}(L_0). \label{L0-lowerbound-ineqn}
\end{eqnarray}

For a bin $b_i$, let $l(b_i)$ be the sum of sizes of items packed in
it.

\vskip 10pt

 {\bf Algorithm Packing($L_0$)}

Input: a list $L_0:=\{ a_1\ldots  a_{m}\}$

Output: an approximation for $Opt_{c,\eta,k}(L_0)$.

Steps:

\qquad Find the $ih$-th element $y_i$ in $L_0$ for $i=1,\ldots ,m$.

\qquad Let $L_1:= y_1^{h} y_2^{h}\ldots  y_m^{h}$.

\qquad Let $(x_1,\ldots , x_q):=$Pack-Large-Items($1, 1, 1, L_1$)
(see Lemma~\ref{lp-lemma}).

\qquad Let $App(L_1):=\sum_{i=1}^q w_{T_i} x_i$.

\qquad Convert $App(L_1)$ to $App(L_0)$ according to
equation~(\ref{L1-L0-app-eqn}).

\qquad\label{bin-list}
 Let $B=b_1,\ldots , b_u$ be the list of bins used for packing
(each $b_i$ has $l(b_i)$ available).

\qquad Output $App(L_0),$ and list $B$ of bins.

{\bf End of Algorithm}

We note that the list of bins $b_1,\ldots , b_u$ with their used
space $l(b_i)$ for each bin can be computed in $O(n)$ time from the
conversion based on $(x_1,\ldots , x_q)$ for $q$ types $T_1,\ldots ,
T_q$.

\end{proof}

\begin{lemma}[\cite{FernadezLueker81}]\label{linear-time-packing}
Let $\beta$ be a constant in $(0,1)$. Then there exists an $O(n)$
time algorithm that gives an approximation $app$ for packing $S$
with $Opt(S)\le app\le (1+\beta)Opt(S)+1$  for all large $n$.
\end{lemma}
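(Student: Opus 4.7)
The plan is to reduce to Lemma~\ref{large-packing-lemma} by separating items by size. Set a threshold $\delta := \beta/(1+\beta) \in (0,1)$, chosen so that $1/(1-\delta) \le 1+\beta$. Partition the input list $S$ into the large items $S_{\ge\delta} = \{a_i : a_i \ge \delta\}$ and the small items $S_{<\delta} = \{a_i : a_i < \delta\}$. Both lists can be built in $O(n)$ time by a single linear scan.

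Next, invoke Lemma~\ref{large-packing-lemma} with the classical parameters $c=\eta=k=1$ and approximation parameter $\epsilon := \beta/2$ on $S_{\ge\delta}$. This yields, in $O(n)$ time, a number $u$ of bins $b_1,\ldots,b_u$ together with their occupied amounts $l(b_i) \in [\delta,1]$, satisfying $Opt(S_{\ge\delta}) \le u \le (1+\beta/2)\,Opt(S_{\ge\delta})$. Now pack $S_{<\delta}$ by a straightforward next-fit pass: maintain a current bin pointer starting at $b_1$; for each small item $a$, if it fits in the remaining space of the current bin, place it there, otherwise advance to the next bin (or open a fresh bin if all $b_1,\ldots,b_u$ are exhausted) and try again. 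Since each small item causes at most one bin advancement and the advancement pointer is monotone, this runs in $O(n)$ time overall. Let $app$ denote the total number of bins used after packing the small items.

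For the analysis, distinguish two cases. \emph{Case 1:} no bin is opened beyond $b_1,\ldots,b_u$. Then $app = u \le (1+\beta/2)\,Opt(S_{\ge\delta}) \le (1+\beta/2)\,Opt(S) \le (1+\beta)\,Opt(S)$. \emph{Case 2:} at least one new bin is opened. Whenever the current bin pointer advances, the small item that triggered the advance has size $<\delta$ but did not fit; hence the bin we leave has occupied space $>1-\delta$. Therefore every bin used except possibly the last one is filled to more than $1-\delta$, so
\begin{equation*}
\sum_{i=1}^n a_i \;>\; (app-1)(1-\delta),
\end{equation*}
and combined with $Opt(S)\ge \sum_{i=1}^n a_i$ and $1/(1-\delta)\le 1+\beta$ we obtain $app \le Opt(S)/(1-\delta)+1 \le (1+\beta)\,Opt(S)+1$. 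The lower bound $app \ge Opt(S)$ is immediate because the algorithm outputs a feasible packing.

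The only real subtlety is ensuring the small-item pass is genuinely $O(n)$; this is handled by using a single monotone bin pointer so that each item and each bin is touched $O(1)$ times. Everything else reduces to the already-established $O(n)$ bound in Lemma~\ref{large-packing-lemma}, and the ``for all large $n$'' caveat is inherited verbatim from that lemma.
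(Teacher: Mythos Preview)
Your proof is correct and follows essentially the same approach as the paper: threshold at some $\delta=\Theta(\beta)$, apply Lemma~\ref{large-packing-lemma} to $S_{\ge\delta}$, greedily fill the small items into the resulting bins, and split into the no-overflow case (bounded by $(1+\beta)Opt(S_{\ge\delta})$) versus the overflow case (bounded via the density argument $\sum a_i>(app-1)(1-\delta)$). One tiny slip: a single small item can in fact trigger more than one advancement if several consecutive bins $b_i$ are already nearly full from large items, but your monotone-pointer observation alone already gives the $O(n)$ bound, so the conclusion stands.
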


\begin{proof}  The bin
packing problem is the same as the regular bin packing problem that
all bins are of the same size $1$. The problem is to minimize the
total number bins to pack all items.  We consider the approximation
to pack the small items after packing large items.

Assume that the input list is $S$ for bin packing problem. Let $S_{<
\delta}$ be the items of size less than $\delta$, and $S_{\ge
\delta}$ be the items of size at least $\delta$. Let $\delta$ be a
constant with $\delta\le {\beta\over 4}$.


\vskip 10pt

{\bf Algorithm Linear-Time-Packing$(n,S)$}

Input: A list of items $S=a_1\ldots  a_n$ and its number of items
$n$.

Output: an approximation for $Opt(S)$.

Steps:
\begin{enumerate}[1.]
\item
\qquad Let $App(S_{\ge \delta})$ and the bin list $b_1,\ldots , b_u$
be the output from calling Packing$(S_{\ge \delta})$ (see
Lemma~\ref{large-packing-lemma}).

\item\label{ass-s1'}
\qquad for $i=1$ to $u$
\item
\qquad \qquad If $l(b_i)\le 1-\delta$

\item
 \qquad \qquad Fill items from $S_{<\delta}$ into $b_i$ until less than $\delta$
 space left in $b_i$ or all items in $S_{<\delta}$ are packed.

\item
\qquad If there are some items of size less than $\delta$ left
\item
\qquad Then pack them into some bins so that at most one bin having
more than $\delta$ space used.

\item
\qquad Output the total number of bins used.

\end{enumerate}
{\bf End of Algorithm}

\vskip 10pt

Assume that an optimal solution of a bin packing  problem has two
types of bins. Each of the first type contains at least one item of
size at least $\delta$, and each of the second type only contain
items of size less than $\delta$. Let $V_1$ be the set of first type
bins, and $V_2$ be the set of all second type bins. Let
$U=App(S_{\ge \delta})$ be an $(1+\beta)$-approximation for packing
the first type of items. We have $|U|\le (1+\beta)|V_1|$.


Fill all items into those bins in $U$ so that each bin has less than
$\delta$ left. Put all of the items less than $\delta$ into some
extra bins, and at most one of them has more than $\delta$ space
left.

Case 1. If $U$ can contain all items, we have that $|U|\le
(1+\beta)|V_1|\le (1+\beta)|V_1\cup V_2|=(1+\beta)Opt(S)$.

Case 2. There is a bin beyond those in $U$ is used. Let $U'$ be all
bins without more than $\delta$ space left. We have that $|U'|\le
{|V_1\cup V_2|\over (1-\delta)}\le (1+\beta)|V_1\cup
V_2|=(1+\beta)Opt(S)$. Therefore, the approximate solution is at
most $(1+\beta)|V_1\cup V_2|+1=(1+\beta)Opt(S)+1$.

\end{proof}

\section{ Randomized Offline Algorithm}\label{rand-alg-sec}

In this section, we present sublinear time approximation schemes in
the offline model.

\subsection{Selecting Items from A List}\label{select-crucial-sec}

In this section, we show how a randomized algorithm to select some
crucial items from a list. Those elements are used for converting
the packing large items into linear programming as described in
Section~\ref{deterministic-section}.

In order to let linear programming have a small number of cases, the
$ih$-th elements are selected for $i=1,2,\ldots , m$, where the
large items are grouped into $m$ groups with $h$ items each. The
approximate $ih$-th elements (for $i=1,\ldots , m$) have similar
performance as the exact $ih$-th elements in the linear programming
method. The approximate $ih$-th elements (for $i=1,\ldots , m$) can
be obtained via sampling small number of items. The $ih$-th element
among the large items is approximated by the $ih$-th element among
the random samples from large items in the input list. The detail of
the algorithm is given at Select-Crucial-Items(.).

For a finite  set $A$, let $|A|$ be the number of elements in $A$.
For a list $L$ of items $a_1,\ldots, a_n$, let $|L|=n$.

\begin{definition}\label{rank-def}
Assume that $L=a_1,\ldots , a_n$ is the list of real numbers, and
$x$ is an integer.
\begin{itemize}
\item
Define  $\Rank(x, L)$ in $a_1,\ldots , a_n$ to be the interval
$[a,b]$  such that $|\{i: a_i<x\}|=a-1$ and $|\{i: a_i\le x \}|=b$.
Define $\minRank(x,L)$ to be $a$ and $\maxRank(x,L)$ to be $b$.

\item
Define  $\Rank_{\delta}(x, L)$ in $a_1,\ldots , a_n$ to be the
interval $[a,b]$  such that $|\{i: a_i<x\ and \ a_i\ge\delta
\}|=a-1$ and $|\{i: a_i\le x \ and \ a_i\ge\delta\}|=b$. Define
$\minRank_{\delta}(x,L)$ to be $a$ and $\maxRank_{\delta}(x,L)$ to
be $b$.

\item
$L[s,t]=a_s,a_{s+1},\ldots , a_{t}$ for $0<s\le t\le n$.
\end{itemize}
\end{definition}

\begin{definition}\label{s-<delta-def}
Assume that $S$ is a list of items for a bin packing problem and
$\delta$ is a real number. Define {\it $S_{< \delta}$} to  be the
sublist of the items of size less than $\delta$ in $S$, and {\it
$S_{\ge \delta}$} to be the sublist of the items of size at least
$\delta$ in $S$.
\end{definition}

By the definitions~\ref{rank-def} and \ref{s-<delta-def}, we have
\begin{eqnarray}
\minRank_{\delta}(x,L)&=&\minRank(x,L_{\ge \delta}),\\
\maxRank_{\delta}(x,L)&=&\maxRank(x,L_{\ge \delta}),\ \ \
\mbox{and}\\
\Rank_{\delta}(x,L)&=&\Rank(x,L_{\ge \delta}).
\end{eqnarray}

Let $m$ be a parameter at most $n$ and let
\begin{eqnarray}
h=\floor{n\over m}. \label{h-def-eqn}
\end{eqnarray}

 Let the sorted input list is partitioned into
$K_1 K_2\ldots  K_mR$ such that $|K_1|=|K_2|=\ldots  =|K_m|=h$, and
$0\le |R|<h$.

 \vskip 10pt

{\bf Algorithm Select-Crucial-Items($m, \alpha, \mu, X$)}

Input: two constants $\alpha$ and $\mu$ in $(0,1)$, an integer
parameter $m$ at least $2$, and a list $X=x_1, x_2,\ldots , $ is  a
finite list of random elements in $A$.

Steps:

\begin{enumerate}[1.]
\item\label{gamma-assign-in-Select-Crucial}
\qquad Select $\gamma={\mu \over 4m}$.
\item\label{set-c0-in-Select-Crucial-Items}
\qquad Select constant $c_0$ and $u=\ceiling{{c_0\log m\over
\gamma^2}}$ such that $2me^{-{\gamma^2 u\over 3}}<\alpha$ and $3\le
\gamma u$.


\item
\qquad If $v<u$ or $|X|<u$, then output $\emptyset$ and stop the
algorithm.

\item\label{set-pi-in-Select-Crucial-Items}
\qquad Let $p_i:={i\over m}$ for $i=1,\ldots , m$.
\item
\qquad Let $y_i$ ($i=1,\ldots , m$) be the least element $x_j$ such
that $|\{t: x_t \ \mbox{is\ in\ }\ X[1,u] \ \mbox{and}\ x_t\le
x_j\}|\ge \ceiling{{p_i}u}$.

\item
\qquad Output $(y_1,\ldots , y_m)$.
\end{enumerate}

{\bf End of Algorithm} \vskip 10pt

Lemma~\ref{select3-lemma} shows the performance of the algorithm
Algorithm Select-Crucial-Items(.). It is a step to convert the step
for packing large items into a dynamic programming method. When the
input list of items is $S$, the list $A$ in
Lemma~\ref{select3-lemma} is the sublist $S_{\ge \delta}$ of all
items of $S$ with size at least $\delta$, which will be specified in
the full algorithm. The random items $X$ is generated from the
subset of all random items of sizes at least $\delta$ in a set of
random items in $S$.

\begin{lemma}\label{select3-lemma} Let $\mu$ and $\alpha$ be positive
constants in $(0,1)$. Assume that $A$ is an input list of $n$
numbers of size at least $\delta$ with $n\ge {3(m+1)^2\over\mu}$.
Then the algorithm Select-Crucial-Items$(.)$ runs in   $O({m^2(\log
m)^2)\over \mu^2})$ time such that given a list $X$ of at least
${c_1m^2\log m\over ^2}$ random elements from $A$, it generates
elements $y_1\le \ldots \le y_m$ from the input list such that
$\prob[\Rank(y_i,A)\cap [ih-\mu h,ih+\mu h]]=\emptyset$ for at least
one $i\in \{1,\ldots , m\}]\le \alpha$, where $c_1=16c_0$, and
$c_0$ is the constant defined in Select-Crucial-Items(.), and $m$ is
an integer at most $n$.
\end{lemma}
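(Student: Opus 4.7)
The plan is to recognize $y_i$ as the empirical $(i/m)$-quantile of the first $u$ uniform samples from $A$, and to show by Chernoff concentration that its true rank in $A$ lies within a window of width $2\mu h$ around $ih$. Sort the elements of $A$ as $a_{(1)}\le\cdots\le a_{(n)}$, and let $R_t\in\{1,\ldots,n\}$ denote the $A$-position of sample $x_t$; since $X$ is a list of uniform samples from $A$, the $R_t$ are i.i.d.\ uniform. A short argument handling possible ties in $X[1,u]$ shows that $y_i$ is exactly the $\lceil p_iu\rceil$-th order statistic of $\{x_1,\ldots,x_u\}$, so if $R_{(1)}\le\cdots\le R_{(u)}$ denote the sorted sample positions, then $R_{(\lceil p_iu\rceil)}\in\Rank(y_i,A)$. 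Hence it suffices to prove, for each fixed $i\in\{1,\ldots,m\}$,
\[
\prob\bigl[R_{(\lceil p_iu\rceil)}\notin[ih-\mu h,\,ih+\mu h]\bigr]\le \alpha/m,
\]
and then close with a union bound over $i$ and the two sides of the window.

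For a fixed $i$, the event $R_{(\lceil p_iu\rceil)}<ih-\mu h$ occurs iff at least $\lceil p_iu\rceil$ of the $R_t$'s lie strictly below $ih-\mu h$; each $R_t$ does so with probability at most $(ih-\mu h)/n\le p_i-\mu/m=p_i-4\gamma$ (using $h\le n/m$ and $\gamma=\mu/(4m)$). Applying the additive Chernoff bound of Corollary~\ref{chernoff-lemma-a}(i) to this count, whose mean is at most $(p_i-4\gamma)u$ while the required threshold is $p_iu$, gives an additive deviation of at least $4\gamma u$ and a failure bound of $e^{-u(4\gamma)^2/3}\le e^{-\gamma^2u/3}$. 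The symmetric event $R_{(\lceil p_iu\rceil)}>ih+\mu h$ is handled by Corollary~\ref{chernoff-lemma-a}(ii) in exactly the same way. Summing the $2m$ bad events yields total failure probability $2me^{-\gamma^2u/3}<\alpha$, precisely matching the requirement on $u$ imposed in line~\ref{set-c0-in-Select-Crucial-Items}.

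The running time is dominated by reading and sorting the first $u=O(m^2\log m/\mu^2)$ samples in $O(u\log u)=O(m^2(\log m)^2/\mu^2)$ time; the $m$ quantile positions $\lceil p_iu\rceil$ are then read off the sorted array in $O(m)$ further time. The main obstacle I foresee is the bookkeeping at the two rounding interfaces: the ceiling $\lceil p_iu\rceil$ versus $p_iu$, and the floor $h=\lfloor n/m\rfloor$ versus $n/m$, each introduce $O(1)$ discrepancies that must be absorbed into the additive Chernoff gap. The hypothesis $n\ge 3(m+1)^2/\mu$ is exactly what makes $\mu h$ much larger than these $O(1)$ corrections, so the effective additive deviation remains comfortably above $\gamma u$ after rounding and the Chernoff bound applies as stated.
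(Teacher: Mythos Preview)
Your proposal is correct and follows essentially the same approach as the paper: both argue that $y_i$ is the $\lceil p_iu\rceil$-th order statistic of the sample, bound the two tail events $\{\Rank(y_i,A)\subset(-\infty,ih-\mu h)\}$ and $\{\Rank(y_i,A)\subset(ih+\mu h,\infty)\}$ via the additive Chernoff bound of Corollary~\ref{chernoff-lemma-a} applied to the count of samples falling on one side of the target rank, and finish with a union bound over the $2m$ events to match the condition $2me^{-\gamma^2u/3}<\alpha$. Your factoring $(i\pm\mu)h/n$ with $h/n\le 1/m$ is exactly the paper's inequality~(\ref{h-m-n-ineqn}); the only place where the two sides are not symmetric---and where the hypothesis $n\ge 3(m+1)^2/\mu$ together with the auxiliary condition $\gamma u\ge 3$ from line~\ref{set-c0-in-Select-Crucial-Items} actually enters---is the upper tail, since there one needs the \emph{lower} bound $h/n\ge 1/m-1/n$, which introduces the $O(m/n)$ correction that the large-$n$ hypothesis absorbs (this is the paper's chain~(\ref{remove-m-ineqn})--(\ref{remove-m-ineqn5})); you flag this correctly in your final paragraph.
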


\begin{proof}The algorithm  probabilistic
performance is analyzed with Chernoff bounds. Note that the number
of items $n$ in $A$ is not an input of this algorithm. We only use
it in the analysis, but not in the algorithm. Without loss of
generality, we assume $|X|=u$, where $u$ is defined in statement
\ref{set-c0-in-Select-Crucial-Items} in the Algorithm
Select-Crucial-Items(.).

 According to the algorithm $u=\ceiling{{c_0\over \gamma^2}\log
m}=\ceiling{16c_0m^2\log m\over \mu^2}=\ceiling{c_1m^2\log m\over
\mu^2}$.  We assume the number of random items in $X$ is at least
$u$.  By the equation~(\ref{h-def-eqn}) and the fact $m\le n$, we
have
\begin{eqnarray}
h&\le& {n\over m}\le h+1\le 2h, \ \ \ \mbox{and}
\label{range-n-m-h-ineqn}\\
 {h\over n}&\le& {1\over
m}.\label{h-m-n-ineqn}
\end{eqnarray}
By statement~\ref{gamma-assign-in-Select-Crucial} in
Select-Crucial-Items(.) and inequality (\ref{range-n-m-h-ineqn}), we
have ${n\over m}\le 2h$ and
\begin{eqnarray}
2\gamma \le {\mu \over 2m} \le  {\mu h \over  n}.\label{mu3-eqn}
\end{eqnarray}

Assume $\maxRank(y_i,A)<ih-\mu h$. We have that
\begin{eqnarray}
{\maxRank(y_i,A)\over n}&<&{ih-\mu h\over n}\\
&=&{ih\over n}-{\mu h\over n}\\
&\le& {i\over m}-{\mu h\over n}\ \ \ \mbox{(by inequality~(\ref{h-m-n-ineqn}))}\\
&\le& p_i-{\mu h\over n}.\label{maxRank-ineqn}
\end{eqnarray}

Let $p_i':=p_i-{\mu h \over n}> {\maxRank(y_i,A)\over n}$ (by
inequality (\ref{maxRank-ineqn})). By
Corollary~\ref{chernoff-lemma-a}, with probability at most
$e^{-{\gamma^2 u\over 3}}$, we have $ |\{j: x_j\in X[1,u]\ \mbox{
and }\ x_j\le y_i\}|$ to be at least
\begin{eqnarray*}
({\maxRank(y_i,A)\over n}+\gamma) u&<&p_i' u+\gamma u\\
&=&(p_i-({\mu h \over n}))u+\gamma u\\
&=& p_i u-({\mu h \over
n}-\gamma )u\\
&\le& p_i u-\gamma u \ \ (\mbox{by \ inequality}~(\ref{mu3-eqn}))\\
&\le& \ceiling{p_i u}
\end{eqnarray*}

Assume $\minRank(y_i,A)>ih+\mu h$. We have that
\begin{eqnarray}
{\minRank(y_i,A)\over n}&>&{ih+\mu h\over n}\\
&=&{ih\over n}+{\mu h\over n}\ \ \ \label{minRank-ineqn1}\\
&\ge& {i\over m}-{i\over n}+{\mu h\over n}\ \ \ \mbox{(by\ equation\ (\ref{h-def-eqn}))}\label{minRank-ineqn2}\\
&\ge& p_i-{i\over n}+{\mu h\over n}. \label{minRank-ineqn3}
\end{eqnarray}

Note that the transition from inequality~(\ref{minRank-ineqn1}) to
inequality~(\ref{minRank-ineqn2}) is due to equation
(\ref{h-def-eqn})), which implies $h\ge {n\over m}-1$ and ${h\over
n}\ge {1\over m}-{1\over n}$.

 Let $p_i'':=p_i-{i\over n}+{\mu h\over n}< {\minRank(y_i,A)\over n}$ (by inequality (\ref{minRank-ineqn3}). Note that $p_i$ is defined at
line~\ref{set-pi-in-Select-Crucial-Items} in Algorithm
Select-Crucial-Items(.). By Lemma~\ref{chernoff-lemma-a}, with
probability at most $P_{1,i}=e^{-{\gamma^2 u\over 3}}$, we have $
|\{j: x_j\in X[1,u]\ \mbox{ and }\ x_j\le y_i\}|$ to be at most

\begin{eqnarray}
({\maxRank(y_i,A)\over n}-\gamma) u&\ge&p_i'' u-\gamma u\\
 &=&(p_i-{i\over n}+{\mu h\over n})u-\gamma u\\
&\ge& p_i u+({\mu h \over n}-{i\over n}-\gamma )u\\
&\ge& p_i u+({\mu h \over n}-{m\over n}-\gamma )u\\
&\ge& p_i u+({\mu h \over 3n}-{m\over n})u+({2\mu h \over 3n}-\gamma )u\label{remove-m-ineqn}\\
&\ge& p_i u+0+({2\mu h \over 3n}-\gamma )u\label{remove-m-ineqn2}\\
&\ge& p_i u+({4\gamma\over 3}-\gamma )u\label{remove-m-ineqn3}\\
&\ge& p_i u+{\gamma\over 3}u\label{remove-m-ineqn4}\\
&\ge& p_i u+1\label{remove-m-ineqn5}\\
&>&\ceiling{p_i u}.
\end{eqnarray}

Note that $i\le m$.  The transition from
inequality~(\ref{remove-m-ineqn}) to
inequality~(\ref{remove-m-ineqn2}) is due to the condition $n\ge
{3(m+1)^2\over\mu}$, which implies that $h\ge  {n\over m}-1\ge
{3(m+1)\over\mu}-1\ge {3m\over\mu}+{3\over \mu}-1>{3m\over\mu}$. The
transition from inequality~(\ref{remove-m-ineqn2}) to
inequality~(\ref{remove-m-ineqn3}) is because of
inequality~(\ref{mu3-eqn}). The transition from
inequality~(\ref{remove-m-ineqn4}) to inequality
(\ref{remove-m-ineqn5}) is due to the setting in
statement~\ref{set-c0-in-Select-Crucial-Items} in
Select-Crucial-Items(.).

Therefore, with probability at most
$\sum_{i=1}^m(P_{i,1}+P_{i,2})\le 2me^{-{\gamma^2 u\over
3}}<\alpha$, $\Rank(y_i,A)\cap [ih-\mu h,ih+\mu h]=\emptyset$ for at
least one $i\in \{1,\ldots , m\}$.

\end{proof}

 \subsection{Packing Large Items and Small Items}

 In this section, we show how to pack large items from sampling
 items in the input list. Then we show how to pack small items after packing large items.

\begin{lemma}\label{convert-approximation-lemma} Assume that $c$, $\eta$, and $k$ are positive
constants, and $\epsilon$ and $\delta$ are constants in $(0,1)$ and
$\theta$ is a constant in $[0,1)$. Assume that the input list is $S$
for a bin packing problem with $(c,\eta,k)$-related bins.  The
constants $\delta, \mu, \epsilon_1$, and $m$ are given according to
equations (\ref{mu-def}) to (\ref{m-def}).  Assume that $n_{\ge
\delta}'$ is an approximation of $|S_{\ge \delta}|$ satisfying
\begin{eqnarray}
&&(1-\theta)|S_{\ge \delta}|\le n_{\ge \delta}'\le (1+\theta)|S_{\ge \delta}|,\label{tight-approx-theta-ineqn}\\
&&{36\theta \over \delta\eta}\le \epsilon, \ \mbox{and}\label{small-theta-ineqn}\\
&& {\theta \floor{|S_{\ge \delta}|\over m}}\ge 1\ \  \mbox{if}\
\theta>0. \ \ \ \label{large-n-n'-ineqn}
\end{eqnarray}
Let $h=\floor{|S_{\ge \delta}|\over m}$,  $h'=\floor{n_{\ge
\delta}'\over m}$, and $S'$ be a list of items of size less than
$\delta$. Assume that we have the following inputs available:
\begin{itemize}
\item
Let $y_1',\ldots , y_m'$ be a list of items from $S_{\ge \delta}$
such that $\Rank(y_i',S_{\ge \delta})\cap [ih-\mu h, ih+\mu
h]\not=\emptyset$ for $i=1,2,\ldots , m$
\item
An approximate solution for bin packing  with items in $B=\{y_1'^{h'
},\ldots , y_m'^{h'}\}\cup S'$ in $(c,\eta,k)$-related bins with
cost at most $(1+\epsilon)Opt_{c,\eta,k}(B)$
\end{itemize}
Then
there Packing-Conversion(.) is an $O(1)$ time algorithm  that gives
an approximation $app$ with $Opt_{c,\eta,k}(S_{\ge \delta}\cup
S')\le app\le (1+5\epsilon)Opt_{c,\eta,k}(S_{\ge \delta}\cup S')$.
\end{lemma}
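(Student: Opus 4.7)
The plan is to follow the swap-and-round template of Lemma~\ref{large-packing-lemma}, adapted to the softer kind of rounding forced here. In Lemma~\ref{large-packing-lemma} the representative $y_i$ was exactly the $ih$-th element of the sorted input and the multiplicity was exactly $h$, so a packing of $L_1=y_1^h\cdots y_m^h$ could be converted to a packing of $L_0$ at an extra cost of only $2h$ unit bins. Here the representative $y_i'$ is only approximately the $ih$-th element of $S_{\ge\delta}$ (rank slack $\mu h$) and the multiplicity $h'$ differs from $h$ by as much as $\theta h+1$ (multiplicity slack from hypothesis (\ref{tight-approx-theta-ineqn})), so the conversion must absorb three simultaneous sources of slack: the rank slack $\mu h$, the multiplicity slack $|h-h'|$, and the usual remainder $R$ of fewer than $h$ elements left over from the block partition of $S_{\ge\delta}$.

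Packing-Conversion would simply treat the supplied $(1+\epsilon)$-approximate packing of $B=\{y_1'^{h'},\ldots,y_m'^{h'}\}\cup S'$ as a template. The $S'$-part is already a packing of the small items. For the large items, sort $S_{\ge\delta}$ as $a_1\le\cdots\le a_n$, partition it into blocks $K_1,\ldots,K_m,R$ with $|K_i|=h$, and greedily re-interpret each slot reserved for $y_i'$ as a slot for an actual item of $S_{\ge\delta}$ of size $\le y_i'$. The rank hypothesis $\Rank(y_i',S_{\ge\delta})\cap [ih-\mu h,ih+\mu h]\ne\emptyset$ guarantees that every item of rank $\le ih-\mu h$ is $\le y_i'$ and is therefore a legal candidate for the slots of $y_1',\ldots,y_i'$; a sweep over $i=1,\ldots,m$ places all but at most $O(\mu h)$ items per block together with at most $|h-h'|$ slot mismatches per block and the $<h$ items of $R$. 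Dumping each unplaced item into its own unit-cost bin yields a full packing of $S_{\ge\delta}\cup S'$ of cost at most
\[(1+\epsilon)\,Opt_{c,\eta,k}(B)+O\bigl((\mu+\theta)n+h\bigr).\]
A symmetric swap — take an optimal packing of $S_{\ge\delta}\cup S'$, bump every item of $K_i$ up to size $y_i'$, and absorb boundary misfits in $O((\mu+\theta)n+h)$ extra unit bins — gives $Opt_{c,\eta,k}(B)\le Opt_{c,\eta,k}(S_{\ge\delta}\cup S')+O((\mu+\theta)n+h)$.

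Combining the two bounds, the output $app$ satisfies
\[app\le (1+\epsilon)\,Opt_{c,\eta,k}(S_{\ge\delta}\cup S')+O\bigl((\mu+\theta)n+h\bigr).\]
Every item of $S_{\ge\delta}$ has size $\ge\delta$ and every bin cost $\ge\eta$, so $Opt_{c,\eta,k}(S_{\ge\delta}\cup S')\ge\eta\delta|S_{\ge\delta}|=\eta\delta n$, and the additive slop becomes an $O(\mu/(\eta\delta)+\theta/(\eta\delta)+1/(m\eta\delta))$ fraction of the optimum. The choices $\mu=\epsilon\delta\eta/15$, $m=(18/\delta\eta)\lceil\epsilon_1^{-2}\rceil$ from (\ref{mu-def})--(\ref{m-def}) and hypothesis (\ref{small-theta-ineqn}) are clearly engineered so that this fraction is at most $4\epsilon$, yielding $app\le(1+5\epsilon)\,Opt_{c,\eta,k}(S_{\ge\delta}\cup S')$. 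The $O(1)$ running time is automatic: by Lemma~\ref{lp-lemma} the input packing of $B$ is specified by $q\le km^{1/\delta}=O(1)$ type counts, and the conversion only relabels these counts and symbolically adds the extra unit bins.

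The main obstacle I expect is the careful two-sided accounting in the middle paragraph: one must confirm that in both directions of the swap, the rank slack $\mu h$, the multiplicity slack $|h-h'|\le\theta h+1$, and the remainder $R$ can be absorbed into a single clean additive term $O((\mu+\theta)n+h)$ with no double counting. Hypothesis (\ref{large-n-n'-ineqn}) is precisely what guarantees that the $+1$ rounding in $|h-h'|$ is dominated by $\theta h$ so the slack collapses to this clean form, and hypothesis (\ref{small-theta-ineqn}) together with the explicit values of $\mu$ and $m$ is what compresses the final relative error below $5\epsilon$.
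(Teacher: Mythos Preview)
Your proposal is correct and follows essentially the same approach as the paper: a two-sided swap between $B=\{y_1'^{h'},\ldots,y_m'^{h'}\}\cup S'$ and $S_{\ge\delta}\cup S'$, with additive slop $O((\mu+\theta)mh+h)$ in each direction, converted to relative error via the lower bound $Opt_{c,\eta,k}(S_{\ge\delta}\cup S')\ge mh\delta\eta$, and then the parameter choices (\ref{mu-def})--(\ref{m-def}) and hypothesis (\ref{small-theta-ineqn}) collapse the relative error below $5\epsilon$. The one structural difference is that the paper interposes an auxiliary list $L_0$ containing exactly $n_{\ge\delta}'$ large items (padded with size-$1$ items if $n_{\ge\delta}'>|S_{\ge\delta}|$), does the block-swap accounting between $L_1=B$ and $L_0$, and then separately bounds $|Opt_{c,\eta,k}(L_0)-Opt_{c,\eta,k}(L_*)|$ by the $\theta|S_{\ge\delta}|$ item-count discrepancy; this cleanly isolates the $\mu$-slack (rank error) from the $\theta$-slack (count error) at the price of one extra step, whereas you handle both slacks simultaneously in a single swap, which is more direct but makes the per-block bookkeeping you flag in your last paragraph slightly denser.
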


\begin{proof}
Assume that $a_1'\le a_2'\le \ldots \le a_{n_{\ge \delta}}'$ is the
increasing order of all input elements of size at least $\delta$
with $n_{\ge \delta}=|S_{\ge \delta}|$.  Let
\begin{eqnarray}
L_{*}=a_1'\le a_2'\le \ldots \le a_{n_{\ge \delta}}'\cup
S'.\label{L*-def-eqn}
\end{eqnarray}
Let
\begin{eqnarray}
L_0=a_1'\le a_2'\le \ldots \le a_{n_{\ge \delta}'}'\cup
S'.\label{L0-def-eqn}
\end{eqnarray}
Note that in the case $n_{\ge \delta}'>|S_{\ge \delta}|$, we let
$a_{|S_{\ge \delta}|+1}'=\ldots =a_{n_{\ge \delta}'}'=1$ in list
$L_0$.
 Partition $a_1'\le
a_2'\le \ldots \le a_{n_{\ge \delta}}'$ into $A_1y_1A_2y_2\ldots A_m
y_mR$ such that each $A_i$ has exactly $h-1$ elements and $R$ has
less than $h$ elements.  Partition $a_1'\le a_2'\le \ldots \le
a_{n_{\ge \delta}'}'$ into $A_1y_1A_2y_2\ldots A_{m'} y_{m'}R'$ such
that each $A_i$ has exactly $h-1$ elements and $R'$ has less than
$h$ elements. We have
\begin{eqnarray}
m'&=&\floor{n_{\ge \delta}'\over h}\\
&\ge&\floor{(1-\theta)n_{\ge \delta}\over h}\\
&\ge&{(1-\theta)n_{\ge \delta}\over h}-1\\
&\ge&(1-\theta)\floor{n_{\ge \delta}\over h}-1\\
&\ge&(1-2\theta)\floor{n_{\ge \delta}\over h}\ \ \ \ \ \mbox{(by\ inequality~(\ref{large-n-n'-ineqn}))}\\
&=&(1-2\theta)m.\label{m'-lower-ineqn}
\end{eqnarray}

\begin{eqnarray}
m'&=&\floor{n_{\ge \delta}'\over h}\\
&\le&\floor{(1+\theta)n_{\ge \delta}\over h}\\
&\le&{(1+\theta)n_{\ge \delta}\over h}+1\\
&\le&(1+\theta)\floor{n_{\ge \delta}\over h}+1\\
&\le&(1+2\theta)\floor{n_{\ge \delta}\over h}\ \ \ \ \ \mbox{(by\ inequality~(\ref{large-n-n'-ineqn}))}\\
&=&(1+2\theta)m.
\end{eqnarray}


We have
\begin{eqnarray}
h'&=&\floor{n_{\ge \delta}'\over m}\le \floor{(1+\theta)n_{\ge
\delta}\over m}\\
&\le& \floor{(1+\theta)(h+1)}\\
&\le& \floor{(1+\theta)(h+\theta h)}\ \ \ \mbox{(by\ inequality\ (\ref{large-n-n'-ineqn}))}\\
&\le& \floor{(1+\theta)^2h}\\
&\le& (1+\theta)^2h\\
&\le& (1+3\theta)h.
\end{eqnarray}

We have
\begin{eqnarray}
h' &\ge& \floor{(1-\theta)n_{\ge \delta}\over m}\ge
\floor{(1-2\theta)n_{\ge \delta}+\theta n_{\ge \delta}\over m}\label{2-theta-tran-ineqn}\label{h-h'-start}\\
&\ge& \floor{(1-2\theta){n_{\ge \delta}\over m}+{\theta n_{\ge \delta}\over m}}\\
&\ge& \floor{(1-2\theta)\floor{n_{\ge \delta}\over m}+{\theta n_{\ge \delta}\over m}}\\
&\ge& \floor{(1-2\theta)\floor{n_{\ge \delta}\over m}+1}\\
&\ge& (1-2\theta)\floor{n_{\ge \delta}\over m}\\
&\ge& (1-2\theta)h.\label{2-theta-tran-ineqn2}
\end{eqnarray}

The transition from inequality~(\ref{2-theta-tran-ineqn}) to
inequality~(\ref{2-theta-tran-ineqn2}) is due to the fact ${\theta
n_{\ge \delta}\over m}\ge 1$ by
inequalities~(\ref{tight-approx-theta-ineqn}) and
(\ref{large-n-n'-ineqn}). By inequalities~(\ref{h-h'-start}) to
(\ref{2-theta-tran-ineqn2}), we have
\begin{eqnarray}
(1+3\theta)h\ge h' \ge (1-2\theta)h. \label{2-theta-tran-ineqn2b}
\end{eqnarray}
Inequality~(\ref{2-theta-tran-ineqn2b}) also holds if $\theta=0$.


Consider the bin packing  problems: $L_1= y_1'^{h'} y_2'^{h'}\ldots
y_m'^{h'}\cup S'$.
We show that there is a small difference between the results of two
bin packing  problems for $L_0$ and $L_1$.


{\bf Claim~\ref{convert-approximation-lemma}.1.} For every solution
of cost $x$ with $(c,\eta,k)$-related bins for list $L_0$, there is
a solution of cost at most $x+(10\theta +4\mu )mh+4h$ for list
$L_1$.

\begin{proof}
Assume that $L_0$ has a bin packing  solution. It can be converted
into a solution for $L_1$ via an adaption to that of $L_0$ with a
small number of additional bins.

We use the lots for the elements in $A_{i+1}y_{i+1}$ in $L_0$ to
store the elements of $y_i'$s. By
inequality~(\ref{2-theta-tran-ineqn2b}) and the assumption
$\Rank_{\delta}(y_i',S_{\ge\delta})\cap [ih-\mu h, ih+\mu
h]\not=\emptyset$ for $i=1,2,\ldots , m$, there are at most
$(3\theta+2\mu)h$ $y_i'$s left as unpacked for each $y_i'^{h'}$ with
$i\le m'$. Therefore, we only have that the number of  elements left
as unpacked in $L_1$ is at most
\begin{eqnarray*}
&&m'(3\theta+2\mu )h+(|m-m'|+2)h'\\
&\le& (3\theta+2\mu )(1+2\theta)mh+(2\theta m+2)(1+3\theta)h\ \ \ \
\ \mbox{(by\ inequality~(\ref{2-theta-tran-ineqn2b})\ and
(\ref{m'-lower-ineqn}))}.
\end{eqnarray*}
 The number of bins for packing those left items is at most
$(3\theta+2\mu )(1+2\theta)mh+(2\theta m+2)(1+3\theta)h$. Since $1$
is the maximal cost of one bin, the cost for packing the left items
at most
\begin{eqnarray*}
&&(3\theta+2\mu )(1+2\theta)mh+(2\theta m+2)(1+3\theta)h\\
&\le&
2(3\theta+2\mu )mh+2(2\theta m+2)h\ \ \ \ \ \ \mbox{(by\ inequality~(\ref{small-theta-ineqn}))}\\
&\le& (10\theta +4\mu )mh+4h.
\end{eqnarray*}
\end{proof}

{\bf Claim~\ref{convert-approximation-lemma}.2.} For every solution
of cost $y$ with $(c,\eta,k)$-related bins for list $L_1$, there is
a solution of cost at most $y+(\mu+2\theta)m h+2h$ for list $L_0$.

\begin{proof} Assume that $L_1$ has a bin packing  solution. It can be
converted into a solution for $L_0$ with  a small number of
additional bins.

We use the lots for $y_i'^{h'}$ to store the elements in $A_iy_i$.
We have at most $(\mu+2\theta) h$ elements left for each $A_iy_i$.
Totally, we have at most $m(\mu+2\theta) h+2h$ items left. The bins
for packing those left items is at most $m(\mu+2\theta) h+2h$, which
cost at most $m(\mu+2\theta) h+2h$ since $1$ is the maximal cost of
one bin.
\end{proof}

The optimal number bins $Opt_{c,\eta,k}(L_0)$ for packing $L_0$ is
at least $mh\delta$, which have cost at least $mh\delta\eta$.
Therefore, we have
\begin{eqnarray}
Opt_{c,\eta,k}(L_0)\ge mh\delta\eta \label{L0-lower-bound-ineqn}
\end{eqnarray}
For an approximation $App(L_1)$  for packing $L_1$, let
\begin{eqnarray}
App(L_0)=App(L_1)+(\mu+2\theta)m h+2h \label{app-L0-eqn}
\end{eqnarray}
 be an approximation for
packing $L_0$ by Claim~\ref{convert-approximation-lemma}.2. We have
that

\begin{eqnarray}
&&Opt_{c,\eta,k}(L_0)\\
&\le& App(L_0) \\
&=& App(L_1)+m(\mu+2\theta) h+2h  \ \  \ \ \ \mbox{(by\ equation\ (\ref{app-L0-eqn}))}\label{L0-L1-approximation-eqn}\\
&\le&(1+\epsilon)Opt_{c,\eta,k}(L_1)+m(\mu+2\theta) h+2h\\
&\le&((1+{\epsilon})(Opt_{c,\eta,k}(L_0)+(10\theta+4\mu m)m h+4h)+m(\mu+2\theta) h+2h\\
&& \ \ \ \mbox{(by\ Claim~\ref{convert-approximation-lemma}.1)}\\
&\le&(1+{\epsilon})(Opt_{c,\eta,k}(L_0)+Opt_{c,\eta,k}(L_0)({5\mu mh+12\theta m h+6h\over mh\delta\eta}))\ \ ({\rm by\ inequality}~(\ref{L0-lower-bound-ineqn}))\\
&=&(1+{\epsilon})Opt_{c,\eta,k}(L_0)+Opt_{c,\eta,k}(L_0)({{5\mu\over
\delta\eta} +{12\theta \over
\delta\eta}+{6\over m\delta\eta}})\\
&\le&(1+{\epsilon})Opt_{c,\eta,k}(L_0)+Opt_{c,\eta,k}(L_0)({{5\mu\over
\delta\eta} +{12\theta \over
\delta\eta}+{\epsilon\over 3}})\ \ \ \ \ \mbox{(by~equation (\ref{m-def}))}\\
&\le&(1+{\epsilon})(Opt_{c,\eta,k}(L_0)+Opt_{c,\eta,k}(L_0)({{5\mu\over
\delta\eta}
+{\epsilon\over 3}+{\epsilon\over 3}}))\ \ \ \ \ \mbox{(by~inequality (\ref{small-theta-ineqn}))}\\
&\le&(1+{\epsilon})(Opt_{c,\eta,k}(L_0)+Opt_{c,\eta,k}(L_0)({\epsilon\over 3}+{\epsilon\over 3}+{\epsilon\over 3}))\ \ \ \mbox{(by~equation~(\ref{mu-def}))}\\
&\le&(1+\epsilon)(1+\epsilon)Opt_{c,\eta,k}(L_0)\\
&\le&(1+3\epsilon)Opt_{c,\eta,k}(L_0).
\end{eqnarray}


The list $L_*$ has at most $\theta n_{\ge \delta}$ more items than
$L_0$. Therefore
\begin{eqnarray}
Opt_{c,\eta,k}(L_*)&=&Opt_{c,\eta,k}(L_0)+\theta n_{\ge \delta}\\
&\le& Opt_{c,\eta,k}(L_0)+{\theta\over 1-\theta} n_{\ge \delta}'\\
&\le& Opt_{c,\eta,k}(L_0)(1+{\theta n_{\ge \delta}'\over (1-\theta)Opt_{c,\eta,k}(L_0)})\\
&\le& Opt_{c,\eta,k}(L_0)(1+{\theta n_{\ge \delta}'\over (1-\theta)mh\delta\eta})\ \ \ \mbox{(by\ inequality~(\ref{L0-lower-bound-ineqn}))}\\
&\le& Opt_{c,\eta,k}(L_0)(1+{\theta \over (1-\theta)h\delta\eta}\cdot {n_{\ge \delta}'\over m})\\
&\le& Opt_{c,\eta,k}(L_0)(1+{\theta \over (1-\theta)h\delta\eta}\cdot 2h)\ \ \ \mbox{(by~inequality~(\ref{range-n-m-h-ineqn}))}\\
&\le& Opt_{c,\eta,k}(L_0)(1+{2\theta \over (1-\theta)\delta\eta})\\
&\le& Opt_{c,\eta,k}(L_0)(1+{4\theta \over \delta\eta})\\
&\le& Opt_{c,\eta,k}(L_0)(1+\epsilon)\ \ \mbox{(by\
inequality~(\ref{small-theta-ineqn})}).\label{L*-L0-inequality}
\end{eqnarray}

Let
\begin{eqnarray}
App(L_*)=(1+\epsilon)App(L_0).\label{app-L0-L*-eqn}
\end{eqnarray}

 Therefore, we have $App(L_*)\ge
Opt_{c,\eta,k}(L_*)$ by inequality~(\ref{L0-lowerbound-ineqn}) and
inequality~(\ref{L*-L0-inequality}). On the other hand, we have
$App(L_*)=(1+\epsilon)App(L_0)\le
(1+\epsilon)(1+3\epsilon)Opt_{c,\eta,k}(L_0)\le
(1+\epsilon)(1+3\epsilon)Opt_{c,\eta,k}(L_*)\le
(1+5\epsilon)Opt_{c,\eta,k}(L_*).$
\end{proof}

\vskip 10pt

 {\bf Algorithm Packing-Conversion($n_{\ge\delta}',App(L_1)$)}

Input: an integer $n_{\ge\delta}'$ is an approximation to
$|S_{\ge\delta}|$ with $(1-\theta)|S_{\ge \delta}|\le
n_{\ge\delta}'\le (1+\theta)|S_{\ge \delta}|$, and
 an approximate solution $App(L_1)$  for the bin packing
with items in $L_1=\{y_1'^{h' },\ldots , y_m'^{h'}\}\cup S'$ in
$(c,\eta,k)$-related bins with cost at most
$(1+\epsilon)Opt_{c,\eta,k}(L_1)$, where
 $L_1=\{y_1'^{h'},\ldots , y_m'^{h'}\}\cup S'$ is a list of
items such that  $\Rank(y_i',S_{\ge \delta})\cap [ih-\mu h, ih+\mu
h]\not=\emptyset$ for $i=1,2,\ldots , m$, and $S'$ is a list of
items of size less than $\delta$.

 Output: an approximation for $Opt_{c,\eta,k}(L_*)$, where $L_*$ is defined by equation~(\ref{L*-def-eqn}).

Steps:

\qquad Convert the approximation of $App(L_1)$ to $App(L_0)$ as
equation~(\ref{app-L0-eqn}) in the proof.

\qquad Convert the approximation of $App(L_0)$ to $App(L_*)$ as
equation~(\ref{app-L0-L*-eqn}).


\qquad Output $App(L_*)$ 

{\bf End of Algorithm}

\vskip 10pt

\begin{lemma}\label{global-convert-lemma} Let $\xi$ be a small
constant in $(0,1)$.  Assume that $S_{\ge \varphi}$ is a list of
items of size at least $\varphi$, $S_{<\varphi}$ is a list of items
of size less than $\varphi$, and $S_{<\varphi}'$ is another list of
items of size less than $\varphi$. If $\sum_{a_i\in
S_{<\varphi}}a_i+ \sum_{a_i\in S_{\ge \varphi}}a_i\le
(1+\xi)(\sum_{a_i\in S_{<\varphi}'}a_i+ \sum_{a_i\in S_{\ge
\varphi}}a_i)$ and $\sum_{a_i\in S_{<\varphi}'}a_i+ \sum_{a_i\in
S_{\ge \varphi}}a_i\le (1+\xi)(\sum_{a_i\in S_{<\varphi}}a_i+
\sum_{a_i\in S_{\ge \varphi}}a_i)$, then $Opt(S_{<\varphi}\cup
S_{\ge \varphi})\le {1+\xi\over 1-\varphi}\cdot Opt(S_{<
\varphi}'\cup S_{\ge \varphi})+1$ and $Opt(S_{<\varphi}'\cup S_{\ge
\varphi})\le {1+\xi\over 1-\varphi}\cdot Opt(S_{<\varphi}\cup S_{\ge
\varphi})+1$.
\end{lemma}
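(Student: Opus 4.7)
The plan is to combine the volume lower bound $Opt(X)\ge\sum_{a\in X}a$ with a first-fit upper bound producing fill strictly above $1-\varphi$ in nearly every bin when only items of size below $\varphi$ are being inserted. The two asserted inequalities are symmetric in $S_{<\varphi}$ and $S_{<\varphi}'$, so it suffices to prove $Opt(S_{<\varphi}\cup S_{\ge\varphi})\le\frac{1+\xi}{1-\varphi}Opt(S_{<\varphi}'\cup S_{\ge\varphi})+1$; the other direction follows by swapping the roles of $S_{<\varphi}$ and $S_{<\varphi}'$ and invoking the symmetric hypothesis.

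Write $T=S_{<\varphi}\cup S_{\ge\varphi}$, $T'=S_{<\varphi}'\cup S_{\ge\varphi}$, and $k'=Opt(T')$. I would fix an optimal packing of $T'$ into $k'$ bins and remove all items of $S_{<\varphi}'$, obtaining a packing of $S_{\ge\varphi}$ into at most $k'$ bins; call the resulting nonempty bins $B_1,\ldots,B_{k_0}$ with $k_0\le k'$. Then I would insert the items of $S_{<\varphi}$ one at a time by first-fit, placing each into the earliest bin with enough free space and opening a fresh bin only when no existing bin fits. Let $m$ be the total number of bins at the end, so $Opt(T)\le m$.

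If no new bin is opened during the first-fit phase, then $Opt(T)\le m=k_0\le k'=Opt(T')\le\frac{1+\xi}{1-\varphi}Opt(T')+1$, using $\frac{1+\xi}{1-\varphi}\ge 1$. Otherwise, let $B_m$ be the last bin created, opened during first-fit by some item $a\in S_{<\varphi}$ with $a<\varphi$. By the first-fit rule, at the moment $a$ is placed each of $B_1,\ldots,B_{m-1}$ has free space strictly less than $a<\varphi$, so fill strictly greater than $1-\varphi$; fill levels are monotone nondecreasing, so this persists until termination. Summing over all bins yields $\sum_{a\in T}a>(m-1)(1-\varphi)$, hence $m<\frac{\sum_{a\in T}a}{1-\varphi}+1$. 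Combining with the hypotheses $\sum_{a\in T}a\le(1+\xi)\sum_{a\in T'}a\le(1+\xi)Opt(T')$ gives $Opt(T)\le m\le\frac{(1+\xi)Opt(T')}{1-\varphi}+1$.

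The main subtlety is confirming the fill-level bound for those bins $B_i$ ($i<m$) that never personally receive a small item during the first-fit phase: the argument still goes through because the first-fit test depends only on the free space in $B_i$ at the moment the creating item $a$ failed to fit, independent of whether $B_i$'s contents are large items inherited from Phase 1 or small items added in Phase 2. Once this uniform fill-level lower bound is in hand, the rest of the proof is a short chain of inequalities.
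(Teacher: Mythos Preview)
Your proof is correct and follows essentially the same strategy as the paper: start from an optimal packing of one instance, strip out its small items to obtain a packing of $S_{\ge\varphi}$, then greedily reinsert the other instance's small items and invoke the volume bound $\sum_{a\in T}a>(m-1)(1-\varphi)$ when a new bin is created. The paper phrases the greedy step more tersely (``at most one bin wastes more than $\varphi$ space by replacing the items in $S_{<\varphi}$ with those in $S_{<\varphi}'$'') and organizes the WLOG slightly differently, but your explicit first-fit formulation and the paper's replacement argument are the same mechanism.
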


\begin{proof}
Let $L=S_{<\varphi}\cup S_{\ge \varphi}$  and $L'=S_{<\varphi}'\cup
S_{\ge \varphi}$. Without loss of generality, let $Opt(L)\le
Opt(L')$. We just need to prove that $Opt(L')\le {1+\xi\over
1-\varphi}\cdot Opt(L)$.

 For a bin packing $P$ for $L$, we convert it
into another bin packing for $L'$ by increasing small number of
bins. At most one bin in $P$ wastes more than $\varphi$ space by
replacing the items in $S_{<\varphi}$ with those in $S_{<\varphi}'$.
If no additional bin is used for packing $L'$, we have $Opt(L')\le
Opt(L)$.

If some new bins are needed, the total number of bins is at most
\begin{eqnarray*}
{(\sum_{a_i\in S_{<\varphi}'}a_i+ \sum_{a_i\in S_{\ge
\varphi}}a_i)\over 1-\varphi}+1&\le& {(1+\xi)(\sum_{a_i\in
S_{<\varphi}}a_i+ \sum_{a_i\in S_{\ge \varphi}}a_i)\over
1-\varphi}+1\\
&\le& {1+\xi\over 1-\varphi}\cdot Opt(L)+1.
\end{eqnarray*}
Therefore, we have $Opt(L')\le {1+\xi\over 1-\varphi}\cdot Opt(L)$.
\end{proof}

The following Lemma~\ref{convert-approximation2-lemma} is only for
the classical bin packing problem that all bins are of the same size
$1$.

\vskip 10pt

{\bf Algorithm Packing-Small-Items$(X, s_1, S'')$}

Input: $X=(x_1,\ldots , x_q)$ for the $q$ types $T=\langle
T_1,\ldots , T_q\rangle$ for the $(1+\beta)$-approximation for
packing a list $S''=\{y_1'^{h' },\ldots , y_m'^{h' }\}$, and
$s_1=\sum_{a_i\in S'}a_i$ is the sum of sizes in list $S'$ of items
of size less than $\delta$.

Output: an approximation for $Opt(S''\cup S')$.

Steps:

\begin{enumerate}[1.]
\item\label{ass-s1'}
\qquad Let $s_1':=s_1$.
\item
\qquad Repeat\label{repeat-start}
\item
\qquad\qquad Let $i:=1$.
\item
\qquad \qquad For each type $T_i=(b_{1,i}a_1,\ldots , b_{m,i}a_m)$
(which satisfies $\sum_{j=1}^m b_{j,i}a_i\le 1$)

\item
 \qquad\qquad \qquad Let $t_i:=\sum_{j=1}^m b_{j,i}a_m$ and $h_i:=\max(1-\delta-t_i,0)$

\item
 \qquad\qquad \qquad ($h_i$ is the available space in a bin of type $T_i$ for packing
items of size $<\delta$).
\item
\qquad\qquad  Let $s_1':=\max(s_1'-x_ih_i,0)$ (fill each bin of type
$T_i$ with size $h_i$ of (fractional) items).
\item
\qquad\qquad Let $i:=i+1$.
\item
\qquad Until $s_1'=0$ or $i>q$.\label{repeat-end}

\item
\qquad If $s_1'>0$
\item
\qquad Then find the least number $k$ such that $k(1-\delta)\ge
s_1'$
\item
\qquad\qquad  and fill the (fractional) items left in $s_1'$ into
those $k$ bins.
\end{enumerate}
{\bf End of Algorithm}

\vskip 10pt

\begin{lemma}\label{convert-approximation2-lemma}
Let $\beta$ be a constant in $(0,1)$  with $\beta\le {1\over 2}$,
$\theta$ be a constant in $[0,1)$  with $\theta\le \beta$, and
$\delta$ be a constant with $\delta\le {\beta\over 4}$. Let $m$ and
$h'$ be
 integers.
  Let $S'$ be a list of items of size less than $\delta$.
Assume that $S''=\{y_1'^{h' },\ldots , y_m'^{h' }\}$ with $y_i'\ge
\delta$ for $i=1,\ldots ,m$ and $S''$ is large enough to satisfy
\begin{eqnarray}
h'm\ge{2\over \beta\delta}.\label{large-n-ineqn}
\end{eqnarray}
Then Packing-Small-Items(.) is an $O(q)$ time algorithm that given a
solution $(x_1,\ldots , x_q)$ for bin packing with items in $S''$
with the total number of bins at most $(1+\beta)Opt(S'')$, and
$s_1=\sum_{a_i\in S'}a_i$, where $x_i$ is the number of bins of type
$T_i$, and $q$ is the number of types to pack $y_1',\ldots , y_m'$
with $q\le m^{O({1\over \delta})}$ (see Lemma~\ref{lp-lemma}),
it gives an approximation $app$ for packing $S''\cup S'$ with
$Opt(S''\cup S')\le app\le (1+2\beta)Opt(S''\cup S')$.
\end{lemma}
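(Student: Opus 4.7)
The plan is to prove the two inequalities $Opt(S''\cup S') \le app$ and $app \le (1+2\beta)Opt(S''\cup S')$ separately, with the latter split by case.

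For the lower bound, I would first verify that the algorithm's output $app$ really does correspond to a valid packing. A bin of type $T_i$ has leftover space $1-t_i \ge h_i + \delta$. Since every item of $S'$ has size less than $\delta$, a greedy fill of small items into this bin can always use at least $h_i$ of that space before the remaining space drops below $\delta$. The additional $k$ bins opened at the end each have capacity $1$ and receive small items of total size $\le 1-\delta$, which likewise admits an actual (non-fractional) packing. Hence $app$ upper-bounds a feasible packing, so $Opt(S''\cup S') \le app$.

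For the upper bound I would split on whether $s_1' = 0$ at the end of the main loop. If $s_1' = 0$, then $app = \sum_i x_i \le (1+\beta)Opt(S'') \le (1+2\beta)Opt(S''\cup S')$ and we are done. In the nontrivial case $s_1' > 0$, let $N = \sum_i x_i$ and $W = \sum_{a\in S''\cup S'} a$. The key accounting step is
\begin{eqnarray*}
\sum_i x_i h_i \;\ge\; \sum_i x_i(1 - t_i - \delta) \;\ge\; \sum_i x_i(1-t_i) - N\delta \;=\; N - \sum_{y\in S''} y - N\delta,
\end{eqnarray*}
from which $s_1' \le s_1 - \sum_i x_i h_i \le W - N(1-\delta)$. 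Then $k \le s_1'/(1-\delta)+1 \le W/(1-\delta) - N + 1$, giving the clean bound $app = N + k \le W/(1-\delta) + 1$.

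To finish, use $\delta \le \beta/4 \le 1/8$ to obtain $1/(1-\delta) \le 1 + 2\delta \le 1 + \beta/2$, and $W \le Opt(S''\cup S')$, so $app \le (1+\beta/2)Opt(S''\cup S') + 1$. The additive $+1$ is absorbed using the size hypothesis (\ref{large-n-ineqn}): $Opt(S''\cup S') \ge Opt(S'') \ge h'm\delta \ge 2/\beta$, so $1 \le (\beta/2)Opt(S''\cup S')$ and therefore $app \le (1+\beta)Opt(S''\cup S') \le (1+2\beta)Opt(S''\cup S')$. The $O(q)$ running time is immediate from the single pass over the $q$ types. The main obstacle is the bookkeeping inequality that lower-bounds $\sum_i x_i h_i$; it is what couples the $\delta$-wastage per bin to the global parameters and lets the final $+1$ term be absorbed via the hypothesis on $h'm$.
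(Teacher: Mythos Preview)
Your proof is correct and follows essentially the same route as the paper's. Both arguments split on whether the small items overflow the existing bins, and in the overflow case both reduce to the bound $app \le W/(1-\delta)+1$ with $W=\sum_{a\in S''\cup S'}a$, then convert $1/(1-\delta)\le 1+2\delta\le 1+\beta/2$ and absorb the additive $+1$ via $Opt(S''\cup S')\ge h'm\delta\ge 2/\beta$. Your accounting inequality $\sum_i x_i h_i \ge N(1-\delta)-\sum_{y\in S''}y$ is just the explicit form of the paper's one-line claim that every bin in $U'$ carries at least $1-\delta$ of items; the paper states this via the set $U'$ of nearly-full bins, you state it via the per-type wastage, but the content is identical. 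Your final constant $(1+\beta)$ in Case~2 is even slightly sharper than the paper's $(1+1.5\beta)$, though both are within the claimed $(1+2\beta)$.
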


\begin{proof}  The bin
packing problem is the same as the regular bin packing problem that
all bins are of the same size $1$. The problem is to minimize the
total number bins to pack all items. We consider the approximation
to pack the small items after packing large items.

Assume that an optimal solution of a bin packing  problem has two
types of bins. Each first type bin contains at least one item of
size $\delta$, and each second type bin only contains items of size
less than $\delta$. Let $V_1$ be the set of first type bins, and
$V_2$ be the set of all second type bins. Let $U$ be an
$(1+\beta)$-approximation for the items in $S''$. We have $|U|\le
(1+\beta)|V_1|$. Let $s_{large}=\sum_{a_i\in S''}a_i=\sum_{i=1}^q
h_i'y_i'$ and $s_{small}=\sum_{a_i\in S'}a_i=s_1$.


Fill items of size less than $\delta$ into those bins in $U$ so that
each bin has less than $\delta$ left. Put all of the items less than
$\delta$ into some extra bins, and at most one of them has more than
$\delta$ space left. We use a fractional way to pack small items.
Since each bin with small items has at least $\delta$ space left,
and each small item is of size at most $\delta$, the fractional
packing of small items can be converted into a non-fractional
packing. A similar argument is also shown in
Lemma~\ref{global-convert-lemma}.

Case 1. If $U$ can contain all items, we have that $|U|\le
(1+\beta)|V_1|\le (1+\beta)|V_1\cup V_2|$.

Case 2. There is a bin beyond those in $U$ is used. Let $U'$ be all
bins without more than $\delta$ space left. We have
\begin{eqnarray}
|U'|&\le& {s_{large}+s_{small}\over 1-\delta}\\
&\le& (1+{\delta\over 1-\delta})(s_{large}+s_{small})\\
&\le& (1+2\delta)(s_{large}+s_{small})\\
&\le& (1+\beta/2)(s_{large}+s_{small}).\label{U'-bound}
\end{eqnarray}

On the other hand, $|V_1\cup V_2|\ge s_{large}+s_{small}$.
Therefore, the approximate solution $|U'|+1$ has
\begin{eqnarray}
|U'|+1 &\le& (1+\beta/2)|V_1\cup
V_2|+1\ \ \ \mbox{(by inequality~(\ref{U'-bound}))}\\
&=&(1+\beta/2)Opt(S''\cup S')+1\label{change-to-6-epsilon-eqnuality0}\\
&\le &(1+1.5\beta)Opt(S''\cup
S') \ \ \ \mbox{(by inequality~(\ref{large-n-ineqn}))}\label{change-to-6-epsilon-eqnuality}\\
&\le &(1+2\beta)Opt(S''\cup S').
\end{eqnarray}
Packing the items in $S''$ needs at least $\delta mh'$ bins.
Therefore, the transition from
inequality~(\ref{change-to-6-epsilon-eqnuality0}) to
inequality~(\ref{change-to-6-epsilon-eqnuality}) is by the condition
in inequality~(\ref{large-n-ineqn}).
\end{proof}

\vskip 10pt

{\bf Algorithm Packing-With-Many-Large-Items$(\alpha,\beta, n, s_1,
n_{\ge \delta}',S)$}

Input: a parameter $\beta\in (0,1)$, $n_{\ge \delta}'$ is an
approximation to $|S_{\ge \delta}|$, and  $s_1$ is an approximation
for $\sum_{a_i\in S_{<\delta}}a_i$ with $(1-\xi)(\sum_{a_i\in
S}a_i)\le s_1+\sum_{a_i\in S_{\ge\delta}}a_i\le (1+\xi)(\sum_{a_i\in
S}a_i)$, $S$ is the list of input items $a_1,\ldots , a_n$ for bin
packing, and $n$ is the number of items in $S$.

Output: an approximation for $Opt(S)$.

Steps:

\begin{enumerate}[1.]
\item
\qquad Select an integer constant $d_1$ such that $g_1({1\over
2})^{d_1\over 1-\delta}<\alpha$.

\item\label{random-samples-Packing-With-Many-Large-Items}
\qquad Select a list $L_1$ of $2c_1d_1{n\over n_{\ge
\varphi}'}m^2\log m$ random elements in the input list $S$,

\qquad  where constant $c_1$ is defined in
Lemma~\ref{select3-lemma}, and constant $d_1$ is defined in
line~\ref{d1-def}.


\item
\qquad  Let $L_2$ be the list of items of size at least $\delta$ in
$L_1$.

\item
\qquad Let $(y_1',\ldots , y_m')$:=Select-Crucial-Items($m,\alpha,
\mu, L_2$) (see Lemma~\ref{select3-lemma}).


\item\label{define-h'-Approximate-Bin-Packing}
\qquad Let $X=(x_1,\ldots , x_q):=$Pack-Large-Items($1, 1, 1, B$)
with $B=\{y_1'^{h'},\ldots , y_m'^{h' }\}$

\qquad (see Lemma~\ref{lp-lemma}), where $h'=\floor{{n_{\ge
\varphi}'\over m}}$.

\item\label{line-app}
\qquad Let $App_1$:=Packing-Small-Items$(X, s_1, B)$ (see
Lemma~\ref{convert-approximation2-lemma}).

\item
\qquad Let $App_2:=$Packing-Conversion($n_{\ge \delta}', App_1$)(see
Lemma~\ref{convert-approximation-lemma}) for packing all items in
$S$.
\item
\qquad Output ${1+\xi\over 1-\delta}\cdot App_2$.
\end{enumerate}
{\bf End of Algorithm}

\vskip 10pt

\begin{lemma}\label{convert-approximation2b-lemma}
Assume that  $S$ is a list of items for bin packing problem.  Let
$\beta$ be a constant in $(0,1)$  with $\beta\le {1\over 2}$,
$\theta$ be a constant in $[0,1)$  with $\theta\le \beta$, $\delta$
be a constant with $\delta\le {\beta\over 4}$, $\xi$ be a constant
with $\xi\le {\beta\over 4}$, and constant $\epsilon=6\beta$. The
constants $\mu, \epsilon_1$, and $m$ are given according to
equations (\ref{mu-def}) to (\ref{m-def}). Assume that $n_{\ge
\delta}'$ is an approximation of $|S_{\ge \delta}|$ satisfying the
inequalities~(\ref{tight-approx-theta-ineqn}),
(\ref{small-theta-ineqn}), (\ref{large-n-n'-ineqn}), and
(\ref{large-n-ineqn}).  Assume that $s_1$ is an approximation for
$\sum_{a_i\in S_{<\delta}}a_i$ with $(1-\xi)(\sum_{a_i\in S}a_i)\le
s_1+\sum_{a_i\in S_{\ge\delta}}a_i\le (1+\xi)(\sum_{a_i\in S}a_i)$.
Then Packing-With-Many-Large-Items(.) is an $O({n\over \sum_{i=1}^n
a_i}+O({1\over \beta})^{O({1\over\beta})})$ time algorithm that
gives an approximation $app$ for packing $S$ with $Opt(S)\le app\le
(1+16\beta)Opt(S)$  with the failure probability at most $\alpha$.
\end{lemma}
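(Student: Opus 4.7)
The plan is to chain together the sub-lemmas already established (\ref{lp-lemma}, \ref{convert-approximation-lemma}, \ref{global-convert-lemma}, \ref{select3-lemma}, \ref{convert-approximation2-lemma}) by tracking a single item-by-item flow: raw input $S \rightarrow $ random samples $L_1 \rightarrow $ its large sub-sample $L_2 \rightarrow $ crucial items $y_1',\ldots,y_m' \rightarrow $ LP packing of $B=\{y_1'^{h'},\ldots,y_m'^{h'}\}\rightarrow $ small items added via $s_1\rightarrow $ converted back to $S_{\ge\delta}\cup S'\rightarrow $ final packing of $S$. Essentially the algorithm is the sampling analog of the proof of Lemma \ref{large-packing-lemma}.

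First I would analyze the sampling. Since each of the $|L_1|=2c_1d_1(n/n_{\ge\delta}')m^2\log m$ random samples is in $S_{\ge\delta}$ independently with probability $|S_{\ge\delta}|/n\ge n_{\ge\delta}'/((1+\theta)n)$, Theorem~\ref{chernoff3-theorem} with $\delta:=1/2$ implies that, except with probability at most $g_1(1/2)^{d_1/(1-\delta)}<\alpha$, we have $|L_2|\ge c_1d_1m^2\log m\ge c_1m^2\log m/(\beta^2)\cdot(1/\text{const})$, so Select-Crucial-Items is applicable with $|X|\ge u$. Lemma~\ref{select3-lemma} then gives, with failure probability $\alpha$, items $y_1'\le\cdots\le y_m'$ with $\Rank(y_i',S_{\ge\delta})\cap[ih-\mu h,ih+\mu h]\neq\emptyset$ for all $i$, which is exactly the hypothesis needed by Lemma~\ref{convert-approximation-lemma}.

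Next I would compose the three approximation steps. Feeding $B$ of at most $m^{1/\delta}$ distinct bin-types to Pack-Large-Items, Lemma~\ref{lp-lemma} returns $(x_1,\ldots,x_q)$ with cost $\le Opt(B)+q$, i.e.\ a $(1+\beta)$-approximation once we absorb the additive term using $h'm\ge 2/(\beta\delta)$ (inequality~(\ref{large-n-ineqn})). Lemma~\ref{convert-approximation2-lemma} then upgrades this to a $(1+2\beta)$-approximation for $B\cup S'$ where $S'$ is any list of small items summing to $s_1$ (the proof is fractional and only uses the sum). Applying Lemma~\ref{convert-approximation-lemma} with parameter $\epsilon:=2\beta$ and $\theta\le\beta$ (so $\mu,\epsilon_1,m$ from (\ref{mu-def})--(\ref{m-def}) match) yields $App_2$ satisfying $Opt(S_{\ge\delta}\cup S')\le App_2\le (1+5\cdot 2\beta)\,Opt(S_{\ge\delta}\cup S')$. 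Finally, because the ``virtual'' list $S'$ with total size $s_1$ need not match the actual small-items list $S_{<\delta}$ of $S$, I invoke Lemma~\ref{global-convert-lemma} with $\varphi:=\delta$ to pass between $S_{\ge\delta}\cup S'$ and $S=S_{\ge\delta}\cup S_{<\delta}$, paying a factor $(1+\xi)/(1-\delta)\le (1+\beta/4)/(1-\beta/4)$. Composing: $App\le\tfrac{1+\xi}{1-\delta}(1+10\beta)Opt(S)\le(1+16\beta)Opt(S)$ for $\beta\le 1/2$, with the lower bound $Opt(S)\le App$ following directly from $Opt(S_{\ge\delta}\cup S')\le App_2$ and the conversion factor.

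For the running time, the sample of size $O((n/n_{\ge\delta}')\cdot m^2\log m)$ dominates the $n$-dependent cost. Since every large item contributes at most $1$ to $\sum_i a_i$ and the outer algorithm guarantees that this subroutine is invoked only when $n_{\ge\delta}'=\Omega(\sum_i a_i)$ (up to the constant $\delta$), we get $n/n_{\ge\delta}'=O(n/\sum_i a_i)$; the LP step contributes $m^{O(1/\delta)}=(1/\beta)^{O(1/\beta)}$ (from (\ref{mu-def})--(\ref{m-def})), and all other steps are constant or polynomial in $1/\beta$. The total failure probability is at most the sum of the two Chernoff/Select-Crucial-Items failures, each at most $\alpha$, which can be made $\le \alpha$ by rescaling the constants.

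The main obstacle will be the bookkeeping in the approximation chain, in particular ensuring that the approximation $n_{\ge\delta}'$ satisfies the tightness hypotheses of Lemma~\ref{convert-approximation-lemma} (inequalities (\ref{tight-approx-theta-ineqn})--(\ref{large-n-n'-ineqn})) so that the $y_i'$ drawn from $S_{\ge\delta}$ give $h'=\floor{n_{\ge\delta}'/m}$ close enough to the true $h=\floor{|S_{\ge\delta}|/m}$ for the conversion to go through with only $O(\theta)$ loss, and that $s_1$ is accurate enough (inequality on $\xi$) for Lemma~\ref{global-convert-lemma} to absorb the mismatch between $S'$ and $S_{<\delta}$ inside the $(1+16\beta)$ budget. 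Once these hypotheses are in place, every other step is a direct invocation of an earlier lemma.
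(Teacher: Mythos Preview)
Your proposal is correct and follows essentially the same route as the paper: a Chernoff bound to guarantee $|L_2|\ge c_1m^2\log m$ large samples, Lemma~\ref{select3-lemma} to extract the crucial items $y_i'$, then the chain Lemma~\ref{lp-lemma} $\to$ Lemma~\ref{convert-approximation2-lemma} $\to$ Lemma~\ref{convert-approximation-lemma} $\to$ Lemma~\ref{global-convert-lemma} to compose $(1+2\beta)\to(1+10\beta)\to(1+16\beta)$, with the running time coming from the sample size $O(n/n_{\ge\delta}')=O(n/\sum a_i)$ plus the LP cost $(1/\beta)^{O(1/\beta)}$. Your flagging of the parameter-bookkeeping (matching the $\epsilon$ used in (\ref{mu-def})--(\ref{m-def}) with the approximation factor fed into Lemma~\ref{convert-approximation-lemma}) is apt, as the paper glosses over this too.
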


\begin{proof} The bin
packing problem is the same as the regular bin packing problem that
all bins are of the same size $1$. The problem is to minimize the
total number bins to pack all items. We consider the approximation
to pack the small items after packing large items.

We sample some random items of size at least $\varphi$ from the
input list $S$. When  an item from the input list $S$ is randomly
selected, an item of size at least $\varphi$ has an equal
probability, which is defined by the $p_{\varphi}$ below:
\begin{eqnarray}
p_{\varphi}={|\{i: a_i\ge \varphi\  and \ a_i\in \{a_1,\ldots ,
a_n\}\}|\over n}={n_{\ge\varphi}\over n}.\label{p-varphi-def-eqn2}
\end{eqnarray}

By inequality~(\ref{tight-approx-theta-ineqn}) and
equation~(\ref{p-varphi-def-eqn2}), we have
\begin{eqnarray}
p_{\varphi}{n\over n_{\ge \varphi}'}\ge {1\over
1+\theta}.\label{p-varphi-n-n-varphi-bound-ineqn}
\end{eqnarray}
By Theorem~\ref{chernoff3-theorem}, with probability at most
$g_1({1\over 2})^{p_{\varphi}{2c_1d_1n\over n_{\ge \varphi}'}m^2\log
m}\le g_1({1\over 2})^{{2c_1d_1\over 1+\delta}m^2\log m}<\alpha$
(see line~\ref{d1-def} in Approximate-Bin-Packing$(.)$), we cannot
obtain at least
\begin{eqnarray}
(1-{1\over 2})p_{\varphi}({2c_1d_1n\over n_{\ge \varphi}'}m^2\log
m)&\ge& p_{\varphi}{n\over n_{\ge \varphi}'}(c_1d_1m^2\log
m)\\
&\ge& {1\over 1+\theta}\cdot c_1d_1m^2\log m\ \ \ \mbox{(by\ inequality\ (\ref{p-varphi-n-n-varphi-bound-ineqn}))}\\
&\ge& c_1m^2\log m
\end{eqnarray}
 random elements of size at least $\varphi$ by
sampling $2c_1d_1{n\over n_{\ge\varphi}'}m^2\log m$ elements.

By Lemma~\ref{select3-lemma}, with probability at most $\alpha$, we
cannot obtain the list $y_1\le \ldots \le y_m$ from the input list
such that $\Rank(y_i,S_{\ge\varphi})\cap [ih-\mu h,ih+\mu
h]\not=\emptyset$ for all $i\in \{1,\ldots , m\}$ in $O({m^2(\log
m)^2)\over \mu^2})$ time using
 ${c_1m^2\log m\over \mu^2}$ random
elements from the input.

Therefore, we have probability at most $\alpha+\alpha+\alpha\le
{1\over 4}$, the following (a) or (b) is false:

(a). Statements~\ref{item1-app-sum-lemma},
~\ref{item2-app-sum-lemma}, and~\ref{item3-app-sum-lemma} of
Lemma~\ref{app-sum-lemma} are true.

(b). $\Rank(y_i,S_{\ge \varphi}) \cap [ih-\mu h,ih+\mu
h]\not=\emptyset$ for all $i\in \{1,\ldots , m\}$.

Assume that both statements (a) and (b) are true in the rest of the
proof. This makes the analysis of algorithm become deterministic.

Imagine that $S_1'$ is a list of items of size less than $\delta$
and has $s_1=\sum_{a_i\in S_1'}a_i$.
 By
Lemma\ref{convert-approximation2-lemma}, line~\ref{line-app} gives
$App_1$ to be an $(1+2\beta)$-approximation for packing $S''\cup
S_1'$.

By  Lemma~\ref{convert-approximation-lemma}, $App_2$ is an
$(1+5\times 2\beta)$-approximation for packing $S_{\ge\delta}\cup
S_1'$.

By Lemma~\ref{global-convert-lemma}, ${1+\xi\over 1-\delta}\cdot
App_2$ is an ${1+\xi\over 1-\delta}\cdot(1+10\beta)$-approximation
for packing $S_{\ge\delta}\cup S_{<\delta}=S$. We note that
\begin{eqnarray*}
{1+\xi\over 1-\delta}\cdot(1+10\beta)&\le& (1+{\xi+\delta\over
1-\delta})\cdot(1+10\beta)\\
&\le& (1+2(\xi+\delta))\cdot(1+10\beta)\\
&\le& (1+\beta)\cdot(1+10\beta)\\
&\le& (1+\beta++10\beta+10\beta^2)\\
&\le& (1+\beta++10\beta+5\beta)\ \ \ \mbox{(note that $\beta\le {1\over 2})$}\\
&\le& 1+16\beta.
\end{eqnarray*}
Thus, ${1+\xi\over 1-\delta}\cdot App_2$ is an
$(1+16\beta)$-approximation for packing $S_{\ge\delta}\cup
S_{<\delta}=S$.

The function  is executed under the condition that $n_{\ge
\varphi}'=\Omega(\sum_{i=1}^n a_i)$.
Statement~\ref{random-samples-Packing-With-Many-Large-Items} takes
$O({n\over n_{\ge\delta}'})=O({n\over \sum_{i=1}^na_i})$ time. The
computational time at
statement~\ref{define-h'-Approximate-Bin-Packing} is $({1\over
\beta})^{O({1\over\beta})}$ which follows from Lemma~\ref{lp-lemma}.
The other statements only takes $O(1)$ time.
\end{proof}

The following Lemma~\ref{convert-approximation3-lemma} is only for
the classical bin packing problem that all bins are of the same size
$1$.

 \vskip 10pt

{\bf Algorithm Packing-With-Few-Large-Items$(\xi, x, s_1)$}

Input: a small parameter $\xi\in [0,1)$, an integer $x$ with $x\le
\xi \sum_{i=1}^n a_i$ and $x\ge |S_{\ge\delta}|$, and a real
 $s_1$ with
$(1-\xi)(\sum_{a_i\in S}a_i)\le s_1+\sum_{a_i\in
S_{\ge\delta}}a_i\le (1+\xi)(\sum_{a_i\in S}a_i)$. ($s_1$ is an
approximate sum of sizes of small items of size at most $\delta$).

Output: an approximation for $Opt(S)$.

Steps:

\begin{enumerate}[1.]
\item\label{find-k-line}
\qquad Find the least number $k$ such that $k(1-\delta)\ge s_1$

\qquad\qquad (the $k$ bins are for packing items of size less than
$\delta$).
\item
\qquad Output ${1+\xi\over 1-\delta}\cdot(k+x+1)$ for packing $S$
($x$ bins are for packing items of size $\ge \delta$)
\end{enumerate}
{\bf End of Algorithm}

 \vskip 10pt

\begin{lemma}\label{convert-approximation3-lemma}
Assume that  $S$ is a list of items for bin packing problem. Let
$\delta$ be a constant in $(0,1)$. Assume that we have the following
inputs available:
\begin{itemize}
\item
$x$ is an approximation for $|S_{\ge\delta}|$ with $x\le \xi
\sum_{i=1}^n a_i$ and $x\ge |S_{\ge\delta}|$ for some small $\xi\in
(0,1)$.
\item
 $s_1$ is an approximation for $\sum_{a_i\in S_{<\delta}}a_i$ with
$(1-\xi)(\sum_{a_i\in S}a_i)\le s_1+\sum_{a_i\in
S_{\ge\delta}}a_i\le (1+\xi)(\sum_{a_i\in S}a_i)$.
\end{itemize}
and the parameters satisfy the following conditions
\begin{eqnarray}
 \delta&\le& {1\over 4}, \label{alpha-delta-ineqn3}\\
  \xi&\le& {1\over 4},  \ \ \ \mbox{and}\label{alpha-delta-ineqn3b}\\
2&<&\delta \sum_{i=1}^n a_i. \label{alpha-delta-ineqn4}
\end{eqnarray}
Then Packing-With-Few-Large-Items(.) is an $O(1)$ time algorithm
that gives an approximation $app$ for packing $S$ with $Opt(S)\le
app\le (1+8(\delta+\xi))Opt(S)$.
\end{lemma}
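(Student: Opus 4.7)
The plan is to prove the two inequalities separately. The algorithm's output $\frac{1+\xi}{1-\delta}(k+x+1)$ is structured so that $k+x+1$ represents a ``virtual'' packing cost ($k$ bins for small items each filled to capacity $1-\delta$, $x$ bins for large items, plus one slack bin), and the factor $\frac{1+\xi}{1-\delta}$ absorbs the slack from $s_1$ being an approximation of $\sum_{a_i\in S_{<\delta}}a_i$.

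For the lower bound $Opt(S)\le app$, I would exhibit a valid packing of $S$ using at most $app$ bins. Pack each large item (size $\ge\delta$) in its own bin, using $|S_{\ge\delta}|\le x$ bins. For the small items (size $<\delta$), pack greedily so each bin except possibly the last is filled to at least $1-\delta$, using at most $k^{*}=\lceil\sum_{a_i\in S_{<\delta}}a_i/(1-\delta)\rceil$ bins. The hypothesis $(1-\xi)\sum_{a_i\in S}a_i\le s_1+\sum_{a_i\in S_{\ge\delta}}a_i$, together with $\sum_{a_i\in S_{\ge\delta}}a_i\le |S_{\ge\delta}|\le x$ (since each item size is at most $1$), yields $\sum_{a_i\in S_{<\delta}}a_i\le(s_1+\xi x)/(1-\xi)$. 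Combining with $s_1\le k(1-\delta)$ (from the definition of $k$ at line \ref{find-k-line}) and the bound $k^{*}\le\sum_{a_i\in S_{<\delta}}a_i/(1-\delta)+1$, the desired inequality $\frac{1+\xi}{1-\delta}(k+x+1)\ge x+k^{*}$ should follow by clearing denominators and grouping terms by $k$, $x$, and constants, exploiting $\delta,\xi\le 1/4$.

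For the upper bound $app\le(1+8(\delta+\xi))Opt(S)$, I would express $k$ and $x$ in terms of $\sum_{a_i\in S}a_i$ and then invoke $Opt(S)\ge\sum_{a_i\in S}a_i$. From $k\le s_1/(1-\delta)+1$ and $s_1\le(1+\xi)\sum_{a_i\in S}a_i$ (from the hypothesis), we get $k\le(1+\xi)\sum_{a_i\in S}a_i/(1-\delta)+1$, and combining with $x\le\xi\sum_{a_i\in S}a_i$ gives $k+x+1\le\sum_{a_i\in S}a_i\bigl(\tfrac{1+\xi}{1-\delta}+\xi\bigr)+2$. Dividing by $Opt(S)\ge\sum_{a_i\in S}a_i$ and using condition (\ref{alpha-delta-ineqn4}), which gives $2/\sum_{a_i\in S}a_i<\delta$, the ratio is at most $\tfrac{1+\xi}{1-\delta}\bigl(\tfrac{1+\xi}{1-\delta}+\xi+\delta\bigr)$. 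Setting $y=\xi+\delta$ and using $\tfrac{1+\xi}{1-\delta}\le 1+\tfrac{4y}{3}$ (valid since $\delta\le 1/4$), this expands to at most $(1+\tfrac{4y}{3})(1+\tfrac{7y}{3})=1+\tfrac{11y}{3}+\tfrac{28y^{2}}{9}$; bounding $\tfrac{28y^{2}}{9}\le\tfrac{14y}{9}$ via $y\le 1/2$ yields $1+\tfrac{47y}{9}<1+8y$, as required.

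The main obstacle will be the lower bound: since $s_1$ can underestimate $\sum_{a_i\in S_{<\delta}}a_i$ by as much as $\xi\sum_{a_i\in S}a_i$, the value $k$ computed from $s_1$ alone does not immediately suffice to cover the actual small-item mass. The multiplicative factor $\tfrac{1+\xi}{1-\delta}$ together with the additive $+1$ in $k+x+1$ must jointly absorb this discrepancy, and verifying the resulting algebraic inequality requires careful coefficient tracking and use of the parameter bounds $\delta,\xi\le 1/4$ together with the mass lower bound $\delta\sum_{a_i\in S}a_i>2$.
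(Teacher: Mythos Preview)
Your upper-bound argument is essentially the paper's: bound $k+x+1$ above by $(\frac{1+\xi}{1-\delta}+\xi)s_0+2$, absorb the $+2$ via condition~(\ref{alpha-delta-ineqn4}), and expand the product $\frac{1+\xi}{1-\delta}\cdot(\cdots)$ using $\delta,\xi\le\frac14$. The intermediate constants differ slightly (the paper reaches $(1+3\delta+3\xi)$ before the final multiplication), but the route and the final $1+8(\delta+\xi)$ are the same.

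The lower bound, however, has a genuine gap. Carrying out the algebra you propose, after substituting $s_1\le k(1-\delta)$ one needs
\[
\Bigl(\tfrac{1+\xi}{1-\delta}-\tfrac{1}{1-\xi}\Bigr)k
+\Bigl(\tfrac{1+\xi}{1-\delta}-1-\tfrac{\xi}{(1-\xi)(1-\delta)}\Bigr)x
+\Bigl(\tfrac{1+\xi}{1-\delta}-1\Bigr)\ \ge\ 0,
\]
and the coefficient of $k$ is $\frac{\delta-\xi^2}{(1-\delta)(1-\xi)}$, which is negative whenever $\xi^2>\delta$; the coefficient of $x$ is similarly $\frac{\delta(1-\xi)-\xi^2}{(1-\xi)(1-\delta)}$. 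Nothing in the hypotheses forces $\xi^2\le\delta$. Concretely, take $\delta$ tiny, $\xi=\frac14$, $S_{\ge\delta}=\emptyset$ (so $x=0$), and $s_1=(1-\xi)s_0=\frac34 s_0$: then $k\approx\frac34 s_0$, $app\approx\frac54\cdot\frac34 s_0=\frac{15}{16}s_0$, yet $Opt(S)\ge s_0$. So the inequality $Opt(S)\le app$ actually fails under the stated hypotheses, and no amount of ``careful coefficient tracking'' will rescue it.

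For comparison, the paper does not argue the lower bound directly as you do; it introduces an auxiliary list $S'=S'_{<\delta}\cup S_{\ge\delta}$ with $\sum_{a_i\in S'_{<\delta}}a_i=s_1$, observes that $k+x+1$ bins pack $S'$, and then invokes Lemma~\ref{global-convert-lemma} to transfer this to $S$ at multiplicative cost $\frac{1+\xi}{1-\delta}$. But that lemma requires $\sum_{a_i\in S}a_i\le(1+\xi)\sum_{a_i\in S'}a_i$, whereas the hypothesis here only gives $\sum_{a_i\in S}a_i\le\frac{1}{1-\xi}\sum_{a_i\in S'}a_i$ --- the same $\xi^2$ discrepancy. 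In the paper's actual applications (Approximate-Bin-Packing, lines setting $\xi=\max(\delta^2,\theta+\delta^3)$ or $\xi=\max(12\delta,\theta+\delta^3)$) one always has $\xi=O(\delta)$, so $\xi^2\ll\delta$ and the issue is harmless there; but as a self-contained lemma the lower-bound claim needs that extra relation.
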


\begin{proof}   The bin
packing problem is the same as the regular bin packing problem that
all bins are of the same size $1$. The problem is to minimize the
total number bins to pack all items.

Imagine $S_{<\delta}'$ is a list of elements of size less than
$\delta$ and $\sum_{a_i\in S'}a_i=s_1$. Let $S'=S_{<\delta}'\cup
S_{\ge \delta}$. Let $s_0=\sum_{i=1}^n a_i$ to be the sum of sizes
of input items. By line~\ref{find-k-line} in
Packing-With-Few-Large-Items(.), we have
\begin{eqnarray}
k+x&\le&{s_1\over 1-\delta}+1+x\\
&\le& {s_0(1+\xi)\over 1-\delta}+\xi s_0+1\\
&\le& ({1+\xi\over 1-\delta}+\xi) s_0+1.
\end{eqnarray}


Furthermore, assume that the inequalities~(\ref{alpha-delta-ineqn3})
to (\ref{alpha-delta-ineqn4}) holds. We have
\begin{eqnarray*}
({1+\xi\over 1-\delta}+\xi)
&\le&(1+{\xi+\delta\over 1-\delta}+\xi)\\
&\le&(1+2(\xi+\delta)+\xi)\\
&=&(1+2\delta+3\xi)
\end{eqnarray*}

Therefore,  we have
\begin{eqnarray*}
k+x+1&\le&  (1+2\delta+3\xi)Opt(S)+1\\
&\le& (1+3\delta+3\xi)Opt(S)  \ \ \ \ \ \mbox{(by\ inequality\
(\ref{alpha-delta-ineqn4}))}.
\end{eqnarray*}
By Lemma~\ref{global-convert-lemma}, we have the ${1+\xi\over
1-\delta}(1+3\delta+3\xi)$-approximation for packing $S$. We note
that
\begin{eqnarray*}
{1+\xi\over 1-\delta}(1+3\delta+3\xi)&\le& (1+{\xi+\delta\over
1-\delta})(1+3\delta+3\xi)\\
&\le& (1+2(\xi+\delta))(1+3\delta+3\xi)\\
&\le& 1+2(\xi+\delta)+(3\delta+3\xi)+2(\xi+\delta)(3\delta+3\xi)\\
&\le& 1+2(\xi+\delta)+3(\delta+\xi)+3(\delta+\xi)\\
&\le& 1+8(\xi+\delta).
\end{eqnarray*}

\end{proof}

\subsection{Full Sublinear Time Approximation Scheme for Bin Packing
}\label{full-sublinear-section}

Now we present a sublinear time approximation scheme for the bin
packing problem. The brief idea of our sublinear time algorithm is
given in Section~\ref{idea-overview-sec}. After setting up some
parameters, it divides the interval $(0,1]$ for item sizes into
$O(\log n)$ intervals $(0,1]=I_1\cup \ldots  \cup I_k$, called a
$(\varphi, \delta, \gamma)$-partition as described in
section~\ref{adaptive-sampling-sec}. Applying the algorithm
described in section~\ref{adaptive-sampling-sec}, we get an
approximation about the distribution of the items in the intervals
$I_1,\ldots , I_k$. If the total size $\sum_{i=1}^n a_i$ is too
small, for example $O(1)$, the linear time algorithm described in
section~\ref{deterministic-section} is used to output an
approximation for the bin packing problem. Otherwise, we give a
sublinear time approximation for the bin packing problem. In order
to pack large items, we derive the approximate crucial items, which
are the approximate $ih$-th elements among the large items of size
at least $\varphi$ for $i=1,\ldots , m$, where $h$ and $m$ are
defined in equations (\ref{h-def-eqn}), and ((\ref{m-def})),
respectively. The algorithm described in
section~\ref{deterministic-section} is used to pack large items. The
small items are filled into bins which have space left after packing
large items, and some additional fresh bins. With the approximate
sum of sizes of small items, we can calculate the approximate number
of fresh bins to be needed to pack them. If the total sum of the
sizes of large items is too small to affect the total approximation
ratio, we just directly pack the small items according to
approximate sum of the sizes of those small objects.

\vskip 10pt

{\bf Algorithm Approximate-Bin-Packing$(\tau, n, S)$}

Input: a positive real number $\tau $, an integer $n$, and  a list
$S$ of $n$ items $a_1,\ldots , a_n$ in $(0,1]$.

Output: an approximation $app$ with $Opt(S)\le app\le (1+\tau
)Opt(S)+1$.

Steps:

\begin{enumerate}[1.]

\item\label{beta-assingment-line}
\qquad Let $\beta:={\tau \over 30}$ and $\epsilon:=6\beta$.

\item\label{delta-assingment-line}
\qquad Let $\delta:={\epsilon\over 4}$ and $\theta:={\epsilon
\delta\over 36}$.

\item
\qquad Let $\mu,\epsilon_1$ and $m$ are selected by
equations~(\ref{mu-def}), (\ref{epsilon1-def}), and (\ref{m-def}),
respectively.

\item\label{c-eta-k-setting-line}
\qquad Let $c:=\eta:=k:=1$ (classical bin packing).

\item
\qquad Let $\alpha:=1/12$.

\item\label{varphi-assignment-line}
\qquad Let $\varphi:=\delta$.

\item
\qquad Let $\gamma:=\delta^3$.

\item\label{d1-def}
\qquad Select an integer constant $d_1$ such that $g_1({1\over
2})^{d_1\over 1-\delta}<\alpha$.

\item\label{partition-line}
\qquad Derive a $(\varphi, \delta, \gamma)$-partition $P=I_1\cup
\ldots \cup I_k$ for $(0,1]$.

\item\label{Approximate-Interval-line}
\qquad Let $(s, s_1, n_{\ge
\varphi}')$:=Approximate-Interval$(\varphi, \delta, \gamma, \theta,
\alpha, P, n,S)$ (see Lemma~\ref{app-sum-lemma}).


\item\label{if-s-small-inequality}
\qquad If $s<\max(({4m\over \theta \delta^2}), ({4\over
\delta^2}\cdot {(1+\theta)m\over \theta}), ({16\over
\delta^2}\cdot{(1+\theta)\over \beta\delta}))$
\item
\qquad then
\item\label{line-time-line}
\qquad\qquad Output Linear-Time-Packing$(n,S)$ (see
Lemma~\ref{linear-time-packing})  and terminate the algorithm.

\item\label{if-condition}
\qquad If $n_{\ge \varphi}'\ge {\delta^2\over 4}s$
\item
\qquad then

\item\label{if-end}
\qquad\qquad Output Packing-With-Many-Large-Items$(\alpha,\beta, n,
s_1, n_{\ge \delta}',S)$ (see
Lemma~\ref{convert-approximation2b-lemma}).

\item
\qquad else

\item
\qquad\qquad If $n_{\ge \varphi}'>0$

\item
\qquad\qquad then let $x:={n_{\ge \varphi}'\over 1-\theta}$ and
$\xi:=\max(\delta^2,\theta+\delta^3)$

\item
\qquad\qquad else let $x:=6\delta s$ and
$\xi:=\max(12\delta,\theta+\delta^3)$.

\item\label{execute-Packing-With-Few-Large-Items}
\qquad\qquad Output Packing-With-Few-Large-Items$(\xi, x, s_1)$ (see
Lemma~\ref{convert-approximation3-lemma}).

\end{enumerate}


{\bf End of Algorithm}
\vskip 10pt

\begin{proof}[Theorem~\ref{random-approx-theorem}] 
Calling function Approximate-Interval$(.)$ in
line~\ref{Approximate-Interval-line} in the algorithm
Approximate-Bin-Packing$(.)$,   we obtain  $s$ for an approximate
sum $\sum_{i=1}^na_i$ of items in list $S$,  $s_1$ for an
approximate sum
 of items in list $S_{<\varphi}$, an approximate
number $n_{\ge \varphi}'$ of items of size at least $\varphi$ (see
Lemma~\ref{app-sum-lemma}). With probability at most $\alpha$, at
least one of statements~\ref{item1-app-sum-lemma},
~\ref{item2-app-sum-lemma}, \ref{item3-app-sum-lemma},
\ref{item3b-app-sum-lemma}, and \ref{item4-app-sum-lemma} of
Lemma~\ref{app-sum-lemma} of Lemma~\ref{app-sum-lemma} is false.
Therefore, we have probability at most $\alpha$, the following
statement (a) is false:

(a). Statements~\ref{item1-app-sum-lemma},
~\ref{item2-app-sum-lemma}, \ref{item3-app-sum-lemma},
\ref{item3b-app-sum-lemma}, and \ref{item4-app-sum-lemma} of
Lemma~\ref{app-sum-lemma} are true.

Assume that statement (a) is true in the rest of the proof. By
statement~\ref{item1-app-sum-lemma} of Lemma~\ref{app-sum-lemma}, we
have that if $n_{\ge\varphi}'>0$, then
\begin{eqnarray}
(1-\theta)n_{\ge\varphi}\le n_{\ge\varphi}'\le
(1+\theta)n_{\ge\varphi}.\label{n-varphi-accruacy-ineqn}
\end{eqnarray}
Let $s_0=\sum_{i=1}^n a_i$. By line~\ref{Approximate-Interval-line}
in Approximate-Bin-Packing(.) and Lemma~\ref{app-sum-lemma}, $s$ is
an approximation of $s_0=\sum_{i=1}^n a_i$, $s_1$ is an
approximation of $\sum_{i=1, a_i<\varphi}^n a_i$, and $n_{\ge
\varphi}'$ is an approximation of the number $n_{\ge \varphi}$ of
items of size at least $\varphi$.  A
$(\varphi,\delta,\gamma)$-partition for $(0,1]$ divides the interval
$(0,1]$ into intervals $I_1=[\pi_1, \pi_0], I_2=(\pi_2, \pi_1],
I_3=(\pi_3,\pi_2],\ldots ,I_k=(0, \pi_{k-1}]$ as in
Definition~\ref{partition-def}.

{\bf Claim~\ref{random-approx-theorem}.1.} If the condition in
line~\ref{if-s-small-inequality} of Approximate-Bin-Packing(.) is
true, the algorithm outputs an approximation $app(S)$ for the bin
packing problem $S$ with $Opt(S)\le app(S)\le (1+\tau)Opt(S)+1$.

\begin{proof}
 We note that if the condition in
line~\ref{if-s-small-inequality} is true, then $s=O(1)$ since
$\beta, \theta, m,$ and $\delta$ are all constants. By
statement~\ref{item3-app-sum-lemma} of Lemma~\ref{app-sum-lemma}, we
have $s_0=O(1)$. In this case, we use the linear time deterministic
algorithm by Lemma~\ref{linear-time-packing}, which warrants the
desired ratio of approximation.
\end{proof}

In the rest of the proof, we assume that the condition in
line~\ref{if-s-small-inequality} is false. We have the inequality:
\begin{eqnarray}
s\ge\max(({4m\over \theta \delta^2}), ({4\over \delta^2}\cdot
{(1+\theta)m\over \theta}), ({16\over \delta^2}\cdot{(1+\theta)\over
\beta\delta})).\label{s-first-lower-bound-ineqn}
\end{eqnarray}

By inequality~(\ref{s-first-lower-bound-ineqn}), we have the
inequality
\begin{eqnarray}
s\ge {8\over \delta^2}\cdot {1\over \beta\delta}\ge {8\over
\delta^3}.\label{s-lowerbound-ineqn}
\end{eqnarray}
 Therefore,
\begin{eqnarray}
\delta\le {\delta ^4s\over 8}.\label{delta-s-ineqn}
\end{eqnarray}

By statement~\ref{item3-app-sum-lemma} of Lemma~\ref{app-sum-lemma},
we have $s\le (1+\theta)(\sum_{i=1}^na_i)=(1+\theta) s_0$. By
inequality~(\ref{s-lowerbound-ineqn}) and the fact $\delta\le 1$ (by
the setting in line~\ref{delta-assingment-line}), we have
\begin{eqnarray}
s_0\ge {s\over 1+\theta}\ge {s\over 2}\ge {4\over \delta^3}\ge
4\label{s0-lower-bound-inequality}.
\end{eqnarray}

By inequality~(\ref{s0-lower-bound-inequality}) and
statement~\ref{item3b-app-sum-lemma} of Lemma~\ref{app-sum-lemma},
we have
\begin{eqnarray}
{1\over 4}(1-\theta)(1-\delta)\varphi(\sum_{i=1}^na_i)\le s\le
(1+\theta)s_0.\label{s0-s-lower-b-ineqn}
\end{eqnarray}

{\bf Claim~\ref{random-approx-theorem}.2.} If the condition at
line~\ref{if-condition} of the algorithm Approximate-Bin-Packing(.)
is true, then  with failure probability at most $\alpha$, the
algorithm outputs an approximation $app(S)$ for the bin packing
problem with $Opt(S)\le app(S)\le (1+\tau)Opt(S)+1$.

\begin{proof}
Assume that the condition at line~\ref{if-condition} of the
algorithm Approximate-Bin-Packing(.) is true.  This is the case that
the number of large items is large. The condition of
line~\ref{if-s-small-inequality} in Approximate-Bin-Packing(.) is
false. Since condition of line~\ref{if-condition} in
Approximate-Bin-Packing(.) is true,  we have
\begin{eqnarray}
h'm&\ge& \floor{{n_{\ge \varphi}'\over m}}m\\
&\ge& ({n_{\ge \varphi}'\over m}-1)m\\
&=& n_{\ge \varphi}'-m\label{h'm-ineqn0}\\
&\ge& {\delta^2\over 4}s-m\label{h'm-ineqn1}\\
&\ge& {\delta^2\over 4}s-{s\over 64}\ \ \ \ \ \mbox{(by inequality\
(\ref{s-first-lower-bound-ineqn}))}\label{h'm-ineqn1b}\\
&\ge& {\delta^2\over 8}s\label{h'm-ineqn1c}\\
&\ge& {2\over \beta\delta}, \ \ \ \ \ \mbox{(by inequality\
(\ref{s-first-lower-bound-ineqn}))}\label{h'm-ineqn2}
\end{eqnarray}
 where $h'$ is defined is
statement~\ref{define-h'-Approximate-Bin-Packing} of
Packing-With-Many-Large-Items(.).  Note that  the transition from
inequality (\ref{h'm-ineqn0}) to inequality (\ref{h'm-ineqn1}) is
due to condition of line~\ref{if-condition} in
Approximate-Bin-Packing(.) is true, and the transition from
inequality (\ref{h'm-ineqn1}) to inequality (\ref{h'm-ineqn2}) is
due to inequality~(\ref{s-first-lower-bound-ineqn}),
 Thus, the
inequality~(\ref{large-n-ineqn}) condition in
Lemma~\ref{convert-approximation-lemma} is true.

Inequality (\ref{tight-approx-theta-ineqn}) is satisfied because of
inequality (\ref{n-varphi-accruacy-ineqn}). Inequality
(\ref{small-theta-ineqn}) is satisfied because of the setting in
lines~\ref{beta-assingment-line} to~\ref{c-eta-k-setting-line} of
Approximate-Bin-Packing(.).
 We have the inequality
\begin{eqnarray}
\theta\floor{n_{\ge\varphi}\over m}&\ge&
\theta\floor{n_{\ge\varphi}'\over (1+\theta)m}\label{theta-times-n-varphi-ineqn1}\\
&\ge&
\theta\floor{\delta^2s\over 4(1+\theta)m}\label{theta-times-n-varphi-ineqn2}\\
&\ge&
\theta\floor{\delta^2\cdot ({4\over \delta^2}\cdot {(1+\theta)m\over \theta})\over 4(1+\theta)m} \label{theta-times-n-varphi-ineqn3}\\
&\ge&
\theta\floor{(1+\theta)\over \theta}\\
&\ge&
\theta\floor{{1\over \theta}+1}\\
&\ge& \theta\cdot {1\over \theta}=1.
\end{eqnarray}

The transition from inequality (\ref{theta-times-n-varphi-ineqn1})
to inequality (\ref{theta-times-n-varphi-ineqn2}) is because the
condition of statement~\ref{if-condition} of
Approximate-Bin-Packing(.)) is true. The transition from inequality
(\ref{theta-times-n-varphi-ineqn2}) to inequality
(\ref{theta-times-n-varphi-ineqn3}) is because of
inequality~(\ref{s-first-lower-bound-ineqn}). Thus,
inequality~(\ref{large-n-n'-ineqn}) is satisfied.

By Lemma~\ref{convert-approximation2b-lemma}, the algorithm gives an
approximation $app(S)$ such that $Opt(S)\le app(S)\le
(1+16\beta)Opt(S)\le (1+\tau)Opt(S)$ (by the setting of $\beta$ in
statement~\ref{beta-assingment-line} of Approximate-Bin-Packing(.))
with the failure probability at most $\alpha$.
\end{proof}

{\bf Claim~\ref{random-approx-theorem}.3.} If the condition at
line~\ref{if-condition} of the algorithm Approximate-Bin-Packing(.)
is false, then the algorithm outputs an approximation $app(S)$ for
the bin packing problem with $Opt(S)\le app(S)\le (1+\tau)Opt(S)+1$.

\begin{proof}
In the case that the condition at line~\ref{if-condition} does not
hold, we have that
\begin{eqnarray}
n_{\ge \varphi}'&<&{\delta^2\over 4}s\\
&\le& {\delta^2\over 4}(1+\delta)s_0\ \ \ \mbox{(by\ inequality~(\ref{s0-s-lower-b-ineqn}))}\\
&\le& {\delta^2\over 2}s_0.\label{n-ge-varphi-s0}
\end{eqnarray}


Line~\ref{execute-Packing-With-Few-Large-Items} in the algorithm
Approximate-Bin-Packing(.) will be executed.
By inequality~(\ref{s0-lower-bound-inequality}),
inequality~(\ref{alpha-delta-ineqn4}) is true.
Inequalities~(\ref{alpha-delta-ineqn3}) and
(\ref{alpha-delta-ineqn3b}) follow from
lines~\ref{beta-assingment-line} and \ref{delta-assingment-line} in
the Algorithm Approximate-Bin-Packing(.).

By statements~\ref{item1-app-sum-lemma}
and~\ref{item2-app-sum-lemma} of Lemma~\ref{app-sum-lemma}, we have
\begin{eqnarray}
s_1&=&\sum_{\hat{C}(I_j,S)>0\ and \ j>1}\hat{C}(I_j,S)\pi_j\\
&\ge&\sum_{\hat{C}(I_j,S)>0\ and \ j>1}(1-\theta)C(I_j,S)\pi_j\ \ \
\mbox{(by\ statement\ \ref{item1-app-sum-lemma}\ of\
Lemma~\ref{app-sum-lemma})}\\
&\ge&(1-\theta)\sum_{a_i\in I_j\ with\ \hat{C}(I_j,S)>0\ and \ j>1}a_i\\
&\ge&(1-\theta)\sum_{a_i\in I_j\ and \ j>1}a_i-\sum_{a_i\in I_j\ with\ \hat{C}(I_j,S)=0\ and \ j>1}a_i\\
&\ge&(1-\theta)\sum_{a_i\in S_{<\varphi}}a_i-({\delta^3\over
2}\sum_{a_i\in S_{< \varphi}}a_i+{\gamma\over n}).\ \ \ \mbox{(by\
statement\ \ref{item2-app-sum-lemma}\ of\
Lemma~\ref{app-sum-lemma})}\\
\end{eqnarray}

We have
\begin{eqnarray}
s_1+\sum_{a_i\in S_{\ge \varphi}} a_i&\ge& (1-\theta)(\sum_{a_i\in
S_{< \varphi}}a_i)-({\delta^3\over 2}\sum_{a_i\in S_{<
\varphi}}a_i+{\gamma\over
n})+\sum_{a_i\in S_{\ge \varphi}}a_i\\
&\ge& (1-\theta)(\sum_{a_i\in S}a_i)-({\delta^3\over 2}\sum_{a_i\in
S_{< \varphi}}a_i+{\gamma\over
n})\\
&\ge& (1-\theta)(\sum_{a_i\in S}a_i)-({\delta^3\over 2}\sum_{a_i\in
S}a_i+{\gamma\over n})\ \ \ \ \mbox{(note $S_{<\varphi}\subseteq S$)}\\
&\ge& (1-\theta-{\delta^3\over 2})(\sum_{a_i\in S}a_i)-{\gamma\over
n}\\
&\ge& (1-\theta-\delta^3)(\sum_{a_i\in S}a_i).\ \ \ \mbox{(by\
inequality~(\ref{s0-lower-bound-inequality}))}
\end{eqnarray}

By statements~\ref{item1-app-sum-lemma}
and~\ref{item2-app-sum-lemma} of Lemma~\ref{app-sum-lemma}, we have
\begin{eqnarray}
s_1&=&\sum_{\hat{C}(I_j,S)>0\ and \ j>1}\hat{C}(I_j,S)\pi_j\\
&\le&\sum_{\hat{C}(I_j,S)>0\ and \ j>1}(1+\theta)C(I_j,S)\pi_j\ \ \
\mbox{(by\ statement\ \ref{item1-app-sum-lemma}\ of\
Lemma~\ref{app-sum-lemma})}\\
&\le&{1+\theta\over 1-\varphi}\sum_{a_i\in I_j\ with\ \hat{C}(I_j,S)>0\ and \ j>1}a_i\\
&\le&{1+\theta\over 1-\varphi}\sum_{a_i\in S_{<\varphi}}a_i.
\end{eqnarray}

By statement~\ref{item1-app-sum-lemma} of Lemma~\ref{app-sum-lemma},
we have
\begin{eqnarray} s_1+\sum_{a_i\in S_{\ge \varphi}} a_i
&\le& {1+\theta\over 1-\varphi}\sum_{a_i\in S}a_i\\
&\le& (1+\theta)(1+2\varphi)\sum_{a_i\in S}a_i\\
&\le& (1+\theta+4\varphi)\sum_{a_i\in S}a_i.
\end{eqnarray}

Therefore,
\begin{eqnarray}
(1-(\theta+\delta^3))(\sum_{a_i\in S}a_i)&\le& s_1+\sum_{a_i\in
S_{\ge \varphi}} a_i \le (1+(\theta+4\varphi))(\sum_{a_i\in
S}a_i).\label{s1-left-right-bound-ineqn}
\end{eqnarray}

Since the condition at line~\ref{if-condition} in
Approximate-Bin-Packing(.) is false, we discuss two cases

\begin{itemize}
\item Case $n_{\ge \varphi}'>0$.

We have the inequalities
\begin{eqnarray}
\sum_{a_i\ge \varphi} a_i&\le&n_{\ge \varphi}\\
&\le&(1+\theta)n_{\ge\varphi}'\ \ \ \mbox{(by\ inequality~(\ref{n-varphi-accruacy-ineqn}))}\\
&\le&2n_{\ge\varphi}'\\
&\le& \delta^2 s_0.    \ \ \ \mbox{(by\
inequality~(\ref{n-ge-varphi-s0}) )}
\end{eqnarray}

By statement \ref{item1-app-sum-lemma} of Lemma~\ref{app-sum-lemma},
we have
\begin{eqnarray}
{n_{\ge \varphi}'\over 1-\theta}\ge
|S_{\ge\delta}|.\label{n'-lower-bound-for-n'>0}
\end{eqnarray}

 We also have
\begin{eqnarray}
{n_{\ge \varphi}'\over 1-\theta}&\le& {\delta^2\over 4(1-\theta)}s\label{n-s0-s-ineqn}\ \ \mbox{(line~\ref{if-condition}\ in\ Approximate-Bin-Packing(.)\ is\ false)}\\
&\le& {\delta^2\over 2}s\\
&\le& {\delta^2\over 2}(1+\delta)s_0\ \ \mbox{(by~inequality~(\ref{s0-s-lower-b-ineqn}))}\\
&\le& \delta^2 s_0.\label{n-s0-s-ineqn3}
\end{eqnarray}

In this case, $x= {n_{\ge \varphi}'\over 1-\theta}$ by inequality
(\ref{n'-lower-bound-for-n'>0}) and
inequalities~(\ref{n-s0-s-ineqn}) to (\ref{n-s0-s-ineqn3}),
 and
$\xi=\max(\delta^2,\theta+\delta^3)$ by
inequality~(\ref{s1-left-right-bound-ineqn}). They satisfy the
conditions of Lemma~\ref{convert-approximation3-lemma}, which
implies that
 the approximation ratio is  $(1+8(\delta+\xi))\le (1+\tau)$ by the
 assignments in lines \ref{beta-assingment-line} and \ref{delta-assingment-line} in algorithm
Approximate-Bin-Packing(.).

\item
Case $n_{\ge \varphi}'=0$.

 By statement \ref{item2-app-sum-lemma} of
Lemma~\ref{app-sum-lemma}, we have
\begin{eqnarray*}
\delta|S_{\ge \varphi}|&\le&\sum_{a_i\ge \varphi}a_i\\
&=&\sum_{a_i\in I_1}a_i\\
&\le& {\delta^3\over 2}s_0+{\gamma\over n}\ \ \ \mbox{(apply
statement~\ref{item2-app-sum-lemma}\ of\
Lemma~\ref{app-sum-lemma}\ with\ }\hat{C}(I_1,S)=n_{\ge \varphi}'=0)\\
&\le& {\delta^3\over 2}s_0+{\delta}\\
&\le& {\delta^3\over 2}s_0+{\delta^4 s\over 8}\ \ \mbox{(by~inequality~(\ref{delta-s-ineqn}))}\\
&\le& {\delta^3\over 2}s_0+{\delta^4 (1+\delta)s_0\over 8}\ \ \mbox{(by~inequality~(\ref{s0-s-lower-b-ineqn}))}\\
&\le& {\delta^3\over 2}s_0+{\delta^4 s_0\over 4}\\
&\le& {3\delta^3\over 4}s_0\\
&\le& {3\delta^3\over 4}{8s\over\delta}\ \ \ \mbox{(by\
inequality~(\ref{s0-s-lower-b-ineqn}))}\\
&\le& 6\delta^2 s.
\end{eqnarray*}
Therefore,
\begin{eqnarray}
|S_{\ge \varphi}|&\le& 6\delta s\label{s-S-s0-ineqn1}\\
&\le&6\delta(1+\delta)s_0\label{s-S-s0-ineqn2}\\
 &\le&12\delta s_0.\label{s-S-s0-ineqn3}
\end{eqnarray}
\end{itemize}

In this case, let $x= 6\delta s$ by
inequality~(\ref{s-S-s0-ineqn1}), and let
$\xi=\max(12\delta,\theta+\delta^3,1+\theta+4\varphi)$ by
inequality~(\ref{s1-left-right-bound-ineqn}) and
inequalities~(\ref{s-S-s0-ineqn1}) to (\ref{s-S-s0-ineqn3}). They
satisfy the conditions of Lemma~\ref{convert-approximation3-lemma},
which implies
 the approximation ratio is  $(1+8(\delta+\xi))\le (1+\tau)$ by the
 assignments in lines \ref{beta-assingment-line} and \ref{delta-assingment-line} in algorithm
Approximate-Bin-Packing(.).
This completes the proof of Claim~\ref{random-approx-theorem}.3.
\end{proof}

{\bf Claim~\ref{random-approx-theorem}.4.} The algorithm runs in
$O({n(\log n)(\log\log n)\over \sum_{i=1} a_i}+({1\over \tau
})^{O({1\over\tau })})$ time.

\begin{proof}
We give the computational time about the algorithm.
Lines~\ref{beta-assingment-line} to \ref{d1-def} take $O(1)$ time.
Line \ref{partition-line} takes $O(\log n)$ time.
 By
Lemma~\ref{app-sum-lemma}, Line~\ref{Approximate-Interval-line}
takes $O(({n\over \sum_{i=1}^n a_i})(\log n)\log\log n))$ time. Line
\ref{line-time-line} takes $O(n)$ time by calling
Linear-Time-Packing$(S)$ by Lemma~\ref{linear-time-packing}. This
only happens when $\sum_{i=1}^n a_i=O(1)$.

By Lemma~\ref{convert-approximation2b-lemma}, statement~\ref{if-end}
of Approximate-Bin-Packing(.) takes $O({n\over \sum_{i=1}^n
a_i}+O({1\over \beta})^{O({1\over\beta})})=O({n\over \sum_{i=1}^n
a_i}+O({1\over \tau})^{O({1\over\tau})})$ time.

Line~\ref{execute-Packing-With-Few-Large-Items} takes $O(1)$ time by
Lemma~\ref{convert-approximation3-lemma}. Therefore, in the worst
case, the algorithm takes $O({n(\log n)(\log\log n)\over \sum_{i=1}
a_i}+({1\over \tau })^{O({1\over\tau })})$ time.

{\bf Claim~\ref{random-approx-theorem}.5.} The failure probability
of the algorithm is at most ${1\over 4}$.

\begin{proof} Two statements~\ref{Approximate-Interval-line} and~\ref{if-end} in the algorithm  may fail due to randomization.
Each of them has probability at most $\alpha$ to fail by
Lemma~\ref{app-sum-lemma} (for statement (a)),  and
Claim~\ref{random-approx-theorem}.2. Therefore, the failure
probability of the entire algorithm is at most $2\alpha\le {1\over
4}$.
\end{proof}

\end{proof}

The theorem follows from the above claims. This completes the proof
of Theorem~\ref{random-approx-theorem}

\end{proof}



The following Theorem~\ref{dense-hierarchy-theorem} gives a dense
sublinear time hierarchy approximation scheme for bin packing
problem.

\begin{theorem}\label{dense-hierarchy-theorem}
For each $\epsilon\in (0,1)$, and $b\in (0,1]$, there is a randomize
$(1+\epsilon)$-approximation for all $\sum(n^{b})$-bin packing
problems in time $O(n^{1-b}(\log n)\log \log n)$ time, but there is
no $o(n^{1-b})$ time $(1+\epsilon)$-approximation algorithm
$\sum(n^{b})$-bin packing problem.
\end{theorem}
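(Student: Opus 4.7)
The plan is to derive Theorem~\ref{dense-hierarchy-theorem} as an almost immediate consequence of the two main results already in hand: Theorem~\ref{random-approx-theorem} for the upper bound and Theorem~\ref{strong-lower-bound-theorem} for the lower bound. Essentially, the hierarchy statement is a parametrization of those theorems in terms of the aggregate mass function $f(n) = n^{b}$.

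For the upper bound, I would fix $\epsilon \in (0,1)$ and $b \in (0,1]$, and let $S$ be any instance in $\sum(n^{b})$, so that $\sum_{i=1}^{n} a_i = n^{b}$ by definition. Apply Approximate-Bin-Packing$(.)$ with accuracy parameter $\tau = \epsilon$. By Theorem~\ref{random-approx-theorem}, with probability at least $3/4$ it returns an approximation $app$ satisfying $Opt(S) \le app \le (1+\epsilon)Opt(S)+1$ in time
\[
O\!\left(\frac{n(\log n)(\log\log n)}{\sum_{i=1}^n a_i} + \Big(\tfrac{1}{\epsilon}\Big)^{O(1/\epsilon)}\right)
= O\!\left(n^{1-b}(\log n)(\log\log n) + \Big(\tfrac{1}{\epsilon}\Big)^{O(1/\epsilon)}\right).
\]
Since $\epsilon$ is a fixed constant, the second term is $O(1)$, giving the claimed $O(n^{1-b}(\log n)\log\log n)$ running time.

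For the lower bound, I would invoke Theorem~\ref{strong-lower-bound-theorem} with $f(n) = n^{b}$. When $b \in (0,1)$, $f$ is nondecreasing and unbounded with $f(n) = o(n)$, so the hypothesis of Theorem~\ref{strong-lower-bound-theorem} is met. That theorem rules out any randomized $(2-\epsilon')$-approximation running in $o(n/f(n)) = o(n^{1-b})$ time for some constant $\epsilon' \in (0,1)$. In particular, for any fixed $\epsilon \in (0,1)$ there is no $o(n^{1-b})$ time $(1+\epsilon)$-approximation scheme for $\sum(n^{b})$-bin packing, which is exactly the negative half of the statement.

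The only slightly delicate case is $b = 1$, where the lower bound degenerates to $\Omega(n^{0}) = \Omega(1)$ and Theorem~\ref{strong-lower-bound-theorem} does not formally apply because $f(n) = n$ is not $o(n)$. Here the lower bound claim ``no $o(n^{1-b}) = o(1)$ time algorithm'' is vacuous since every algorithm uses at least constant time to read a sample, so nothing needs to be proved. Thus there is no serious obstacle; the theorem follows by direct specialization, with the only care needed being the trivial boundary case $b=1$.
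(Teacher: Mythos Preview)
Your proposal is correct and matches the paper's own proof, which simply states that the theorem follows from Theorem~\ref{random-approx-theorem} and Theorem~\ref{strong-lower-bound-theorem}. Your additional remarks on absorbing the $({1/\epsilon})^{O(1/\epsilon)}$ term as a constant and on the vacuity of the lower bound at $b=1$ are sound elaborations of what the paper leaves implicit.
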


\begin{proof}
It follows from Theorem~\ref{random-approx-theorem} and
Theorem~\ref{strong-lower-bound-theorem}.
\end{proof}

\subsection{NP Hardness}

In this section, we show that $\sum(n^b)$  and $S(\delta)$ are both
NP-hard. We reduce the 3-partition problem, which is defined below,
to them.

\begin{definition}
The {\it 3-partition problem} is to decide whether a given multiset
of integers in the range $({B\over 4}, {B\over 2})$ can be
partitioned into triples that all have the same sum $B$, where $B$
is an integer. More precisely, given a multiset S  of $n = 3 t$
positive integers, can S be partitioned into m subsets $S_1, S_2,
\ldots , S_t$ such that the sum of the numbers in each subset is
equal?
\end{definition}

It is well known that 3-partition problem is
NP-complete~\cite{GareyJohnson}. It is used in proving the following
NP-hard problems (Theorem~\ref{NP-hard-sublinear-theorem} and
Theorem~\ref{delta-bin-packing-NP-theorem})

\begin{theorem}\label{NP-hard-sublinear-theorem}
For each constant $b\in (0,1)$, the bin packing problem in
$\sum(n^b)$ is NP-hard.
\end{theorem}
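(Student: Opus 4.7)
The plan is to give a polynomial reduction from the strongly NP-complete 3-partition problem to $\sum(n^b)$-bin packing. Given a 3-partition instance with $3t$ positive integers in $(B/4, B/2)$ summing to $tB$, I would first rescale to items $a_i = x_i/B \in (1/4, 1/2)$ summing to $t$; by the classical argument, this 3-partition instance is a YES-instance iff these items pack into $t$ unit bins, and otherwise any packing requires at least $t+1$ bins.

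Next I would append padding items that set the total item count to some integer $n$ and the total sum to exactly $n^b$, while preserving the $+1$ gap in bin-packing optimum between YES and NO 3-partition instances. Specifically, pick an integer $n$ in the open interval $\bigl((t+3/4)^{1/b}, (t+1)^{1/b}\bigr)$ and let $V := n^b - t \in (3/4, 1)$. Choose a rational $\mu$ with $3/4 < \mu < V$, append one ``marker'' padding item of size $\mu$, and append $n - 3t - 1$ identical ``tiny'' padding items each of size $(V - \mu)/(n - 3t - 1)$. The resulting instance has exactly $n$ items with total sum $t + \mu + (V - \mu) = n^b$, so it lies in $\sum(n^b)$.

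The analysis would proceed by observing that $\mu > 3/4$ forces the marker into its own bin: its sum with any 3-partition item is strictly greater than $3/4 + 1/4 = 1$, and two markers sum to more than $3/2$. In the YES case, packing the 3-partition items into $t$ exactly-full bins and putting the marker together with all tiny items into one additional bin (they fit since the tiny total $V - \mu$ is less than the marker's residual capacity $1 - \mu$) gives $t + 1$ bins, matching the volume lower bound $\lceil n^b \rceil = t + 1$. In the NO case, suppose for contradiction the optimum were $t + 1$; then the marker must occupy its own bin, leaving $t$ bins to hold all $3t$ 3-partition items whose total is $t$, which forces each of those bins to be exactly full and therefore a 3-partition triple---contradicting the NO assumption. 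Hence the NO optimum is at least $t + 2$, and the reduction cleanly distinguishes the two cases.

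The hard part will be verifying that the interval $\bigl((t + 3/4)^{1/b}, (t+1)^{1/b}\bigr)$ actually contains an integer. By the mean value theorem its length equals $(1/b)\,\xi^{1/b - 1}/4$ for some $\xi \in (t + 3/4, t + 1)$, which is $\Theta(t^{1/b - 1})$; because $b < 1$ this diverges with $t$, so the interval contains an integer for all $t$ above a constant threshold depending only on $b$. The finitely many values of $t$ below this threshold will be handled separately, by solving 3-partition directly (constant time for bounded $t$) and outputting a canonical $\sum(n^b)$-bin packing instance whose optimum encodes the answer. A routine technical issue is representing $n^b$ and $\mu$ by rationals of polynomial bit length; this is standard since $b$ is a fixed constant.
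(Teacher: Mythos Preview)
Your proof is correct and follows the same high-level plan as the paper---reduce from 3-partition and pad with auxiliary items so that the total sum becomes exactly $n^b$---but the padding scheme is genuinely different and in fact cleaner. The paper blows the instance up to $m=\lceil n^{2/b}\rceil$ items by adding roughly $m^b$ items of size $1$, then five items of size $4/5$, then many tiny items in $(0,1/5]$ grouped so that the five $4/5$-items each absorb one group; the decision threshold is $n/3+(n_1-n)+5$ bins. You instead add a single marker of size $\mu>3/4$ together with uniform tiny fillers, giving the tight threshold $t+1$ versus $t+2$. Your NO-case volume argument (the $t$ non-marker bins must hold total $\ge t$ in 3-partition items alone, hence no tiny items fit there and each bin is a perfect triple) is the same classical step both proofs rely on. The price you pay for the simpler padding is the need to locate an integer $n$ in $\bigl((t+3/4)^{1/b},(t+1)^{1/b}\bigr)$ \emph{with} $n\ge 3t+2$; your MVT length estimate handles the first condition, and the second also holds for $t$ above a threshold depending only on $b$ since $(t+3/4)^{1/b}/t\to\infty$ when $b<1$---but you should state this explicitly, as for $b$ near $1$ the threshold is large (roughly $3^{b/(1-b)}$). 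The remark that item sizes must be rationals of polynomial bit-length is a representational issue the paper itself glosses over; neither proof is more or less rigorous on that point.
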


\begin{proof}
We construct a reduction from 3-partition problem to the
$\sum(n^b)$-bin packing problem via some padding. Assume that
$b_1,\ldots , b_n$ is a list of 3-partition problem with all items
in $({B\over 4}, {B\over 2})$. The bin packing problem for
$\sum(n^b)$ is constructed below:

It has a new list of elements: $a_1,\ldots , a_n, a_{n+1},\ldots ,
a_{m}$ such that $\sum_{i=1}^m a_i=m^b$, where $a_i={b_i\over B}$
for $i=1,\ldots , n$, and each $a_j$ with $j>n$ is $1$, $1-{1\over
5}$ or in $(0,{1\over 5}]$. Furthermore, there are at most five
items of size $1-{1\over 5}$. Let $m=\ceiling{n^{2\over b}}$.
Therefore, we have $m^b\ge n^2$. This makes us the sufficient
flexibility to select those items $a_i$ with $i>n$.  Let
$s=\sum_{i=1}^n a_i$. Select a number $n_1$ such that $m^b-5\le
(n_1-n)+s< m^b-4$. In other words, we have $m^b+n-s-5\le n_1
<m^b+n-s-4$. Thus, for all large $n$, we also have $n_1<m^b+n-s-4\le
m^b+n\le 2m^b<{m\over 2}$ since $b<1$. Let $a_i=1$ for all
$i=n+1,\ldots , n_1$. Therefore, $\sum_{i=1}^{n_1} a_i\in [m^b-5,
m^b-4)$. Then we select $a_i$ with $i=1,\ldots , m$ so that
$\sum_{i=1}^m a_i=m^b$. We select the next five items
$a_{n'+1},\ldots , a_{n'+5}$ of size $1-{1\over 5}$. Thus,
$\sum_{i=1}^{n_1} a_i\in [m^b-1, m^b)$. Let
$r=m^b-\sum_{i=1}^{n_1}a_i$. We have $r\in(0,1]$. The rest items
$a_{n'+6},a_{n'+7},\ldots , a_m$ are partitioned into five groups
$G_1,G_2, G_3, G_4$, and $G_5$ that size difference between any two
of them is at most one. Each item in $G_i$ is assigned ${r\over
5|G_i|}\in (0,{1\over 5}]$. Thus,  1) $\sum_{a_i\in G_j} a_i={r\over
5}$; 2)$\sum_{i=n'+6}^m a_i=r$; and 3) $\sum_{i=1}^m a_i=m^b$.

There is an optimal bin packing solution such that the five items of
size $1-{1\over 5}$ are in five bins with all items in the range
$(0,{1\over 5}]$. There is a solution for the 3-partition problem if
and only if the bin packing problem can be solved with ${n\over
3}+(n_1-n)+5$ bins. Any packing with ${n\over 3}+(n_1-n)+5$ bins for
$a_1,a_2,\ldots , a_m$ has to be the case that each item $a_j$ with
$j>n'+5$ is in a bin containing one item of size $1-{1\over 5}$
since it is impossible for $a_i$ ($i\le n$) to share a bin with
$a_j$ ($n'+1\le j\le n'+5$).
\end{proof}

Combining Theorem~\ref{NP-hard-sublinear-theorem} and
Theorem~\ref{dense-hierarchy-theorem}, we see a sublinear time
hierarchy of approximation scheme for a class of NP-hard problems,
which are derived from bin packing  problem. We show that the
$S(\delta)$-bin packing problem is NP-hard  if $\delta$ is at least
${1\over 4}$.

\begin{theorem}\label{delta-bin-packing-NP-theorem}
For each $\delta$ at most ${1\over 4}$, the $S(\delta)$-bin packing
problem is NP-hard.
\end{theorem}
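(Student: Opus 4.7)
The plan is to construct a direct polynomial-time reduction from the 3-Partition problem (the same NP-complete problem invoked in Theorem~\ref{NP-hard-sublinear-theorem}) to the $S(\delta)$-bin packing problem, exploiting the fact that a scaled 3-Partition instance automatically has all item sizes in the open interval $(1/4,1/2)$, which respects the constraint that every item has size at least $\delta$ for any $\delta \le 1/4$.

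Given a 3-Partition instance with items $b_1,\ldots,b_{3t}$ satisfying $b_i \in (B/4, B/2)$ and $\sum_{i=1}^{3t} b_i = tB$, I would form the $S(\delta)$-bin packing instance whose items are $a_i := b_i/B$ for $i = 1,\ldots,3t$. Each $a_i$ lies in $(1/4,1/2)$, so $a_i > \delta$ for every $\delta \le 1/4$, and $\sum_{i=1}^{3t} a_i = t$. This transformation is plainly polynomial-time. The claim is that the 3-Partition instance is a yes-instance if and only if the resulting bin packing admits a solution using at most $t$ bins; since 3-Partition is NP-complete, this yields NP-hardness of $S(\delta)$-bin packing.

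For the forward direction, a valid 3-partition groups the $b_i$'s into $t$ triples each summing to $B$; after scaling, the corresponding triples of $a_i$'s each sum to exactly $1$ and can be placed in $t$ separate unit bins. For the reverse direction, suppose a packing uses at most $t$ bins. Since each $a_i > 1/4$, no bin can hold $4$ or more items, so each of at most $t$ bins contains at most $3$ items; given that there are $3t$ items in total, we must in fact have exactly $t$ bins each holding exactly $3$ items. Since the total weight is $t$ and each bin has capacity $1$, every bin must be packed to weight exactly $1$; unscaling by a factor of $B$ recovers a partition of the $b_i$'s into $t$ triples each summing to $B$. The main subtlety is the role of the strict inequalities: the bound $a_i > 1/4$ rules out four-item bins, and the equality between total weight and bin count forces tight packing. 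Unlike the proof of Theorem~\ref{NP-hard-sublinear-theorem}, no padding construction is required, because the $S(\delta)$ constraint already accommodates the natural range of scaled 3-Partition items.
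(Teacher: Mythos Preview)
Your proof is correct and follows essentially the same approach as the paper: both reduce 3-Partition to $S(\delta)$-bin packing by scaling the items by $1/B$ so that each lies in $(1/4,1/2)$, and both argue that a yes-instance corresponds exactly to a packing into $t$ bins. Your write-up is in fact more detailed than the paper's, which simply asserts that the equivalence is ``easy to see'' without spelling out the counting argument for the reverse direction.
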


\begin{proof}
We reduce the 3-partition problem to $S(\delta)$-bin packing
problem. Assume that $S=\{a_1,\ldots , a_{3m}\}$ is an input of
$3$-partition. We design that a $S(\delta)$-bin packing problem as
below: the bin size is $1$ and the items are ${a_1\over B},\ldots ,
{a_{3m}\over B}$. The size of each item is at least ${1\over 4}$
since each $a_i>{B\over 4}$. It is easy to see that there is a
solution for the $3$-partition problem if and only if those items
for the bin packing problem can be packed into $m$ bins.
\end{proof}

\section{Constant Time Approximation Scheme}

In this section, we show that there is a constant time approximation
for the $S(\delta)$-bin packing problem with $(c, \eta,k)$-related
bins for any positive constant $\delta$.

\begin{lemma}\label{core-lemma} Assume that $c$, $\eta$, and $k$ are constants.
Assume there is a $t(m,n,\mu)$ time and $z(m,n,\mu)$ queries
algorithm $A$ such that given a list $S$ of items of size at least
$\delta$, it returns $m$ items $y_1', y_2', \ldots , y_m'$ from the
list with $\Rank(y_i',S)\cap [ih-\mu h, ih+\mu h]\not=\emptyset$ for
$i=1,2,\ldots , m$. Then there is an $z(m,n,\mu)$ queries and
$t(m,n,\mu)+({1\over \epsilon\delta})^{O({1\over \delta})}$ time
approximation scheme $B$ for the $S(\delta)$-bin packing problem
with $(c,\eta,k)$-related bins. Furthermore, if $A$ fails with
probability at most $\alpha$, then $B$ also fails with probability
$\alpha$.

\end{lemma}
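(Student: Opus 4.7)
The plan is to assemble algorithm $B$ as a straightforward composition of three ingredients already developed in the paper: algorithm $A$ (to identify the crucial items), the linear-programming based packer \textbf{Pack-Large-Items} of Lemma~\ref{lp-lemma} (to produce a near-optimal integer solution on the reduced instance), and the \textbf{Packing-Conversion} algorithm of Lemma~\ref{convert-approximation-lemma} (to translate that solution into one for the original list $S$). Since every item in $S$ has size at least $\delta$, we are in the clean setting where $S_{<\delta}=\emptyset$ and $|S_{\ge\delta}|=n$ is known exactly, so we can use $\theta=0$, $n_{\ge\delta}'=n$, and $S'=\emptyset$ throughout.

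Concretely, given $\epsilon$, first fix the constants $\mu$, $\epsilon_1$, and $m$ according to equations~(\ref{mu-def})--(\ref{m-def}), which depend only on $\epsilon$, $\delta$, $\eta$. Set $h=\lfloor n/m\rfloor$. First I would invoke $A$ with parameter $\mu$ to obtain $y_1',\ldots,y_m'$ satisfying $\Rank(y_i',S)\cap[ih-\mu h,\,ih+\mu h]\neq\emptyset$; this costs $z(m,n,\mu)$ queries and $t(m,n,\mu)$ time and, by hypothesis, fails with probability at most $\alpha$. Next I would form the reduced instance $B_0=\{y_1'^{\,h},\ldots,y_m'^{\,h}\}$ and run \textbf{Pack-Large-Items}$(c,\eta,k,B_0)$; by Lemma~\ref{lp-lemma}, in $m^{O(1/\delta)}$ time this yields a solution of cost at most $Opt_{c,\eta,k}(B_0)+\sum_{i=1}^q w_{T_i}$, and since $q\le km^{1/\delta}$ and each $w_{T_i}\le 1$, dividing through by the optimum (which is $\Omega(mh\delta\eta)$, hence grows with $n$, or can be absorbed into the additive slack analyzed in Lemma~\ref{convert-approximation-lemma}) gives a $(1+\epsilon)$-approximation for $B_0$. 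Finally, feed this approximate solution together with $n_{\ge\delta}'=n$ into \textbf{Packing-Conversion}, which by Lemma~\ref{convert-approximation-lemma} (with $\theta=0$ so that inequalities~(\ref{tight-approx-theta-ineqn})--(\ref{large-n-n'-ineqn}) hold trivially) produces an approximation $app$ with $Opt_{c,\eta,k}(S)\le app\le (1+5\epsilon)Opt_{c,\eta,k}(S)$.

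The total number of queries made by $B$ is exactly those made by $A$, namely $z(m,n,\mu)$, because both \textbf{Pack-Large-Items} and \textbf{Packing-Conversion} operate solely on the constant-size summary $(y_1',\ldots,y_m')$ and the count $n$. The total running time is $t(m,n,\mu)+m^{O(1/\delta)}+O(1)$; since $m=\Theta((1/(\epsilon\delta\eta))^{O(1)})$ by equation~(\ref{m-def}) and $\eta$ is a constant, this simplifies to $t(m,n,\mu)+(1/(\epsilon\delta))^{O(1/\delta)}$, as claimed. The failure probability of $B$ is bounded by that of $A$ alone, since the remaining two subroutines are deterministic.

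The only real subtlety, and the main obstacle, is to re-derive the $(1+\epsilon)$-style guarantee cleanly after running \textbf{Packing-Conversion}: rescaling $\epsilon$ by a constant factor (to absorb the $5\epsilon$ blow-up of Lemma~\ref{convert-approximation-lemma}) and checking that $m$, once enlarged accordingly, still gives a running time of the stated form $(1/(\epsilon\delta))^{O(1/\delta)}$. This is routine once one keeps track of how $\mu$, $\epsilon_1$, and $m$ depend on $\epsilon$ and $\delta$ via equations~(\ref{mu-def})--(\ref{m-def}); no new technical difficulty arises beyond careful parameter bookkeeping.
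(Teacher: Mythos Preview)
Your proposal is correct and follows essentially the same approach as the paper: invoke $A$ to obtain the crucial items $y_1',\ldots,y_m'$, apply Lemma~\ref{lp-lemma} (\textbf{Pack-Large-Items}) to the reduced instance $\{y_1'^h,\ldots,y_m'^h\}$, and then use Lemma~\ref{convert-approximation-lemma} (\textbf{Packing-Conversion}) to lift this to an approximation for $S$. Your write-up is in fact more explicit than the paper's (you spell out $\theta=0$, $S'=\emptyset$, the query accounting, and the $\epsilon$-rescaling), but the underlying argument is identical.
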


\begin{proof} Assume that $c,\eta$, and $k$ are positive constants.
Let $\epsilon$ be an arbitrary positive constant.  The constants
$\mu, \epsilon_1$, and $m$ are given according to equations
(\ref{mu-def}) to (\ref{m-def}).
We let the number of elements $n$ be large enough such that
${2q\over n\delta\eta}<{\epsilon\over 3}$, where $q$ is defined at
Lemma~\ref{lp-lemma}.

Assume that $a_1'\le a_2'\le \ldots \le a_{n_{\ge \delta}}'$ is the
increasing order of all input elements at least $\delta$. Let
$L_0=a_1'\le a_2'\le \ldots \le a_n'$.  We partition them into
$y_0A_1y_1A_2y_2\ldots  A_m y_mR$ such that each $A_i$ has exactly
$h$ elements and $R$ has less than $h$ elements.

Using algorithm $A$, we make approximation $y_i'$ to $y_i$ such that
the rank of $y_i'$ has at most $\mu h$ distance with that of $y_i$.
Assume that $\Rank(y_i',S)\cap [ih-\mu h, ih+\mu h]\not=\emptyset$
for $i=1,2,\ldots , m$ from algorithm $A$.


By Lemma~\ref{lp-lemma}, we have approximation scheme for
$\{y_1'^h,\ldots , y_m'^h\}$ with computational time $({1\over
\epsilon\delta})^{O({1\over \delta})}$, which follows from
Lemma~\ref{lp-lemma} and the selection of $m$ and $\mu $. The
approximation scheme for $S(\delta)$-bin packing problem follows
from Lemma~\ref{convert-approximation-lemma}. The total time is
$t(m,n,\mu)+({1\over \epsilon\delta})^{O({1\over \delta})}$ for
running $A$ and time involved in the algorithm of
Lemma~\ref{lp-lemma}.
\end{proof}

Lemma~\ref{core-lemma} is applied in both deterministic and
randomized algorithms in this paper. We note that algorithm $A$ in
Lemma~\ref{core-lemma} is deterministic if $\alpha=0$.

For the bin packing problem with item of size at least a positive
constant, our Theorem~\ref{constant-random-approx-theorem}
generalizes a result in~\cite{BatuBerenbrinkSohler09}.

\begin{theorem}\label{constant-random-approx-theorem} Assume that $c$, $\eta$, and $k$ are constants.
There is an $O({1\over \delta^2\epsilon^4})$ queries and $({1\over
\epsilon\delta})^{O({1\over \delta})}$ time randomized approximation
scheme algorithm for the $S(\delta)$-bin packing problem with
$(c,\eta,k)$-related bins.
\end{theorem}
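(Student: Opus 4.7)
The plan is to apply Lemma~\ref{core-lemma} directly, using the randomized selection algorithm Select-Crucial-Items from Lemma~\ref{select3-lemma} as the black-box subroutine $A$. First I would fix the parameters $\mu$, $\epsilon_1$, and $m$ according to equations~(\ref{mu-def})--(\ref{m-def}), which depend only on the constants $\epsilon$, $\delta$, and $\eta$, and set the failure probability parameter $\alpha$ to a small constant (say $1/4$) so that Lemma~\ref{select3-lemma} delivers with high probability a sequence $y_1'\le \cdots \le y_m'$ with $\Rank(y_i',S_{\ge\delta})\cap [ih-\mu h, ih+\mu h]\neq\emptyset$ for every $i$.

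Next I would invoke Lemma~\ref{select3-lemma} (applied to the sublist $S_{\ge\delta}$, which can be simulated by uniform sampling from $S$ and rejecting items of size less than $\delta$; since $\delta$ is constant this only inflates the number of samples by a constant factor) to bound the query and time cost of producing the $y_i'$. Because $m=O(1/(\delta\epsilon^2))$ and $\mu=\Theta(\delta\epsilon)$ from the parameter definitions, the number of queries used by Select-Crucial-Items is $O(m^2\log m/\mu^2)$, which after substitution falls within $O(1/(\delta^2\epsilon^4))$ up to constants hidden in the definition of $m$, and its running time is of the same order times a logarithmic factor.

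Then by Lemma~\ref{core-lemma} applied with $t(m,n,\mu)$ and $z(m,n,\mu)$ equal to these bounds, we obtain a randomized approximation scheme $B$ for the $S(\delta)$-bin packing problem with $(c,\eta,k)$-related bins whose query complexity is $O(1/(\delta^2\epsilon^4))$ and whose total running time is $t(m,n,\mu)+(1/(\epsilon\delta))^{O(1/\delta)}$. The second term dominates the linear programming step (from Lemma~\ref{lp-lemma}) that packs the rounded instance $\{y_1'^{h},\ldots,y_m'^{h}\}$ and absorbs the polynomial term coming from the sampling cost, yielding the claimed $(1/(\epsilon\delta))^{O(1/\delta)}$ time bound. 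The failure probability of $B$ equals that of $A$, which is at most $\alpha$, a constant that can be driven below any desired threshold at constant overhead.

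The only real obstacle will be bookkeeping: checking that the parameter settings coming out of (\ref{mu-def})--(\ref{m-def}) actually plug into Lemma~\ref{convert-approximation-lemma} to produce a $(1+O(\epsilon))$-approximation (which is inherited from the proof of Lemma~\ref{core-lemma}) and that the query count $O(m^2\log m/\mu^2)$ can be absorbed into $O(1/(\delta^2\epsilon^4))$ by folding polylogarithmic factors and the $\epsilon_1$ versus $\epsilon$ discrepancy into the hidden constants. Everything else is an immediate specialization of Lemma~\ref{core-lemma} with the randomized $A$ from Lemma~\ref{select3-lemma}.
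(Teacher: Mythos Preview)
Your approach matches the paper's exactly: invoke Lemma~\ref{select3-lemma} to obtain the crucial items $y_1',\ldots,y_m'$, then feed these into Lemma~\ref{core-lemma} to build the approximation scheme. Two minor remarks: the rejection-sampling step is unnecessary, since in the $S(\delta)$-bin packing problem every input item already has size at least $\delta$, so $S=S_{\ge\delta}$; and your substitution of $m=\Theta(1/(\delta\epsilon^2))$ and $\mu=\Theta(\delta\epsilon)$ into $O(m^2\log m/\mu^2)$ actually yields $\widetilde O(1/(\delta^4\epsilon^6))$ rather than $O(1/(\delta^2\epsilon^4))$---but the paper's own proof also stops at the expression $O(m^2\log m/\mu^2)$ without carrying out this substitution, so the discrepancy lies in the theorem statement rather than in your argument.
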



\begin{proof}
Let $S$ be the list of input items of size at least $\delta$. It
follows from Lemma~\ref{select3-lemma}, and Lemma~\ref{core-lemma}.
By Lemma~\ref{select3-lemma}, we have a $t(m,n,\mu)=O({m^2(\log
m)^2)\over \mu^2})$ time algorithm such that using
$z(m,n,\mu)=O({m^2\log m\over \mu^2})$ random elements from $A$, it
generates elements $y_1'\le \ldots \le y_m'$ from the input list
such that $\prob[\Rank(y_i',S)\cap [ih-\mu h,ih+\mu h]=\emptyset$
for at least one $i\in \{1,\ldots , m\}]\le \alpha$.  We assume that
the $m$ items $y_1', y_2', \ldots , y_m'$ satisfy $\Rank(y_i',S)\cap
[ih-\mu h, ih+\mu h]\not=\emptyset$ for $i=1,2,\ldots , m$. The
approximation scheme follows from Lemma~\ref{core-lemma}.
\end{proof}

\begin{corollary}[\cite{BatuBerenbrinkSohler09}]\label{old-algorithm-corollary}
There is an $O({1\over \delta^2\epsilon^4})$ queries and $({1\over
\epsilon\delta})^{O({1\over \delta})}$ time approximation scheme
algorithm for the $S(\delta)$-bin packing problem.
\end{corollary}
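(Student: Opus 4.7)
The plan is to derive this as an immediate specialization of Theorem~\ref{constant-random-approx-theorem}. As noted just after Definition~\ref{basic-def}, the classical bin packing problem (one bin type of size $1$) is exactly the bin packing problem with $(c,\eta,k)$-related bins in the case $c=\eta=k=1$. Since $1,1,1$ are legal constants in the statement of Theorem~\ref{constant-random-approx-theorem}, invoking that theorem with this choice yields an $O(1/(\delta^2\epsilon^4))$ queries and $(1/(\epsilon\delta))^{O(1/\delta)}$ time randomized approximation scheme for $S(\delta)$-bin packing.

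More concretely, I would briefly unpack the reduction chain so the reader sees that no hidden dependence on $c,\eta,k$ creeps into the bounds. Theorem~\ref{constant-random-approx-theorem} is proved by combining Lemma~\ref{select3-lemma} (which samples $O(m^2\log m/\mu^2)$ items to locate the approximate $ih$-th order statistics $y_1',\ldots,y_m'$) with Lemma~\ref{core-lemma} (which converts these into a packing via Lemma~\ref{lp-lemma} and Lemma~\ref{convert-approximation-lemma}). With $c=\eta=k=1$, the constants $\mu$, $\epsilon_1$, $m$ defined in equations (\ref{mu-def})--(\ref{m-def}) depend only on $\epsilon$ and $\delta$, so the query count is $O(m^2\log m/\mu^2)=O(1/(\delta^2\epsilon^4))$ and the post-sampling linear-programming step runs in $m^{O(1/\delta)}=(1/(\epsilon\delta))^{O(1/\delta)}$ time.

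There is essentially no obstacle: everything follows by substitution. The only thing worth double-checking is that the hypotheses of Lemma~\ref{core-lemma} and Lemma~\ref{convert-approximation-lemma} (in particular the side conditions on $\mu,\epsilon_1,m$ and the requirement that the input list consists of items of size at least $\delta$) hold under the $c=\eta=k=1$ instantiation, which they do because those conditions were stated for arbitrary constant $(c,\eta,k)$. Hence the corollary is obtained in one line by taking $c=\eta=k=1$ in Theorem~\ref{constant-random-approx-theorem}.
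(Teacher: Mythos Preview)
Your proposal is correct and matches the paper's approach: the corollary is stated immediately after Theorem~\ref{constant-random-approx-theorem} without its own proof, so it is meant to follow by the one-line specialization $c=\eta=k=1$ that you give. Your additional unpacking of the query and time bounds via equations~(\ref{mu-def})--(\ref{m-def}) is more detail than the paper provides, but it is accurate and harmless.
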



We have Theorem~\ref{NP-constant-approximation2} that shows an
example of NP-hard problem that has a constant time approximation
scheme.

\begin{theorem}\label{NP-constant-approximation2}
There is an NP-hard problem that has a constant time approximation
scheme.
\end{theorem}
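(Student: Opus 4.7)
The plan is to simply combine two results already established in the paper: Theorem~\ref{delta-bin-packing-NP-theorem}, which shows that for any $\delta \le 1/4$ the $S(\delta)$-bin packing problem is NP-hard, and Corollary~\ref{old-algorithm-corollary} (or equivalently Theorem~\ref{constant-random-approx-theorem}), which gives an approximation scheme for $S(\delta)$-bin packing whose query complexity and running time are $O(1/(\delta^2\epsilon^4))$ and $(1/(\epsilon\delta))^{O(1/\delta)}$ respectively.

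First I would fix any constant $\delta \in (0, 1/4]$; the natural choice is $\delta = 1/4$. With $\delta$ fixed as a constant, Theorem~\ref{delta-bin-packing-NP-theorem} immediately furnishes the NP-hardness of the decision version of $S(\delta)$-bin packing, since the reduction there from $3$-partition produces items of size at least $1/4$. Second, I would invoke Corollary~\ref{old-algorithm-corollary} with the same constant $\delta$: for any desired constant $\epsilon > 0$, both $O(1/(\delta^2\epsilon^4))$ and $(1/(\epsilon\delta))^{O(1/\delta)}$ are constants depending only on $\epsilon$ (not on the input length $n$). Hence the number of queries and the total running time are bounded by a constant, yielding a constant-time $(1+\epsilon)$-approximation scheme for this NP-hard problem.

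There is essentially no obstacle here — the statement is a direct packaging of the two cited results, so the proof should be only a sentence or two. The only minor care needed is to observe that ``constant time'' in the theorem's conclusion is meant as ``constant in the input size $n$ for each fixed approximation parameter $\epsilon$,'' which matches exactly the query and time bounds supplied by Corollary~\ref{old-algorithm-corollary} once $\delta$ is treated as a fixed constant. The conclusion then follows.
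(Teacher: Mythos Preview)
Your proposal is correct and matches the paper's own proof exactly: the paper's argument is literally the one-line observation that the result follows from Theorem~\ref{delta-bin-packing-NP-theorem} and Corollary~\ref{old-algorithm-corollary}. Your additional remarks about fixing $\delta=1/4$ and the meaning of ``constant time'' are sound elaborations but not needed beyond what the paper writes.
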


\begin{proof}
It follows from Theorem~\ref{delta-bin-packing-NP-theorem} and
Corollary~\ref{old-algorithm-corollary}.
\end{proof}

\section{Streaming Approximation Scheme}\label{streaming-sec}
In this section, we show a constant time and constant space
streaming algorithm for the bin packing problem. For the streaming
model of the bin packing  problem, we output a plan to pack the
items that have come from the input list, and the number of bins to
approximate the optimal number of bins. Our algorithm only holds a
constant number of items. Therefore, it has a constant updating time
and constant space complexity.

\begin{lemma}\label{select2-lemma}
There is an $O(u)$ updating time algorithm to select $u$ random
elements from a stream of input elements.
\end{lemma}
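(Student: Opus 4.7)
The plan is to use classical reservoir sampling. Maintain an array $R[1..u]$ of $u$ slots to store the current sample. Process the stream as follows: when the $i$-th element $a_i$ arrives, if $i \le u$ place $a_i$ in slot $R[i]$; if $i > u$, draw a Bernoulli variable that is $1$ with probability $u/i$, and if it equals $1$ choose $j$ uniformly at random from $\{1,\ldots,u\}$ and overwrite $R[j]$ with $a_i$, otherwise discard $a_i$. Each update requires one Bernoulli draw, one uniform draw from $\{1,\ldots,u\}$, and one array write, which is well within the $O(u)$ per-update budget claimed (in fact $O(1)$).

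The key correctness claim to establish is that, after processing any prefix $a_1,\ldots,a_i$ with $i\ge u$, every $u$-subset of $\{a_1,\ldots,a_i\}$ appears in the reservoir with probability exactly $1/\binom{i}{u}$. I would prove this by induction on $i$. The base case $i=u$ is immediate since the reservoir deterministically equals the full set. For the inductive step, consider a target $u$-subset $T$ of $\{a_1,\ldots,a_{i+1}\}$ and split into two cases: (i) $a_{i+1} \notin T$, so $T$ must already have been the reservoir after step $i$ and $a_{i+1}$ must be discarded, giving probability $\tfrac{1}{\binom{i}{u}}\cdot\bigl(1-\tfrac{u}{i+1}\bigr)$; (ii) $a_{i+1}\in T$, so the reservoir after step $i$ was $T' = (T\setminus\{a_{i+1}\})\cup\{a_j\}$ for some $a_j\in\{a_1,\ldots,a_i\}\setminus T$, and $a_{i+1}$ was accepted and happened to evict exactly $a_j$, contributing $\sum_{a_j}\tfrac{1}{\binom{i}{u}}\cdot\tfrac{u}{i+1}\cdot\tfrac{1}{u}$. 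Simplifying both expressions yields $1/\binom{i+1}{u}$ as required.

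The main (and only real) obstacle is the bookkeeping in the inductive step, particularly verifying that the number of predecessor reservoirs $T'$ in case (ii) is exactly $i-u+1$ and that the resulting probability telescopes correctly using the identity $\binom{i}{u}(i+1) = \binom{i+1}{u}(i+1-u)$. Once this identity is in hand both cases collapse to $1/\binom{i+1}{u}$, finishing the induction. Finally, the time bound is immediate: the initialization phase performs at most $u$ writes, and each subsequent update consists of a constant number of arithmetic operations plus one random index draw, so the per-update cost is $O(1) \le O(u)$, completing the lemma.
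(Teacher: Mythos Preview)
Your proof is correct, but it implements a different algorithm than the paper's. The paper keeps $u$ \emph{independent} single-element reservoirs: for each arriving element $a_n$ and each slot $j\in\{1,\ldots,u\}$, slot $j$ is overwritten by $a_n$ with probability $1/n$, independently across slots. A telescoping product shows each slot holds a uniform element of $\{a_1,\ldots,a_n\}$, and the $u$ slots are mutually independent (so repeats are possible). This genuinely costs $\Theta(u)$ work per update, matching the stated bound.

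You instead run Vitter's reservoir sampling, which maintains a uniformly random $u$-\emph{subset} (no repeats) with $O(1)$ update cost, and your inductive argument for uniformity over $\binom{i}{u}$ subsets is correct. What your approach buys is a faster update and no duplicates. What the paper's approach buys is that the $u$ samples are i.i.d., which is precisely the hypothesis used downstream: Lemma~\ref{select3-lemma} applies the Chernoff bounds of Corollary~\ref{chernoff-lemma-a} to the sampled list $X$, and those bounds are stated for independent indicators. Your without-replacement sample has negatively associated indicators, so the same tail bounds still hold (indeed are tighter), but if you use your version you should add a sentence noting this, since otherwise the later analysis does not literally apply.
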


\begin{proof}
We set up $u$ positions to put the $u$ elements. There is a counter
$n$ to count the total number of elements arrived. For each new
arrived element $a_n$, the $j$-th position uses probability ${1\over
n}$ to replace the old element at the $j$-th position with the new
element. For each element $a_i$, with probability ${1\over j}{j\over
j+1}\ldots  {n-1\over n}={1\over n}$, it is kept at each of the $u$
positions after processing $n$ elements.  Therefore, we keep
$u$-random elements from the input list.
\end{proof}

A brief description of our streaming algorithm for the bin packing
problem is given in section~\ref{idea-overview-sec}.  Using the
method of Lemma~\ref{select2-lemma}, we maintain a list $X$ of
$O(1)$ random items of large sizes from the input list. The list is
updated after receiving every new element. The sizes of each small
item is added into a variable $s_1$. Using the method in
section~\ref{select-crucial-sec}, we find the approximate crucial
items from the list $X$ of random large items, which are the
approximate $ih$-th elements among the large items of size at least
$\delta$ for $i=1,\ldots , m$, where $h$ and $m$ are defined in
equations (\ref{h-def-eqn}), and ((\ref{m-def})), respectively. The
algorithm described in section~\ref{deterministic-section} is used
to pack large items. The small items are filled into bins which have
space left after packing large items, and some additional fresh
bins. With the sum $s_1$ of sizes of small items, we can calculate
the approximate number of fresh bins to be needed to pack them.

\vskip 10pt

 {\bf Algorithm Streaming-Bin-Packing}

Input: a positive constant $\epsilon$, and a streaming of items of
size at least $\delta$.

Output: an $(1+\epsilon)$-approximation.

 Steps:

\begin{enumerate}[1.]
\item
\qquad Let $\beta:={\gamma\over 30}$ and $\epsilon:=6\beta$.

\item
\qquad Let $\delta:={\epsilon\over 4}$ and $\theta:=0$.

\item
\qquad Let $\mu,\epsilon_1$ and $m$ are selected by
equations~(\ref{mu-def}), (\ref{epsilon1-def}), and (\ref{m-def}),
respectively.

\item
\qquad Let $c:=\eta:=k:=1$ (classical bin packing).

\item
\qquad Let $\alpha:=1/8$.

\item
 \qquad Let $u:={c_1m^2\log m\over \mu^2}$, where $c_1$ is
defined in Lemma~\ref{select3-lemma}.

\item
 \qquad Let $v:={2m\over \beta\delta}+m$.

\item
 \qquad Let $X[1...u]$ be an array of $u$ elements.

\item
 \qquad Let $X[i]:=0$ for $i=1,\ldots , u$.

 \item
 \qquad Let $Y[1...v]$ be an array of $v$ elements.

\item
 \qquad Let $Y[i]:=0$ for $i=1,\ldots , v$.

\item
\qquad Let $n:=0$.

\item
\qquad Let $n_{\ge\delta}:=0$.

\item
\qquad Let $s_1:=0$.

\item
\qquad For each new element $a_i$

\item
\qquad\qquad Let $n:=n+1$.

\item
\qquad\qquad If $a_i<\delta$

\item
\qquad\qquad then

\item
 \qquad\qquad\qquad Let $s_1:=s_1+a_i$.

\item
\qquad\qquad else

\item
\qquad\qquad\qquad Let $n_{\ge \delta}:=n_{\ge \delta}+1$.

\qquad\qquad\qquad If $n_{\ge \delta}<v$ then let $Y[n_{\ge
\delta}]:=a_i$.

\item
 \qquad\qquad\qquad For $i=1$ to $u$, let each $X[i]$ take the
new elements with probability ${1\over n_{\ge \delta}}$.

\item\label{if-condition-streaming-line}
\qquad\qquad If $n_{\ge\delta}> v$

\item
 \qquad\qquad then

\item
\qquad\qquad\qquad Output
Packing-With-Many-Large-Items$(\alpha,\beta, n, s_1, n_{\ge
\delta},S)$ (see Lemma~\ref{convert-approximation2b-lemma}).

\item
\qquad\qquad else

\item
\qquad\qquad\qquad Let $(b_1,\ldots ,
b_t)=$Linear-Time-Packing$(n_{\ge\varphi},Y)$ (see
Lemma~\ref{linear-time-packing}) (each bin $b_i$ represents a
packing of items in $Y$).

\item
 \qquad\qquad\qquad  For each $b_i$ with left space $u_i>\delta$,

\item
\qquad\qquad\qquad\qquad move $u_i-\delta$ (fractional)item size
into $b_i$ from $s_1$, and let $s_1=s_1-(u_i-\delta)$.

\item
\qquad\qquad\qquad Allocate $s_1$ into fresh bins such that each bin
except the last one wastes $\delta$ space.

\end{enumerate}

 {\bf End of Algorithm}

\vskip 10pt

\begin{theorem}\label{streaming-theorem}
Streaming-Bin-Packing is a  single pass  randomized streaming
approximation scheme for the  bin packing  problem such that it has
$O(1)$ updating time and $O(1)$ space, and computes an approximate
packing solution $Apx(n)$ with $\Sopt(n)\le App(n)\le
(1+\epsilon)\Sopt(n)+1$ in
$({1\over\epsilon})^{O({1\over\epsilon})}$ time, where $\Sopt(n)$ is
the optimal solution for the first $n$ items in the input stream,
and $App(n)$ is an approximate solution for the first $n$ items in
the input stream.
\end{theorem}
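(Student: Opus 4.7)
The plan is to verify four properties of Streaming-Bin-Packing: (i) the space used is $O(1)$, (ii) each per-item update takes $O(1)$ time, (iii) the returned packing satisfies $\Sopt(n)\le App(n)\le (1+\epsilon)\Sopt(n)+1$ with constant success probability, and (iv) the final output is produced in $({1\over\epsilon})^{O({1\over\epsilon})}$ time. Properties (i) and (ii) are immediate once we fix $\epsilon$ as a constant: the parameters $u=c_1 m^2(\log m)/\mu^2$ and $v=2m/(\beta\delta)+m$ depend only on $\epsilon$, so both arrays $X[1..u]$ and $Y[1..v]$ have constant size, and the per-item work is a constant number of counter updates plus $O(u)=O(1)$ independent Bernoulli trials that refresh the reservoir $X$ in the manner of Lemma~\ref{select2-lemma}.

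For the approximation ratio, I would split the analysis into two cases according to line~\ref{if-condition-streaming-line}. When $n_{\ge\delta}\le v$ at the output moment, the array $Y$ stores all large items exactly and $s_1=\sum_{a_i<\delta}a_i$ is exact; packing $Y$ with Linear-Time-Packing (Lemma~\ref{linear-time-packing}) gives a $(1+\beta)$-approximation on the large items alone, and the subsequent fractional fill of small items into the resulting bins, followed by fresh bins that each waste at most $\delta$ space except possibly the last, mirrors the argument of Lemma~\ref{convert-approximation2-lemma}; the overall output is therefore bounded by $(1+2\beta)Opt(S)+1\le(1+\epsilon)Opt(S)+1$. When $n_{\ge\delta}>v$, the reservoir $X$ plays the role that fresh random samples would play inside Packing-With-Many-Large-Items. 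The reservoir update preserves the invariant that each slot $X[i]$ is uniformly distributed over the large items seen so far, with independent slots, so Lemma~\ref{select3-lemma} applies to $X$ verbatim and with probability at least $1-\alpha$ produces crucial items $y_1',\ldots,y_m'$ whose ranks lie in the windows $[ih-\mu h,ih+\mu h]$. Feeding the exact values $n_{\ge\delta}'=n_{\ge\delta}$ (so $\theta=0$) and the exact $s_1$ (so $\xi=0$) into Lemma~\ref{convert-approximation2b-lemma} yields a $(1+16\beta)$-approximation, which by the choice of $\beta$ is within $(1+\epsilon)$. A union bound over the two randomized sub-routines keeps the overall failure probability bounded by a constant.

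Property (iv) follows from Lemma~\ref{lp-lemma}: after processing a new item, the only nontrivial work at the output step is solving a linear program of dimension $q\le km^{1/\delta}$ over the $m$ selected crucial items, which runs in $m^{O(1/\delta)}=({1\over\epsilon})^{O({1\over\epsilon})}$ time, together with the $O(1)$ conversion steps of Packing-Conversion(.) and (in the small case) a constant-size Linear-Time-Packing invocation on $Y$.

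The main obstacle is to justify the large-case branch rigorously: one must argue that the streaming reservoir $X$ is statistically indistinguishable, for the purposes of Lemma~\ref{select3-lemma}, from $u$ fresh uniform draws from the current multiset of large items. This reduces to a straightforward induction on $n_{\ge\delta}$ verifying the reservoir invariant under the replacement probability $1/n_{\ge\delta}$, together with independence across the $u$ slots. A secondary quantitative check is that the threshold $v=2m/(\beta\delta)+m$ is simultaneously (a) large enough that $n_{\ge\delta}>v$ implies the rank-window guarantee of Lemma~\ref{select3-lemma}, and (b) large enough for the size hypothesis $h'm\ge 2/(\beta\delta)$ of Lemma~\ref{convert-approximation2-lemma} to hold; both amount to short arithmetic verifications using the explicit choices of $u$, $v$, $m$, $\mu$, $\delta$, $\beta$ in Streaming-Bin-Packing.
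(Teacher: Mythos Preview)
Your proposal is correct and follows essentially the same approach as the paper: the same case split on $n_{\ge\delta}$ versus $v$, the invocation of Lemma~\ref{convert-approximation2b-lemma} with $\theta=0$ in the large case via the reservoir samples (justified by Lemma~\ref{select2-lemma}), and the Linear-Time-Packing plus small-item-filling argument in the small case. Your write-up is in fact somewhat more careful than the paper's, since you explicitly flag the reservoir-equivalence step and the arithmetic check that $v$ is chosen so that $h'm\ge 2/(\beta\delta)$, both of which the paper's proof leaves implicit.
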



\begin{proof}
Let $\epsilon$ be an arbitrary positive constant. Let
$\delta={\epsilon\over 1+\epsilon}$.
 By
Lemma~\ref{select2-lemma}, we assume that $u$ random elements have
been selected from the input elements with size at least $\delta>0$.
We just add all elements with size less than $\delta$ into a sum
$s_1$.

If the condition of line~\ref{if-condition-streaming-line} in the
algorithm Streaming-Bin-Packing is true, then the
inequality~(\ref{large-n-ineqn}) in
Lemma~\ref{convert-approximation2b-lemma} can be satisfied since
$h'=\floor{n_{\ge \delta}\over m}$. Furthermore, as $\theta=0$, the
conditions of Lemma~\ref{convert-approximation2b-lemma} are
satisfied. The approximation ratio follows from
Lemma~\ref{convert-approximation2b-lemma}.

Assume the condition of line~\ref{if-condition-streaming-line} is
not true in the rest of the proof.
 Let $U$ be the set of bins for an $(1+\epsilon)$-approximate
solution to items of size at least $\delta$ by
Lemma~\ref{linear-time-packing}. It takes only $O(m)$ bins to pack
those large items since $n_{\ge \varphi}$ is less than $v$ which is
$O(m)$. Therefore, we only need $t=O(m)$ bins for packing the items
in $Y$. The final part of the algorithm fills all small items
accumulated in $s_1$ into those bins in $U$ so that each bin has
less than $\delta$ left. Put all of the items less than $\delta$
into some extra bins, and at most one of them has more than $\delta$
space left. Filling the small items of size less than $\delta$ is to
let each bin except the last one waste no more than $\delta$ space.
This is a fractional way to pack small items. Since the item size is
at most $\delta$, and each bin with (fractional) small items has at
least $\delta$ space left. The fractional bin packing for adding
small items can bring an non-fractional (regular) bin packing.  A
similar argument is also shown in Lemma~\ref{global-convert-lemma}.

Assume that an optimal solution of a bin packing  problem has two
types of bins. Each of the first type bin contains at least one item
of size $\delta$, and each of the second type bin only contains
items of size less than $\delta$. Let $V_1$ be the set of first type
bins, and $V_2$ be the set of all second type bins. We have that
$|U|\le (1+\epsilon)|V_1|$.

Case 1. If $U$ can contain all items, we have that $|U|\le
(1+\epsilon)|V_1|\le (1+\epsilon)|V_1\cup V_2|$.

Case 2. There is a bin beyond those in $U$ is used. Let $U'$ be all
bins without more than $\delta$ space left. We have that $|U'|\le
{|V_1\cup V_2|\over (1-\delta)}\le (1+\epsilon)|V_1\cup V_2|$.
Therefore, the approximate solution is at most $(1+\epsilon)|V_1\cup
V_2|+1$.

\end{proof}

\section{Sliding Window Streaming 
for $S(\delta)$-Bin
Packing}\label{sliding-windows-sect}

A sliding window stream model for bin packing problem is to pack the
most recent $n$ items. Select an integer constant $\lambda$ that is
determined by the approximation ratio and $\delta$, the least size
of input items. The idea is to start a new session to collect some
random items from the input stream after every ${n\over \lambda}$
items.

Assume that $a_{m+1},\ldots , a_{m+n}$ are the last $n$ input items
in the input stream. We maintain a list of sets $S_1,\ldots ,
S_{\lambda}$ such that if $S_i$ is a set of random items in
$\{a_{m+j_i},\ldots , a_{m+n}\}$ ($[m+j_i, m+n]$ is called the range
of $S_i$), then the next $S_{(i+1)(mod\ \lambda)}$ is a set of
random items in $\{a_{m+j_i+{n\over \lambda}},\ldots , a_{m+n}\}$.
On the other hand, when the range of a set $S_i$ reaches
$[m+1,m+n]$), $S_i$ is reset to be empty and starts to collect the
random elements from the scratch. We also set a pointer to the set
$S_i$ that has the largest range.

After receiving every ${n\over \lambda}$ items in the input stream,
the set $S_i$ with the largest range will be passed to the next
$S_{i+1(\mod \ \lambda)}$ if $S_i$'s range reaches size $n$. The is
called rotation, which makes the pointer to the set with the largest
range  according to the loop $S_1\rightarrow S_2\rightarrow, \ldots
, S_{\lambda-1}\rightarrow S_{\lambda}\rightarrow S_1$. In the
following algorithm we assume that $n=0\ (\mod \ \lambda)$.
Otherwise, we replace $n$ by $n'=\ceiling{n\over \lambda}\lambda$.

It is easy to see that $n\le n'\le n+\lambda$. The bin packing
problem for the last $n$ items has a small ratio difference with
that for the last $n'$ items if the constant $\lambda$ is selected
large enough.


\vskip 10pt

 {\bf Algorithm Sliding-Window-Bin-Packing($c, \eta, k,\gamma, \delta, n$)}

Input: bin types constants $c, \eta$, and $k$, a positive constant
$\gamma$, a streaming of items of size at least $\delta$, and a
sliding window size $n$.

Output: an $(1+\gamma)$-approximation

 Steps:

\begin{enumerate}[1.]

\item
\qquad Let $\epsilon:={\gamma\over 30}$.   

\item
\qquad Let $\theta:=0$.

\item
\qquad Let $\mu,\epsilon_1$ and $m$ are selected by
equations~(\ref{mu-def}), (\ref{epsilon1-def}), and (\ref{m-def}),
respectively.

\item\label{set-k-line}
 \qquad Let $\lambda:=\ceiling{100\over \gamma\delta}$.

\item
\qquad Let $t:={n\over \lambda}$.

\item
\qquad Create $t$ empty sets $S_1,\ldots , S_k$ to hold random
elements and make them non-active.

\item
\qquad Let $u:={c_1m^2\log m\over \mu^2}$ be the number of random
elements in each $S_i$ according to Lemma~\ref{select3-lemma}.

\item
\qquad Let $h_{j}$ be the range size of $S_j$.

\item
\qquad Start $S_1$ to be active to collect random elements.

\item
\qquad Let $S_1$ hold $u$ copies of the first element $a_1$ in the
stream.

\item
\qquad For each new element $a_i$ from the input stream
($i=2,3,\ldots $)

\item
\qquad\qquad For each active $S_j$ and each of the $u$ items $a_r\in
S_j$,

\item
\qquad\qquad\qquad replace $a_r$ by $a_i$ with probability ${1\over
h_{j}}$ and let $h_i:=h_i+1$.

\item
\qquad\qquad Let $S_j$ be the set with the largest range $h_j$.

\item
\qquad\qquad Let $(y_1,\ldots , y_m)$:=Select-Crucial-Items($m, u,
S_j$) (see Lemma~\ref{select3-lemma}).

\item
\qquad\qquad Let $(x_1,\ldots , x_q):=$Pack-Large-Items($c, \eta, k,
B$) with $B=\{y_1'^{h},\ldots , y_m'^{h}\}$ (See
Lemma~\ref{lp-lemma}).

\item
\qquad\qquad Let $y$ be the cost for the packing with solution
$(x_1,\ldots , x_q)$.

\item
\qquad\qquad Output $app:=$Packing-Conversion($n,y$) (see
Lemma~\ref{convert-approximation-lemma}).

\item\label{start-if-sliding-line}
\qquad\qquad If $i=0 (mod\ t)$

\item
\qquad\qquad then

\item
\qquad\qquad \qquad if $i<n$

\item
\qquad\qquad \qquad  then make $S_{(j+1)\ (\mod\ t)}$ be active, and
let $h_{(j+1)(mod\ t)}=0$.

\item\label{end-if-sliding-line}
\qquad\qquad\qquad if $i\ge n$

\item
\qquad\qquad\qquad then let $S_j$ hold $u$ copies of $a_i$ and let
$h_{j}=1$ (reset $S_j$).

\end{enumerate}

{\bf End of Algorithm}

\vskip 10pt

We have Theorem~\ref{NP-constant-sliding} that shows an example of
NP-hard problem that has a constant time and constant space sliding
window streaming approximation scheme.

\begin{theorem}\label{sliding-window-theorem}
  Assume that $c$, $\eta$, and $k$ are constants. Let $\delta$ be an arbitrary constant. Then
Sliding-Window-Bin-Packing(.) is a  single pass sliding window
streaming randomized approximation algorithm for the $S(\delta)$-bin
packing problem with $(c,\eta,k)$-related bins that has $O(1)$
updating time and $O(1)$ space, and computes an approximate packing
solution $App(.)$ with $\Sopt_{c,\eta,k}(n)\le App(n)\le
(1+\gamma)\Sopt_{c,\eta,k}(n)$ in
$({1\over\gamma})^{O({1\over\gamma})}$ time, where
$\Sopt_{c,\eta,k}(n)$ is the optimal solution for the last $n$ items
in the input stream, and $App(n)$ is an approximate solution for the
most recent $n$ items in the input stream.
\end{theorem}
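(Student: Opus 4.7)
The plan is to verify the three components separately: that each active set $S_j$ maintains a uniform random sample of its range, that the set $S_j$ with the largest range covers almost all of the last $n$ items, and that the approximation loss from the items missed near the window boundary is small because of the size lower bound $\delta$. I would first observe that by the standard reservoir argument used in Lemma~\ref{select2-lemma}, each active $S_j$ holds $u$ independent uniform samples from the items currently in its range, so the $u$ random samples supplied to Select-Crucial-Items genuinely are uniform samples as that lemma requires. Since the number of reservoirs is $\lambda=O(1/(\gamma\delta))$ and each reservoir holds $u=O(m^2\log m/\mu^2)=({1\over \gamma})^{O(1)}$ items, the space is $O(1)$ and each arrival costs $O(\lambda u)=O(1)$ updating time. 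The final output computation is dominated by Pack-Large-Items, which runs in $({1\over\gamma})^{O({1\over\gamma})}$ time by Lemma~\ref{lp-lemma}.

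Next I would analyze what $S_j$ actually covers. By construction, at every multiple of $t=n/\lambda$ we activate the next reservoir, and reset the oldest one once its range reaches size $n$. A straightforward bookkeeping argument shows that at every moment the set $S_j$ with the largest range covers an interval $[m+j_i, m+n]$ of size at least $n-n/\lambda=(1-1/\lambda)n$ inside the last $n$ items. Thus the algorithm effectively packs a sublist $S'$ of the current window $S$ whose complement in $S$ has at most $n/\lambda$ items. I would then feed $S'$ through Lemma~\ref{select3-lemma} (giving crucial items $y_1,\ldots,y_m$ with $\Rank(y_i,S')\cap[ih-\mu h,ih+\mu h]\neq\emptyset$ except with probability $\alpha$), then Lemma~\ref{lp-lemma}, and finally Lemma~\ref{convert-approximation-lemma} with $\theta=0$, $S''=\emptyset$ (there are no small items in the $S(\delta)$ setting). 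This chain yields $Opt_{c,\eta,k}(S')\le app\le (1+5\epsilon)\,Opt_{c,\eta,k}(S')$.

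To pass from $S'$ to $S$, I would use the size lower bound. Since each item in $S$ has size at least $\delta$ and there are $n$ of them, $Opt_{c,\eta,k}(S)\ge n\delta\eta$ (each bin has cost at most $1$ and can hold total item size at most $1$, so at least $n\delta$ bins are needed, each of cost at least $\eta$). On the other hand, the at most $n/\lambda$ missing items can each be packed in its own bin at cost at most $1$, so $Opt_{c,\eta,k}(S)\le Opt_{c,\eta,k}(S')+n/\lambda$ and also $Opt_{c,\eta,k}(S')\le Opt_{c,\eta,k}(S)$. With $\lambda\ge 100/(\gamma\delta)$ chosen as in line~\ref{set-k-line}, we get $n/\lambda\le (\gamma\delta/100)n\le (\gamma/(100\eta))Opt_{c,\eta,k}(S)$, so $Opt_{c,\eta,k}(S)\le Opt_{c,\eta,k}(S')+ (\gamma/100)Opt_{c,\eta,k}(S)$, i.e. $Opt_{c,\eta,k}(S')\ge (1-\gamma/100)Opt_{c,\eta,k}(S)$. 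Combining with $\epsilon=\gamma/30$ and $(1+5\epsilon)/(1-\gamma/100)\le 1+\gamma$ for small $\gamma$ closes the loop; the approximation $app$ delivered at each step satisfies $Opt_{c,\eta,k}(S)\le app\le (1+\gamma)Opt_{c,\eta,k}(S)$.

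The main obstacle I anticipate is the bookkeeping for the rotation of reservoirs, making sure that at every single time step there really is a reservoir whose range has already reached the threshold $(1-1/\lambda)n$ and that we correctly identify it as $S_j$ with the largest range. In particular, during the very first $n$ arrivals the window is not yet full, so one must argue separately that even then the algorithm is either packing all items seen so far (so $S'=S$ and there is no boundary loss) or has already accumulated a reservoir covering a $(1-1/\lambda)$ fraction. The second, minor obstacle is controlling failure probability: Select-Crucial-Items fails with probability at most $\alpha$ at each query, and one either boosts $\alpha$ to a sufficiently small constant or takes a union bound over the output events one cares about; since $\alpha$ is a constant of our choice and the number of outputs per query is one, this is routine.
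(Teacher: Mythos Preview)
Your proposal is correct and follows essentially the same route as the paper's own proof: both maintain $\lambda$ rotating reservoirs via Lemma~\ref{select2-lemma}, observe that the reservoir with largest range misses at most $n/\lambda$ items of the current window, bound that deficit against $Opt_{c,\eta,k}(S)\ge n\delta\eta$ using the choice $\lambda=\lceil 100/(\gamma\delta)\rceil$, and then chain Lemmas~\ref{select3-lemma}, \ref{lp-lemma}, and~\ref{convert-approximation-lemma} to get a $(1+O(\epsilon))$-approximation on the covered sublist. The paper phrases the boundary error as $h\in[n-t,n+t]$ and the multiplicative comparison $(1-\gamma/2)\Sopt(h)\le\Sopt(n)\le(1+\gamma/2)\Sopt(h)$, while you phrase it as $S'\subseteq S$ with $|S\setminus S'|\le n/\lambda$; these are the same bookkeeping. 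One small point worth tightening (in both your argument and the paper's): the chain gives $app\ge Opt_{c,\eta,k}(S')\ge(1-\gamma/100)Opt_{c,\eta,k}(S)$ rather than $app\ge Opt_{c,\eta,k}(S)$ directly, so to get the stated lower bound one should inflate the output by the corresponding factor, which is absorbed into the final $(1+\gamma)$.
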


\begin{proof}
Multiple sessions of groups are generated to maintain the progress of incoming elements.
 The purpose of the choice of $\lambda$ at line~\ref{set-k-line} in Sliding-Window-Bin-Packing(.) is to let it satisfy that
${n\over \lambda}\le {\gamma n\delta/100}$ since $n$ items needs at
least $m\delta$ bins and ${n\over \lambda}$ items needs at most
${n\over  \lambda}$ bins. This control is implemented in
lines~\ref{start-if-sliding-line} to \ref{end-if-sliding-line} in
the algorithm Sliding-Window-Bin-Packing(.).

Assume that the $n$ integers in $[1,n]$ represent the last $n$ items
from the input stream. Each of the $\lambda$ groups takes care of
the list items in the range $[i\cdot {n\over \lambda}+j, n]$ for
$i=0,\ldots , \lambda-1$, where $j$ is an integer that moves in the
loop $0\rightarrow 1\rightarrow 2\rightarrow 3\rightarrow \ldots
\rightarrow {n\over \lambda}-1 \rightarrow 0$. We keep $\lambda$
groups of $u={c_1m^2\log m\over \mu^2}$ random elements each
according to Lemma~\ref{select2-lemma}, where $\mu$ is defined as
the proof in Lemma~\ref{core-lemma} and Lemma~\ref{select3-lemma}.
After every ${n\over \lambda}$ items, we start picking a new session
of elements and drop the oldest session.

When a set $S_j$ holds $u$ random elements from the last $h$
elements for $h\in [n-t, n+t]$, where $t={n\over \lambda}$. The
approximation derived from $S_j$ has a small difference with the
optimal solution for the last $n$ elements. Let
$\Sopt_{c,\eta,k}(n)$ be the optimal solution for packing the last
$n$ items with $(c,\eta,k)$-related bins. We have that
$\Sopt_{c,\eta,k}(h)-t\le \Sopt_{c,\eta,k}(n)\le
\Sopt_{c,\eta,k}(h)+ t$. By Lemmas~\ref{lp-lemma},
\ref{select3-lemma}, and \ref{convert-approximation-lemma}, the
algorithm outputs an $(1+\gamma/2)$-approximation for
$\Sopt_{c,\eta,k}(h)$. By the setting of $t$, we have that
$(1-\gamma/2)\Sopt_{c,\eta,k}(h)\le \Sopt_{c,\eta,k}(n)\le
(1+\gamma/2)\Sopt_{c,\eta,k}(h)$. Therefore, an $(1+\gamma/2)$
approximation to $\Sopt_{c,\eta,k}(h)$ is a $(1+\gamma)$ to
$\Sopt_{c,\eta,k}(n)$.

\end{proof}

\begin{theorem}\label{NP-constant-sliding}
There is an NP-hard problem that has a constant time and space
sliding windows approximation scheme.
\end{theorem}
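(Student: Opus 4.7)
The plan is to prove this by a direct combination of two earlier results in the paper, with no new machinery required. Specifically, I will exhibit the $S(\delta)$-bin packing problem with a fixed positive constant $\delta \le 1/4$ as the witness NP-hard problem.

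First I would invoke Theorem~\ref{delta-bin-packing-NP-theorem}, which asserts that for every $\delta \le 1/4$ the $S(\delta)$-bin packing problem is NP-hard (via the reduction from 3-partition, where the constraint $a_i \in (B/4, B/2)$ ensures that after scaling by $1/B$ every item has size strictly above $1/4 \ge \delta$). Fixing any such $\delta$, for example $\delta = 1/4$, pins down a concrete NP-hard problem.

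Next I would apply Theorem~\ref{sliding-window-theorem} to that same $\delta$, taking $(c,\eta,k) = (1,1,1)$ so that the $(c,\eta,k)$-related bin packing reduces to the classical bin packing problem. That theorem guarantees that Sliding-Window-Bin-Packing(.) is a single pass randomized sliding window approximation scheme with $O(1)$ updating time and $O(1)$ space per item, producing an $(1+\gamma)$-approximation for any arbitrary constant $\gamma > 0$. Since $\delta$ is fixed and $\gamma$ is an arbitrary constant, all the hidden parameters ($\lambda$, $m$, $u$, $\mu$, etc.) in the algorithm are constants independent of the stream length and the window size, so the $O(1)$ bounds on updating time and space indeed hold in the sliding window model.

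There is essentially no obstacle: the theorem is a conjunction of two already-established facts, and the only thing to verify is that the parameter regimes are compatible, namely that the fixed $\delta \le 1/4$ used to invoke NP-hardness is a legitimate constant input to Sliding-Window-Bin-Packing(.). This compatibility is immediate from the statement of Theorem~\ref{sliding-window-theorem}, which allows $\delta$ to be any positive constant. Hence $S(1/4)$-bin packing is simultaneously NP-hard and admits a constant time, constant space, single pass sliding window approximation scheme, completing the proof.
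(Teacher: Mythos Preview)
Your proposal is correct and follows exactly the same approach as the paper: the paper's proof simply states that the result follows from Theorem~\ref{delta-bin-packing-NP-theorem} and Theorem~\ref{sliding-window-theorem}. Your additional remarks about fixing $\delta=1/4$ and taking $(c,\eta,k)=(1,1,1)$ make the compatibility of parameters explicit, but the underlying argument is identical.
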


\begin{proof}
It follows from Theorem~\ref{delta-bin-packing-NP-theorem} and
Theorem~\ref{sliding-window-theorem}.
\end{proof}

\subsection{Constant Time Approximation Scheme for Random
Sizes}\label{random-input-sec}
 In this section, we identify more
cases of the bin packing problem with constant time approximation.
One interesting case is that all items are random numbers in
$(0,1]$.

\begin{definition} Let $\delta_1,\delta_2$ and $\epsilon_1$ are positive parameters.
For a list $a_1,\ldots , a_n$ of input of bin packing problem, it
has the  {\it $(\delta_1,\delta_2,\epsilon_1)$-property} if the list
 $a_1,\ldots , a_n$ satisfies $$\ceiling{{\delta_2\over c-\delta_2}{|\{i: a_i\le \delta_2\ and \ a_i\in
\{a_1,\ldots , a_n\}\}|}}\le \epsilon_1\eta\delta_1{|\{i: a_i\ge
\delta_1\ and \ a_i\in \{a_1,\ldots , a_n\}\}|}.$$
\end{definition}

\begin{theorem}\label{most-constant-theorem}
Let $\delta_1,\delta_2$ and $\epsilon$ are positive constants with
$\delta_2\ge\delta_1$. Then there is a constant $({1\over
\epsilon\delta_1})^{O({1\over\delta_1})}$ time algorithm such that
if the bin packing problem with $(c,\eta,k)$-related bins and
$(\delta_1,\delta_2,\epsilon/3)$-property, it gives an
$(1+\epsilon)$-approximation.
\end{theorem}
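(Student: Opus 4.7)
The plan is to split the items into large ones in $S_{\ge\delta_1}$ and small ones in $S_{<\delta_1}$, pack the large items with Theorem~\ref{constant-random-approx-theorem}, and charge the small items to a modest number of extra bins whose cost is absorbed via the $(\delta_1,\delta_2,\epsilon/3)$-property.

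First I would observe that the property forces $|S_{\ge\delta_1}|$ to be a constant fraction of $n$. Since $S_{<\delta_1}\subseteq S_{\le\delta_2}$ and
\[
\frac{\delta_2}{c-\delta_2}\,|S_{\le\delta_2}|\ \le\ \ceiling{\frac{\delta_2}{c-\delta_2}\,|S_{\le\delta_2}|}\ \le\ \frac{\epsilon}{3}\,\eta\,\delta_1\,|S_{\ge\delta_1}|,
\]
we get $|S_{<\delta_1}|\le K\,|S_{\ge\delta_1}|$ with $K=\epsilon\eta\delta_1(c-\delta_2)/(3\delta_2)$, hence $|S_{\ge\delta_1}|\ge n/(K+1)$. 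Consequently rejection sampling on the input list yields a uniform random element of $S_{\ge\delta_1}$ with $O(K+1)=O(1)$ queries in expectation, and a multiplicative $(1\pm\theta)$ estimate of $|S_{\ge\delta_1}|$ is obtainable from $O(1)$ uniform input samples via Chernoff bounds for any constant $\theta$.

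Next I would invoke Theorem~\ref{constant-random-approx-theorem} with accuracy $\epsilon/2$ on large items drawn via this rejection sampling, obtaining in $\left({1/(\epsilon\delta_1)}\right)^{O(1/\delta_1)}$ time a packing of $S_{\ge\delta_1}$ into $(c,\eta,k)$-related bins of total cost $app_1\le(1+\epsilon/2)\,Opt_{c,\eta,k}(S_{\ge\delta_1})\le(1+\epsilon/2)\,Opt_{c,\eta,k}(S)$. For the small items, greedy allocation into fresh bins (each filled to within $\delta_2$ of capacity before opening a new bin) uses at most $\ceiling{\delta_2|S_{<\delta_1}|/(c-\delta_2)}\le \ceiling{\delta_2|S_{\le\delta_2}|/(c-\delta_2)}$ bins, which by the property is at most $(\epsilon/3)\eta\delta_1|S_{\ge\delta_1}|$. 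Since each bin has cost at most $1$, the extra cost $w_{small}$ is at most $(\epsilon/3)\eta\delta_1|S_{\ge\delta_1}|$. Using $Opt_{c,\eta,k}(S)\ge \eta\,\delta_1\,|S_{\ge\delta_1}|$ (every large item consumes at least $\delta_1$ space, each bin has size at most $1$ and cost at least $\eta$), I conclude $w_{small}\le(\epsilon/3)\,Opt_{c,\eta,k}(S)$, so the exhibited packing has cost in $[Opt_{c,\eta,k}(S),\,(1+\epsilon/2+\epsilon/3)\,Opt_{c,\eta,k}(S)]\subseteq[Opt_{c,\eta,k}(S),\,(1+\epsilon)\,Opt_{c,\eta,k}(S)]$.

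The main obstacle I foresee is that $|S_{\le\delta_2}|$ (and so the exact number of extra bins) is not accessible in constant time; my resolution is to let the algorithm output, as the number of extra bins, the quantity $(\epsilon/3)\eta\delta_1\,\hat n_L$ computed from a constant-sample multiplicative estimate $\hat n_L$ of $|S_{\ge\delta_1}|$, which by the property dominates the greedy bound. The sampling slack $\theta$ and the sub-budgets $\epsilon/2,\epsilon/3$ must be chosen small enough to fit inside the overall $1+\epsilon$ budget (with an appropriate constant failure probability), but this is routine. The running time is dominated by the call to Theorem~\ref{constant-random-approx-theorem}, giving the claimed $({1/(\epsilon\delta_1)})^{O(1/\delta_1)}$.
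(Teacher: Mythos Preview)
Your proposal is correct and follows essentially the same approach as the paper: split into large items $S_{\ge\delta_1}$ and small items, apply Theorem~\ref{constant-random-approx-theorem} to the large items, and use the $(\delta_1,\delta_2,\epsilon/3)$-property together with the lower bound $Opt_{c,\eta,k}(S)\ge\eta\delta_1|S_{\ge\delta_1}|$ to absorb the small-item cost into the $\epsilon$ budget. The one notable difference is in how the small-item budget is output: the paper simply returns $b_1'+(\epsilon/3)\eta b_1'$, where $b_1'$ is the large-item approximation itself, thereby avoiding any separate estimation of $|S_{\ge\delta_1}|$; your version instead estimates $|S_{\ge\delta_1}|$ by sampling and outputs $(\epsilon/3)\eta\delta_1\hat n_L$, which is also fine but adds a step. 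Your explicit rejection-sampling argument (showing $|S_{\ge\delta_1}|$ is a constant fraction of $n$ so that uniform samples from $S_{\ge\delta_1}$ are obtainable in $O(1)$ expected queries) is a detail the paper glosses over when invoking Theorem~\ref{constant-random-approx-theorem}.
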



\begin{proof}
Let $t_1$ be the cost of an optimal solution to pack those items of
size at least $\delta_1$ and $t_2$ be the cost of an optimal
solution to pack those items of size at most $\delta_2$. Let $t$ be
the cost of an optimal solution to pack all items in the list.
Clearly, we have $t\ge t_1$.

The number of bins is at least $b_1=\delta_1 |\{i: a_i\ge \delta_1\
and \ a_i\in \{a_1,\ldots , a_n\}\}|$ for packing those items of
size at least $\delta_1$. The cost for packing those items of size
at least $\delta_1$ is at least $\eta b_1$ since the least cost is
$\eta$ among all bins. Thus, $\eta b_1\le t_1$. The number of bins
for packing those items of size at most $\delta_2$ is at most
$b_2=\ceiling{{\delta_2\over c-\delta_2}{|\{i: a_i\le \delta_2\ and
\ a_i\in \{a_1,\ldots , a_n\}\}|}}$ since at most one bin wastes
space more than $\delta_2$. The cost for packing those items of size
$\delta_2$ is at most $b_2$ since $1$ is the upper bound of the
largest cost bin.

With $({1\over \epsilon\delta_1})^{O({1\over\delta_1})}$ time, we
derive an $(1+{\epsilon\over 3})$-approximation $b_1'$ for the items
of size at least $\delta_1$ by
Theorem~\ref{constant-random-approx-theorem}. We have $b_1\le b_1'$
since $b_1'$ is an approximation to the optimal solution and $b_1$
is a lower bound of the optimal solution for packing items of size
at least $\delta_1$. The cost for the bins for packing those items
of size at most $\delta_2$ is at most $b_2\le {\epsilon\over 3}\eta
b_1\le {\epsilon\over 3}\eta b_1'$ because of the
$(\delta_1,\delta_2,\epsilon/3)$-property. We output the
approximation with cost $b_1'+{\epsilon\over 3}\eta b_1'$. We have
\begin{eqnarray*}
b_1'+ {\epsilon\over 3}\eta b_1'&\le& (1+{\epsilon\over
3})t_1+{\epsilon\over 3}(1+{\epsilon\over 3})t_1\ \ \ \ \ \mbox{(note\ $\eta\le 1$)}\\
&\le& (1+\epsilon)t_1\\
&\le& (1+\epsilon)t.
\end{eqnarray*}
Therefore, we derive an $(1+\epsilon)$-approximation for packing the
input list with $(c,\eta,k)$-related bins.

\end{proof}


\begin{theorem}\label{constant-time-corollary}  Assume that $c$, $\eta$, and $k$ are constants.
Assume that $a$ and $b$ with $a<b\le c$ are two constants in
$[0,1]$. Let $\epsilon$ be a constant in $(0,1]$. Then there is a
randomized constant $({1\over \epsilon})^{O({1\over (a+\epsilon)})}$
time approximation scheme for the bin packing problem with
$(c,\eta,k)$-related bins
 that each element is a random element
from $[a,b]$.
\end{theorem}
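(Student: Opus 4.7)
\textbf{Proof plan for Theorem~\ref{constant-time-corollary}.}
The plan is to reduce to Theorem~\ref{most-constant-theorem} by choosing thresholds $\delta_1,\delta_2$ so that uniform random inputs on $[a,b]$ satisfy the $(\delta_1,\delta_2,\epsilon/3)$-property with overwhelming probability. First I would pick a small constant $\tau$ of order $\Theta(\epsilon\,\eta\,(b-a)(c-a))$ and set $\delta_1=\delta_2:=a+\tau$. Under the uniform distribution on $[a,b]$, the probability that an item falls in $[a,\delta_2]$ is $\tau/(b-a)=\Theta(\epsilon)$, while the probability that it falls in $[\delta_1,b]$ is $(b-a-\tau)/(b-a)=\Omega(1)$.

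Next I would apply Corollary~\ref{chernoff-lemma-a} (or Theorem~\ref{chernoff3-theorem} and Theorem~\ref{ourchernoff2-theorem}) to the $n$ independent indicator variables. With failure probability $e^{-\Omega(n)}$, we have
\[
|\{i: a_i\le \delta_2\}|\le \frac{2n\tau}{b-a},\qquad
|\{i: a_i\ge \delta_1\}|\ge \frac{n(b-a-\tau)}{2(b-a)}.
\]
Plugging these bounds into the $(\delta_1,\delta_2,\epsilon/3)$-property reduces the required inequality (after absorbing the ceiling via the lower bound on the large-item count, which grows like $\Omega(n)$) to
\[
\frac{2\tau}{c-a-\tau}\;\le\;\frac{\epsilon\,\eta}{6}\,(b-a-\tau),
\]
which holds for the chosen $\tau=\Theta(\epsilon\,\eta\,(b-a)(c-a))$ since all other quantities are fixed constants. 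For the small-$n$ regime where concentration is not useful, $n$ is itself bounded by a constant that depends only on $\epsilon,a,b,c,\eta$, so an exact algorithm runs in constant time.

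I would then invoke Theorem~\ref{most-constant-theorem} to obtain an $(1+\epsilon)$-approximation for the bin packing problem with $(c,\eta,k)$-related bins. Its running time is $\bigl(\tfrac{1}{\epsilon\delta_1}\bigr)^{O(1/\delta_1)}$. With $\delta_1=a+\tau=\Theta(a+\epsilon)$, we have $\tfrac{1}{\epsilon\delta_1}\le\bigl(\tfrac{1}{\epsilon}\bigr)^{O(1)}$, so the time simplifies to $\bigl(\tfrac{1}{\epsilon}\bigr)^{O(1/(a+\epsilon))}$, matching the claimed bound.

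The main technical obstacle is verifying the $(\delta_1,\delta_2,\epsilon/3)$-property crisply in the presence of the ceiling on the left-hand side: for small $n$, the additive $+1$ from the ceiling can dominate and the property may spuriously fail even though the distribution is favorable. I would handle this either by observing that for such constant $n$ a brute-force search for an optimal packing takes constant time, or by subsuming the $+1$ into the lower bound $\Omega(n)$ on the number of large items, which is valid once $n$ exceeds a threshold depending only on $\epsilon,a,b,c,\eta$. The only other subtlety is the boundary case $a=0$, where $\delta_1=\tau=\Theta(\epsilon)$ and the exponent $1/(a+\epsilon)$ becomes $1/\epsilon$, consistent with Theorem~\ref{constant-random-approx-theorem}.
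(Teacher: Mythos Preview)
Your proposal is correct and follows essentially the same approach as the paper: set $\delta_1=\delta_2=a+(\text{small offset})$, use Chernoff bounds to show the $(\delta_1,\delta_2,\epsilon/3)$-property holds with high probability, then invoke Theorem~\ref{most-constant-theorem}. The paper parametrizes the offset as $\epsilon_2(b-a)$ and bounds $\delta_2/(c-\delta_2)\le 4b/(3(b-a))$ before choosing $\epsilon_2$, whereas you work directly with $c-a-\tau$ in the denominator; both routes yield an offset of order $\Theta(\epsilon)$ and hence the same running-time bound. Your explicit treatment of the ceiling and the small-$n$ regime is in fact a bit more careful than the paper's, which simply writes ``Assume that $n$ is large enough such that $({1\over b-a})\epsilon_2 n\ge 1$'' without separately disposing of the complementary case.
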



\begin{proof}
Let $\epsilon_2$ be a constant in $(0,{1\over 4})$ and will be
determined later. Let $\epsilon_1={\epsilon\over 3}$. Let
$\delta_1=\delta_2=a+{\epsilon_2(b-a)}$. We prove that a list with
random elements from $[a,b]$  satisfies
$(\delta_1,\delta_2,\epsilon_1)$-property for all large $n$ with
high probability. Assume that $a_1,\ldots , a_n$ is a list of random
elements in $[a,b]$.

We note that with probability $0$, a random element $a_i$ from
$[a,b]$ is equal to $a$. For each random element $a_i\in [a,b]$,
with probability $p_1=1-\epsilon_2$, we have $a_i\ge \delta_1$.  By
Theorem~\ref{chernoff3-theorem}, with probability at most
$P_1=g_1({1\over 4})^{p_1n}$, $n_1=|\{i: a_i\ge \delta_1\}|$ is less
than $(p_1-{1\over 4})n$ elements. We note $(p_1-{1\over 4})n\ge
{n\over 4}$ since $p_1\ge {1\over 2}$.

For each random element $a_i\in [a,b]$, with probability
$p_2=\epsilon_2$, we have $a_i< \delta_2$. By
Theorem~\ref{ourchernoff2-theorem}, with probability at most
$P_2=g_2(1)^{p_2n}$, we have $n_2=|\{i: a_i< \delta_2\}|$ is more
than $(1+1)p_2n=2\epsilon_2 n$.

Assume that  $n_1\ge {n\over 4}$ and $n_2\le 2\epsilon_2 n$.

Since $\epsilon_2$ is a constant in $(0,{1\over 4})$, we have
$\delta_2\le a+{1\over 4}(b-a)$. Thus,
 we have
\begin{eqnarray*}
{\delta_2\over c-\delta_2}&\le&{\delta_2\over {b-\delta_2}}\\
&\le&{b\over {b-\delta_2}}\\
&\le&{b\over {b-(a+{1\over 4}(b-a))}}\\
&\le& {4b\over 3(b-a)}.
\end{eqnarray*}
Assume that $n$ is large enough such that $({1\over b-a})\epsilon_2
n\ge 1$.
 We have that
\begin{eqnarray*}
 \ceiling{{\delta_2\over c-\delta_2}
n_2}&\le& {\delta_2\over c-\delta_2} n_2+1\\
&\le& {4b\over 3(b-a)} n_2+1\\
&\le& {4b\over 3(b-a)}\cdot 2\epsilon_2n+1\\
&\le& {16b\over 3(b-a)}\epsilon_2n\\
&\le&\epsilon_1\eta\delta_1 {n\over 4}\\
 &\le &\epsilon_1\eta\delta_1
n_1,
\end{eqnarray*}
 where  $\epsilon_2$ is selected to be ${3\epsilon_1\eta\delta_1(b-a)\over 64b}$, which is less than ${1\over 4}$.
Therefore, with probability at most $P_1+P_2$, the
$(\delta_1,\delta_2,\epsilon_1)$-property is not satisfied.
Theorem~\ref{constant-time-corollary} follows from
Theorem~\ref{most-constant-theorem}.
\end{proof}

\begin{theorem}\label{constant-time2-corollary} Assume that $a<b$ are two constants in $[0,1]$. Then there is a
randomized constant $({1\over \epsilon})^{O({1\over a+\epsilon})}$
time approximate scheme for the bin packing problem
 that each element is a random element
from $[a,b]$.
\end{theorem}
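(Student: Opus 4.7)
The plan is to derive Theorem~\ref{constant-time2-corollary} as a direct specialization of Theorem~\ref{constant-time-corollary}. The classical bin packing problem, where every bin has size $1$ and cost $1$, is exactly the bin packing with $(c,\eta,k)$-related bins under the parameter choice $c=\eta=k=1$. Under this choice the hypothesis $a<b\le c$ of Theorem~\ref{constant-time-corollary} reduces to $a<b\le 1$, which is exactly what is assumed in Theorem~\ref{constant-time2-corollary} (both constants lie in $[0,1]$).

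The first step is therefore to instantiate Theorem~\ref{constant-time-corollary} with $c=\eta=k=1$ for the same constants $a,b$ and the same $\epsilon\in(0,1]$. This yields a randomized algorithm that, when the input items are i.i.d.\ random from $[a,b]$, produces an $(1+\epsilon)$-approximation with high probability. The running time given by Theorem~\ref{constant-time-corollary} is $\bigl({1/\epsilon}\bigr)^{O(1/(a+\epsilon))}$, which matches the bound claimed in Theorem~\ref{constant-time2-corollary} up to the constant hidden in the $O(\cdot)$ in the exponent.

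Because all the probabilistic content, namely the argument that a random list from $[a,b]$ satisfies the $(\delta_1,\delta_2,\epsilon/3)$-property with high probability (via Chernoff bounds, as carried out with $\delta_1=\delta_2=a+\epsilon_2(b-a)$ and $\epsilon_2=\Theta(\epsilon\eta\delta_1(b-a)/b)$ in the proof of Theorem~\ref{constant-time-corollary}), is already absorbed into the cited theorem, no new estimate is needed here. The only thing to check is that setting $\eta=1$ and $c=1$ does not degrade $\epsilon_2$ to $0$: with $\eta=1$ we have $\epsilon_2=\Theta(\epsilon\delta_1(b-a)/b)$, which is still a positive constant as long as $b>a$, so $\delta_1$ remains a constant bounded away from $0$ whenever $a>0$, and is $\Theta(\epsilon)$ when $a=0$. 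In both cases the exponent $O(1/\delta_1)$ coming from Lemma~\ref{lp-lemma} is $O(1/(a+\epsilon))$, matching the claimed bound.

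I do not expect any real obstacle: the statement is essentially a corollary obtained by substituting the trivial bin profile $(c,\eta,k)=(1,1,1)$ into Theorem~\ref{constant-time-corollary}. The only thing to verify carefully is the bookkeeping of the exponent when $a=0$, where $\delta_1$ collapses to $\Theta(\epsilon)$ and one must confirm that $1/\delta_1 = O(1/(a+\epsilon))$; this is immediate since $a+\epsilon\ge \epsilon$ in that regime. With that observation the theorem follows.
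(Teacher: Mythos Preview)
Your proposal is correct and matches the paper's approach exactly: the paper's proof is the single line ``It follows from Theorem~\ref{constant-time-corollary},'' i.e., the specialization $(c,\eta,k)=(1,1,1)$ you describe. Your additional bookkeeping about $\delta_1$ when $a=0$ is a reasonable sanity check that the paper leaves implicit.
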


\begin{proof}
It follows from Theorem~\ref{constant-time-corollary}.
\end{proof}

\section{Conclusions}

This paper shows a dense hierarchy of approximation schemes for the
bin packing problem which has a long history of research. Pursing
sublinear time algorithm brings a better understanding about the
technology of randomization, and also gives some new insights about
the problems that may already have linear time solution. Our
sublinear time algorithms are based on an adaptive random sampling
method for the bin packing problem developed in this paper.
The hierarchy approach, which is often used in the complexity
theory, may give a new way for algorithm analysis as it gives more
information than the worst case analysis from the classification.

\section{Acknowledgements}

We would like to thank Xin Han for his helpful suggestions which
improves the presentation of this paper. This research is supported
in part by National Science Foundation Early Career Award 0845376.
An earlier version of this paper is posted at
http://arxiv.org/abs/1007.1260.









\end{document}

\section{ Proofs for Chernoff Bounds}

For completeness, we give the proofs for the versions of  Chernoff
bounds used  in this paper. We follow the proof of
Theorem~\ref{chernoff-theorem} to make the following version of
Chernoff bound so that it can be used in our
algorithm analysis. 


\subsection{Proof for Theorem~\ref{ourchernoff2-theorem}}

\begin{proof}
 Let $t$ be an arbitrary positive real number. By the
definition of expectation, we have
$E(e^{tX_i})=\Pr(X_i=1)e^t+\Pr(X_i=0)$. Since the function
$f(x)=xe^t+(1-x)$ is increasing for all $t>0$ and $\Pr(X_i=1)\le p$,
we have $E(e^{tX_i})\le pe^t+(1-p)$. We have the following
inequalities:
\begin{eqnarray}
\Pr(X>(1+\delta)pn)&<&{E(e^{tX})\over e^{t(1+\delta)pn}}\label{markov-inequ}\\
&\le&{\prod_{i=1}^nE(e^{tX_i})\over e^{t(1+\delta)pn}}\label{independent1}\\
&=&{\prod_{i=1}^n(pe^t+1-p)\over e^{t(1+\delta)pn}}\label{independent2}\\
&=&{\prod_{i=1}^n(1+p(e^t-1))\over e^{t(1+\delta)pn}}\\
&\le&{\prod_{i=1}^ne^{p(e^t-1)}\over e^{t(1+\delta)pn}}\\
&=&{e^{(e^t-1)pn}\over e^{t(1+\delta)pn}}\\
&=&({e^{(e^t-1)}\over e^{t(1+\delta)}})^{pn}\label{final1}.
\end{eqnarray}
The inequality (\ref{markov-inequ}) is based on Markov inequality.
The transition from (\ref{independent1}) to (\ref{independent2}) is
due to the independence of those variables $X_1,\ldots , X_n$.

Since $({e^{(e^t-1)}\over e^{t(1+\delta)}})$ is minimal at
$t=\ln(1+\delta)$, we have
$\Pr(X>(1+\delta)pn)<\left[{e^{\delta}\over
(1+\delta)^{(1+\delta)}}\right]^{pn}$.
\end{proof}

\subsection{Proof for Theorem~\ref{chernoff3-theorem}}


\begin{proof}
$\prob[X<(1-\delta)pn]=\prob[-X>-(1-\delta)pn]=\prob[e^{-yX}>e^{-y(1-\delta)pn}]$
for each real number $y$. Applying Markov inequality, we have
\begin{eqnarray}
\prob[X<(1-\delta)pn]&<&{\prod_{i=1}^nE(e^{-yX_i}]\over
e^{-y(1-\delta)np}}\label{chernoff-1}\\
&<&{e^{(e^{-y}-1)np}\over e^{-y(1-\delta)np}}\label{chernoff-2}\\
&<&\left( {e^{-\delta}\over
(1-\delta)^{1-\delta}}\right)^{pn}\label{chernoff-3}\\
&<&e^{-{1\over 2}^{pn}\delta^2}.\label{chernoff-4}
\end{eqnarray}
The transition from~(\ref{chernoff-2}) to (\ref{chernoff-2}) is to
let $t=\ln {1\over 1-\delta}$. The transition
from~(\ref{chernoff-3}) to (\ref{chernoff-4}) follows from the fact
$(1-\delta)^{1-\delta}>e^{-\delta+\delta^2/2}$.
\end{proof}




\subsection{Proof for Corollary~\ref{chernoff-lemma-a}}




\begin{proof}
For $X=\sum_{i=1}^n$, $\mu=E(X)=\sum_{i=1}^nE(X_i)=pn$. Let
$\delta={\epsilon\over p}$. (1) follows from
Theorem~\ref{chernoff-theorem}.  By Taylor theorem,
$\ln(1+\epsilon)\ge \epsilon-{\epsilon^2\over 2}$. We have that
$(1+{1\over \epsilon})\ln(1+\epsilon)\ge(1+{1\over
\epsilon})(\epsilon-{\epsilon^2\over 2})=1+{\epsilon\over
2}-{\epsilon^2\over 2}>1+{\epsilon\over 3}$. Thus, $ {e\over
(1+\epsilon)^{(1+{1\over \epsilon})}}<e^{-{\epsilon\over 3}}$. Since
$pn+\epsilon n=(1+\delta)\mu$ and the function $(1+y)^{1\over y}$ is
increasing for $y>0$, $Pr(X>pn+\epsilon
n)=Pr(X>(1+\delta)\mu)<\left[ {e^{\epsilon\over p}\over
(1+{\epsilon\over p})^{(1+{\epsilon\over p})}}\right]^{pn}=\left[
{e\over (1+{\epsilon\over p})^{(1+{p\over
\epsilon})}}\right]^{\epsilon n}\le \left[ {e\over
(1+\epsilon)^{(1+{1\over \epsilon})}}\right]^{\epsilon n}\le
e^{-{\epsilon^2n\over 3}}.$ Thus (ii) is proved.
\end{proof}

\end{document}

\begin{definition}
For a list of number $L=a_1,\ldots ,a_n$, define
$\aleph_{\delta}(L)={|\{i: a_i\ge \delta\ and\ a_i\in\{a_1,\ldots ,
a_n\}|\over n}$.
\end{definition}

\begin{lemma}\label{select2-lemma} Let $\mu$ and $\alpha$ be positive
constants. Assume that $L$ is an input list of $n$ numbers with
$\aleph_{\delta}(L)\ge\beta$. Then there is $O({???m(\log m)^2)\over
(\beta\mu)???})$ time algorithm such that it selects ${c_1m\log
m???\over (\beta\mu)???}$ random elements from input list to
generate elements $y_1\le \ldots \le y_m$ from the input list such
that $\prob[\Rank_{\delta}(y_i)\cap [ih-\mu h,ih+\mu h]=\emptyset$
for at least one $i\in \{1,\ldots , m\}]\le \alpha$, where $c_1$ is
a constant.
\end{lemma}

\begin{proof} We describe the algorithm below. Its probabilistic
performance will be analyzed with Chernoff bounds.

\vskip 10pt

{\bf Algorithm}

\qquad Input: a list $a_1,\ldots , a_n$ of elements.

\qquad Select $\gamma={\mu\beta\over 8m}???$.

\qquad Select constant $c_0$ and $u=c_0{1\over \gamma^2}\log m???$ such
that $2mg(\gamma)^{u???}<\alpha$.

\qquad Select $X=\{x_1,\ldots , x_u\}$ to be a set of $u$ random
elements from the input.

\qquad Let $p_i={ih\beta\over n}$.

\qquad Let $y_i$ be the least element $x_j$ such that the set
$\{x_t: x_t\le x_j\}$ has at least $\ceiling{{p_i}u}$ elements.

{\bf End of Algorithm} \vskip 10pt

We note that ${n\over h}\le m+1\le 2m$. Therefore,
\begin{eqnarray}
{\mu h\beta\over
 n}\ge {\mu\beta\over 2m}\ge 4\gamma.\label{mu2-eqn}
\end{eqnarray}

Assume $\maxRank_{\delta}(y_i)<ih-\mu h$. Let $p_i'=p_i-{\mu h\beta\over n}$. By
Lemma~\ref{chernoff-lemma-a}???, with probability at most
$g_2(\gamma)^{u???}$, we have $ |\{j: x_j\in X\ \mbox{ and }\
x_j< y_i\}|$ to be at least
\begin{eqnarray*}
(1+\gamma)p_i' u&=&(1+\gamma)(p_i-({\mu h\beta\over n}))u\\
&=& p_i u+\gamma p_i u-(1+\gamma)({\mu h\beta\over
n})u\\
&=& p_i u+(\gamma p_i -(1+\gamma)({\mu h\beta\over
n}))u\\
&\le& p_i u+(\gamma -({\mu h\beta\over
n}))u\\
&\le& p_i u-\gamma u \ \ (\mbox{by \ inequality}~(\ref{mu2-eqn}))\\
&<& \ceiling{p_i u}
\end{eqnarray*}

Assume $\minRank_{\delta}(y_i)>ih+\mu h$. Let $p_i''=p_i+{\mu h\beta\over n}$. By
Lemma~\ref{chernoff-lemma-a}???, with probability at most
$P_{1,i}=g_1(\gamma)^{u???}$, we have $ |\{j: x_j\in X\ \mbox{
and }\ x_j\le y_i\}|$ to be at most

\begin{eqnarray*}
(1-\gamma)p_i' u&=&(1-\gamma)(p_i+({\mu h\beta\over n}))u\\
&=& p_i u-\gamma p_i u+(1-\gamma)({\mu h\beta\over
n})u\\
&=& p_i u+(-\gamma p_i +(1-\gamma)({\mu h\beta\over
n}))u\\
&\ge& p_i u+(-\gamma +{1\over 2}({\mu h\beta\over
n}))u\\
&\ge& p_i u+\gamma u \ \ (\mbox{by \ inequality}~(\ref{mu2-eqn}))\\
&>& \ceiling{p_i u}
\end{eqnarray*}

Therefore, with probability at most
$\sum_{i=1}^m(P_{i,1}+P_{i,2})\le 2mg(\gamma)^{u???}<\alpha$,
$\Rank_{\delta}(y_i)\cap [ih-\mu h,ih+\mu h]=\emptyset$ for at least
one $i\in \{1,\ldots , m\}$.

\end{proof}

\subsection{Cases with Union of Multiple Intervals}

\begin{definition}
Let $I_1=[a_1,b_1],\ldots , I_k=[a_k,b_k]$ be $k$ sub-intervals of
$[0,1]$. An {\it evenly random element} $x$ from $I_1\cup I_2\ldots
\cup I_k$ is generated via randomly selecting an number $j\in
\{1,2,\ldots , k\}$ and then selecting a random element $x$ in
$[a_j,b_j]$.
\end{definition}

Among the definition of evenly random element, two intervals may
have overlap.  An interval may appear in the list
$I_1=[a_1,b_1],\ldots , I_k=[a_k,b_k]$ multiple times.

\begin{theorem}Assume that $k$ is an integer.
Let $I_1=[a_1,b_1],\ldots , I_k=[a_k,b_k]$ be $k$ sub-intervals of
$[0,1]$. Then there is a $???$ time approximate scheme for the bin
packing problem that each element is an evenly random element from
$I_1\cup I_2\cup \ldots \cup I_k$.
\end{theorem}

\begin{proof}

\end{proof}

\begin{corollary} There is a constant $???$ time approximate scheme for the bin
packing problem that each element is a random element from $[0,1]$.
\end{corollary}

\subsection{Cases for Sublinear Time Approximation}

In this section, we show that bin packing problem has a sublinear
time approximation scheme in many cases.

\begin{theorem} Assume $\delta$ is a constant in $(0,1]$.
There is a sublinear time $O(n^{1-2\gamma???}+({1\over
\epsilon})^{O({1\over\epsilon})})$ approximation for bin
packing such that given an input list $L$ with $\aleph_{\delta}(L)\ge {1\over n^{0.5-\gamma}}$ for some constant $\gamma>0$,
it gives an $(1+\epsilon)$-approximation.
\end{theorem}

\begin{proof} Partition $[{\delta\over n^2}, \delta)$ into $O(\log
n)$ intervals $[a_0, a_1],\ldots , [a_{m-1},a_m]$ such that
$a_{i+1}=a_i (1+\gamma)$, where $\gamma=???\epsilon$. Use the
sampling method to approximate the number of items in each interval.
Apply Lemma~\ref{select-lemma} and Lemma~\ref{core-lemma}.
\end{proof}

\subsection{Lemma}

\begin{lemma}\label{select-lemma} Let $\mu$ and $\alpha$ be positive
constants. Assume that $L$ is an input list of $n$ numbers with
$\aleph_{\delta}(L)\ge\beta$. Then there is $O({m^2(\log m)^2)\over
(\beta\mu)^2})$ time algorithm such that it selects ${c_1m^2\log
m\over (\beta\mu)^2}$ random elements from input list to generate
elements $y_1\le \ldots \le y_m$ from the input list such that
$\prob[\Rank_{\delta}(y_i)\cap [ih-\mu h,ih+\mu h]=\emptyset$ for at
least one $i\in \{1,\ldots , m\}]\le \alpha$, where $c_1$ is a
constant.
\end{lemma}

\begin{proof} We describe the algorithm below. Its probabilistic
performance will be analyzed with Chernoff bounds.

\vskip 10pt

{\bf Algorithm Select-Crucial-Items}

\qquad Input: a list $a_1,\ldots , a_n$ of elements.

\qquad Select $\gamma={\mu\beta\over 4m}$.

\qquad Select constant $c_0$ and $u=c_0{1\over \gamma^2}\log m$ such
that $2me^{-{\gamma^2 u\over 3}}<\alpha$.

\qquad Select $X=\{x_1,\ldots , x_u\}$ to be a set of $u$ random
elements from the input.

\qquad Let $p_i={ih\beta\over n}$.

\qquad Let $y_i$ be the least element $x_j$ such that the set
$\{x_t: x_t\le x_j\}$ has at least $\ceiling{{p_i}u}$ elements.

{\bf End of Algorithm} \vskip 10pt

 According to the algorithm $u=c_0{1\over \gamma^2}\log
m={16c_0m^2\log m\over (\beta\mu)^2}$, we have $c_1=16c_0$.
 We note that ${n\over h}\le m+1\le 2m$. Therefore,
\begin{eqnarray}
{\mu h\beta\over
 n}\ge {\mu\beta\over 2m}\ge 2\gamma.\label{mu-eqn}
\end{eqnarray}

Assume $\maxRank_{\delta}(y_i)<ih-\mu h$. Let $p_i'=p_i-{\mu
h\beta\over n}$. By Lemma~\ref{chernoff-lemma-a}, with probability
at most $e^{-{\gamma^2 u\over 3}}$, we have $ |\{j: x_j\in X\ \mbox{
and }\ x_j< y_i\}|$ to be at least
\begin{eqnarray*}
p_i' u+\gamma u&=&(p_i-({\mu h\beta\over n}))u+\gamma u\\
&=& p_i u-({\mu h\beta\over
n}-\gamma )u\\
&\le& p_i u-\gamma u \ \ (\mbox{by \ inequality}~(\ref{mu-eqn}))\\
&<& \ceiling{p_i u}
\end{eqnarray*}

Assume $\minRank_{\delta}(y_i)>ih+\mu h$. Let $p_i''=p_i+{\mu
h\beta\over n}$. By Lemma~\ref{chernoff-lemma-a}, with probability
at most $P_{1,i}=e^{-{\gamma^2 u\over 3}}$, we have $ |\{j: x_j\in
X\ \mbox{ and }\ x_j\le y_i\}|$ to be at most

\begin{eqnarray*}
 p_i'' u-\gamma u&=&(p_i+({\mu h\beta\over n}))u-\gamma u\\
&=& p_i u+({\mu h\beta\over n}-\gamma )u\\
&\ge& p_i u+\gamma u \ \ (\mbox{by \ inequality}~(\ref{mu-eqn}))\\
&>&\ceiling{p_i u}.
\end{eqnarray*}

Therefore, with probability at most
$\sum_{i=1}^m(P_{i,1}+P_{i,2})\le 2me^{-{\gamma^2 u\over
3}}<\alpha$, $\Rank_{\delta}(y_i)\cap [ih-\mu h,ih+\mu h]=\emptyset$
for at least one $i\in \{1,\ldots , m\}$.

\end{proof}

\section{Further Generalization}

The bin packing problem can be generalized to the variable size bins. Each bin has
a cost $c_i$ and size $s_i$. We allow constant number of sizes and constant number of
costs. All  the results in this paper can be moved to this model.

The sublinear time algorithm can be unified with constant time algorithm. We first test the probability of items of size at
least $\delta$, then decide what time complexity is applied to it. Therefore, the constant time algorithm is just
a special case of this generalized theorem.


\subsection{Lower Bound for $S(\delta)$-bin packing}
In this section, we give a lower about the algorithm complexity. The
lower bound shows how the computational time depends on $\delta$.

\begin{theorem}\label{random-lower-bound-theorem}
Every randomized approximation scheme requires $\Omega({1\over
\delta})$ random samples from the input list for the $S(\delta)$-bin
packing problem.
\end{theorem}

\begin{proof}
The input only has two types of items. The first type is always
$\delta$ and the second type is $1$. We control the number of $1$
items in the input list. If the number of items is small enough, the
small number of random sample cannot find any $1$ item. Therefore,
the packing result should be the same as that of all $\delta$ items.

Let $\beta={2\delta}$. We consider two inputs of the items. The
first input only contains $\delta$-items. The second input list
contains $(1-\beta)n$ $\delta$-items and $\beta n$ $1$-items. Assume
that an algorithm makes $m$ random queries to the input list. Since
each bin has at most $\delta$ space wasted, the number of bins for
the first input list is at most $B_1=\ceiling{\delta n\over
1-\delta}$. The number of bins for second input list is at least
$B_2\beta n$. If $\delta$ is small enough, we have $B_2>1.5 B_1$.

With probability at most $m\beta$, at least $1$-item is sampled by
the algorithm. Therefore, it has at most $m\beta$ chance, the output
has difference with the case that all input items are
$\delta$-items. If $m=o({1\over \delta})$ and , the chance is small
that two input lists have different output results.

\end{proof}

\begin{lemma}\label{convert-approximation3-lemma}
Assume that the input list is $S$  for bin packing problem.  Let
$\theta$ be a constant in $[0,1)$ with $\theta\le {1\over 2}$ and
$\delta$ be a constant in $(0,1)$. Assume that we have the following
inputs available:
\begin{itemize}
\item
$x$ with $x\le \xi \sum_{i=1}^n a_i$ and $x\ge |S_{\ge\delta}|$ for
some small $\xi\in (0,1)$.
\item
Let $s_1$ satisfy $(1-\theta)\sum_{a_i\in S_{<\delta}}\le s_1\le
(1+\theta)(\sum_{a_i\in S_{<\delta}})+{\delta\over |S|}$ and
$s_1+\sum_{a_i\ge \delta}a_i\ge (1-\theta)\sum_{i=1}^n a_i$.
\end{itemize}
Then   there exists an $O(1)$ time algorithm that it gives an
approximation $app$ for packing $S$ with $app\le ({1+\theta\over
(1-\theta)(1-\delta)}+\xi)Opt(S)+2$. Furthermore, if
\begin{eqnarray}
 2\theta&\le& \delta,\label{alpha-delta-ineqn2}\\
 \delta&\le& {1\over 4}, \ \ \mbox{and}\label{alpha-delta-ineqn3}\\
4&<&\delta \sum_{i=1}^n a_i,\label{alpha-delta-ineqn4}
\end{eqnarray}
 then it gives an
$(1+5\delta+2\xi)$-approximation.
\end{lemma}

\begin{proof}   The bin
packing problem is the same as the regular bin packing problem that
all bins are of the same size $1$. The problem is to minimize the
total number bins to pack all items.

\vskip 10pt

{\bf Algorithm Packing-With-Few-Large-Items$(\theta, x, s_1)$}

Input: a small parameter $\theta\in [0,1)$, an integer $x$ with
$x\le \xi \sum_{i=1}^n a_i$ and $x\ge |S_{\ge\delta}|$, and a real
$s_1$ with $(1-\theta)\sum_{a_i\in S_{<\delta}}\le s_1\le
(1+\theta)\sum_{a_i\in S_{<\delta}}$ ($s_1$ is an approximate sum of
sizes of small items of size at most $\delta$).

Output: an approximation for $Opt(S)$.

Steps:

\begin{enumerate}[1.]
\item
\qquad Find the least number $k$ such that $k(1-\delta)\ge {s_1\over
1-\theta}$

\qquad\qquad (the $k$ bins are for packing items of size less than
$\delta$).
\item
\qquad Find the least number $k'$ such that $k'(1-\delta)\ge {\theta
(k+x+1)\over (1-\theta)}$

\qquad\qquad (the $k'$ bins are for packing items of size less than
$\delta$, but ignored in $s_1$).

\item
\qquad Output $k+x+1+k'$ for packing $S$ ($x$ bins are for packing
items of size $\ge \delta$)
\end{enumerate}
{\bf End of Algorithm}

\vskip 10pt

Let $s_0=\sum_{i=1}^n a_i$ to be the sum of sizes of input items. We
have

\begin{eqnarray}
k+x&\le&{s_1'\over 1-\delta}+1+x\\
&\le& {s_1\over (1-\theta)(1-\delta)}+1+x\\
&\le& {s_0(1+\theta)\over (1-\theta)(1-\delta)}+\xi s_0+1\\
&\le& ({1+\theta\over (1-\theta)(1-\delta)}+\xi) s_0+1.
\end{eqnarray}

We give the upper bound for the output below:

\begin{eqnarray}
k+x+1+k'&\le& k+x+1+{\theta (k+x+1)\over (1-\theta)(1-\delta)}+1\\
&\le& (1+{\theta \over (1-\theta)(1-\delta)})(k+x+1)+1\\
&\le& (1+{\theta \over (1-\theta)(1-\delta)})(({1+\theta\over (1-\theta)(1-\delta)}+\xi) s_0+1+1)+1\\
&\le& (1+{\theta \over (1-\theta)(1-\delta)})(({1+\theta\over (1-\theta)(1-\delta)}+\xi) s_0+2)+1\\
&\le& (1+{\theta \over (1-\theta)(1-\delta)})({1+\theta\over (1-\theta)(1-\delta)}+\xi) s_0\\
&&+3+{2\theta \over (1-\theta)(1-\delta)}\\
&\le& (1+{\theta \over (1-\theta)(1-\delta)})({1+\theta\over
(1-\theta)(1-\delta)}+\xi) s_0+4\\
&\le& ({1+\theta\over (1-\theta)(1-\delta)}+{\theta(1+\theta) \over
(1-\theta)^2(1-\delta)^2}+\xi(1+{\theta \over
(1-\theta)(1-\delta)})) s_0+4\\
&\le& ({1+\theta\over (1-\theta)(1-\delta)}+2\theta+2\xi) s_0+4\\
&\le& (1+({1+\theta-(1-\theta)(1-\delta)\over (1-\theta)(1-\delta)}+2\theta+2\xi) s_0+4\\
&\le& (1+({\theta+\delta-\theta\delta\over (1-\theta)(1-\delta)}+2\theta+2\xi) s_0+4\\
&\le& (1+({\theta+\delta\over (1-\theta)(1-\delta)}+2\theta+2\xi) s_0+4\\
&\le& (1+2(\theta+\delta)+2\theta+2\xi) s_0+4\\
&\le& (1+4\delta+2\xi) s_0+4\\
&\le& (1+5\delta+2\xi) s_0\\
\end{eqnarray}

Therefore,  we have $k+x+1+k'\le (1+5\delta+2\xi)Opt(S)$.

\end{proof}